\DeclareMathAlphabet{\mathpzc}{OT1}{pzc}{m}{it}
\DeclarePairedDelimiter\floor{\lfloor}{\rfloor}
\newcommand{\eps}{\varepsilon}
\newcommand{\E}{\mathbb{E}}
\newcommand{\R}{\mathbb{R}}
\newcommand{\DD}{\mathcal{D}}
\newcommand{\DR}{\mathcal{R}}
\newcommand{\DS}{\mathcal{S}}
\newcommand{\XX}{\mathcal{X}}
\newcommand{\EE}{\mathcal{E}}
\newcommand{\M}{\mathpzc{M}}
\newcommand{\AC}{\mathsf{AC}}
\newcommand{\REC}{\mathrm{Rec}}
\newcommand{\mathbbm}[1]{\text{\usefont{U}{bbm}{m}{n}#1}}
\newcommand\inv[1]{#1\raisebox{1.15ex}{$\scriptscriptstyle-\!1$}}
\newcommand\footnoteref[1]{\protected@xdef\@thefnmark{\ref{#1}}\@footnotemark}
\newtheorem{definition}{Definition}
\newtheorem{theorem}{Theorem}
\newtheorem{proposition}{Proposition}
\newtheorem{lemma}{Lemma}
\newtheorem{remark}{Remark}
\newcommand{\defeq}{\vcentcolon=}
\title{Differentially Private Release of\\Israel's National Registry of Live Births}
\author{Shlomi Hod\footnotemark[1] \and Ran Canetti\thanks{Department of Computer Science, Boston University. \,\texttt {\{shlomi,canetti\}@bu.edu}\\Supported by DARPA under Agreement No. HR00112020021.}}
\begin{document}

\maketitle

\begin{abstract}
In February 2024, Israel's Ministry of Health released microdata of live births in Israel in 2014.
The dataset is based on Israel's National Registry of Live Births and offers substantial value in multiple areas, such as scientific research and policy-making, while providing pure differential privacy guarantee with $\eps = 9.98$ for 2014's mothers and newborns. 
The release was co-designed by the authors along with stakeholders from both inside and outside the Ministry of Health.
This paper presents the methodology used to obtain that release, which, to the best of our knowledge, is the first of its kind in the world.

The design process has been  challenging and required flexibility and open-mindedness on all sides involved, along with substantial technical innovation. In particular, we introduce new concepts regarding the desiderata from dataset releases in a microdata format, as well as a way to bundle together multiple quantitative desiderata for a differentially private release using the private selection algorithm of Liu and Talwar (STOC 2019). We hope that the experiences reported here will be useful to future differentially private releases.
\end{abstract}

\newpage

\tableofcontents

\newpage

\section{Introduction}

The Israeli National Registry of Live Births is the official source of all live births in Israel. According to officials in the Ministry of Health, it holds data that is extremely valuable for research and policy-making in many areas, including demography, economy and health. Making the Registry accessible to the public has a significant value in public health and policy development.

In February 2024 the Israeli Ministry of Health released selected data fields (columns) from the Registry of singleton births (births with only a single newborn) from 2014 \cite{moh2024release}. The data was processed so as to protect the privacy of mothers and newborns. The Ministry of Health is treating this release as a pilot, and anticipates releasing additional fields and data from more recent years based on feedback from the project's stakeholders. More broadly, the project aims to demonstrate the effectiveness of Differential Privacy (DP) \cite{Dwork2006CalibratingNT} for publishing government data \cite{Drechsler2021ForGovernment}, as opposed to traditional, deterministic anonimization methods (as initially planned prior to our engagement with the ministry).

The Registry comprises 167K records of all singleton live births in Israel during 2014 (The total number of births is public information, provided by the Israeli Central Bureau of Statistics \cite{cbs2014newborns}.) Six data fields per live birth were chosen to be included in the release because of their potential values to various data users (see Table~\ref{tab:dataset}).
The release was co-designed by the authors together with several stakeholders, including TIMNA (Israel’s National Health Research Platform), which managed the project. This paper documents the process leading to this release, presenting the requirements elicited from the stakeholders, some of which, to the best of our knowledge, are described here for the first time in the literature. We provide technical definitions of these requirements along with methods to satisfy them. Additionally, we discuss the considerations behind our design choices and our scheme to overcome core challenges in producing a privacy-protected dataset release.

The Registry contains national-level sensitive information about mothers and newborns. The release of such sensitive data is regulated by Article 7 of the of the Basic
Law: Human Dignity and Liberty (1992) \cite{IsraelHumanDignityandLiberty1992}, 
Israel's Patient's Rights Law (1996) \cite{IsraelPatientRightsLaw1996}, and Protection of Privacy Law (1981) \cite{IsraelPrivacyLaw1981}. Indeed, releasing national birth data without taking appropriate privacy-protection measures could lead to serious privacy harms as discussed, e.g., in  \cite{solove2008understanding,Citron2021PrivacyH,hartzog2021privacy}.

Before going public, the birth dataset release, alongside its documentation, went through a scrutinizing process of reviews and approvals by multiple entities within the Ministry including TIMNA (Israel’s National Health Research Platform) the National Birth Registry steward, the Chief Data Officer and the Ministry of Health’s legal department. This process also includes a broader analysis of the release, weighing the privacy risks under the protection measures taken, against  the benefits to society, such as improving public health by enabling scientific research and supporting evidence-based policy-making.

\begin{table}[H]
\centering
\caption{{\bfseries The metadata of the released dataset of singleton live births in 2014\\$(n = \num[group-separator={,}]{165915})$.}}
\begin{adjustbox}{width=\textwidth,center}
\begin{tabular}{@{}lll@{}} \toprule
Column    & Description & Possible Values 
\\
\midrule
\texttt{birth\_month} & Month of birth in 2014   & \makecell{1, 2, 3, 4, 5, 6, 7, 8, 9,\\10, 11, 12}
\\ \addlinespace
\texttt{mother\_age} & Age of mother in full years at birth  & \makecell{<18, 18-19, 20-24, 25-29, 30-34,\\35-36, 37-39, 40-42, 43-44, 44<}
\\ \addlinespace
\texttt{parity} & Mother's number of live births so far & \makecell{1, 2-3, 4-6,\\7-10, 10<}
\\ \addlinespace
\texttt{gestation\_week} & The week of pregnancy when the birth took place & \makecell{<29, 29-31, 32-33, 34-36,\\37-41, 41<}
\\ \addlinespace
\texttt{birth\_sex} & Sex assigned at birth  & \makecell{M (male),\\F (female)}
\\ \addlinespace
\texttt{birth\_weight} & Newborn weight at birth in grams  & \makecell{<1500, 1500-1599, \ldots,\\4400-4499, 4499<} \\
\bottomrule
\end{tabular}
\end{adjustbox}
\label{tab:dataset}
\end{table}

\subsection{Requirements and Solution Concepts}

In this section, we present the requirements, some of which evolved during the project, set by the stakeholders listed in the following box, along with our solution concept for fulfilling them.

\begin{tcolorbox}[box, title=Stakeholders]
    \begin{enumerate}[leftmargin=*,itemsep=4pt]
        \item TIMNA, Israel's National Health Research Platform, Ministry of Health: \textbf{the primary stakeholders}
            \begin{enumerate}[leftmargin=*,itemsep=2pt]
                \item Epidemiology team lead
                \item De-identification team lead 
            \end{enumerate}
        \item Other organizations in the Ministry of Health
            \begin{enumerate}[leftmargin=*,itemsep=2pt]
                \item Chief Data Officer
                \item Head of data stewardship department
                \item Birth Registry steward
            \end{enumerate}
        \item Universities and Research Institutes
            \begin{enumerate}[leftmargin=*,itemsep=2pt]
                \item Biostatistician researcher with expertise in birth data
                \item Pediatrician and medical researcher
            \end{enumerate}
    \end{enumerate}
\end{tcolorbox}

\subsubsection{Privacy}
\label{sec:intro-privcay}

The privacy of individuals within this release is protected by differential privacy guarantees \cite{Dwork2006CalibratingNT}.
Differential privacy is a mathematical framework that limits the contribution of individual records to an analysis or computation, thus restricting what can be learned about those specific records.
In a nutshell, an algorithm that outputs the result of some statistical computation, applied to a given dataset, provides differential privacy if the influence of any single record from the dataset on the algorithm's output is controlled. Differential privacy quantifies this influence with a parameter often called \emph{privacy loss budget} and denoted $\eps$. 
While 
differential privacy has been in rapidly growing use, including in data releases by the US government \cite{Abowd2018TheUC,miklau2022negotiating,burman2019safely}, big tech companies \cite{Wilson2019DifferentiallyPS,fitzpatrick2020} and other organizations \cite{Adeleye2023Wikimedia,Migration2022Global}, we are not aware  of a use of differential privacy for any institutional release of public health data.
The privacy loss budget used in this release is set to $\eps = 9.98$. Although this is high value from a theoretical perspective, it falls within the mid-range of past releases \cite{desfontain2021ListRealworld}, and evidence from the literature suggests that it provides protection against practical adversaries \cite{Stadler2020SyntheticD}. In Section~\ref{sec:components-differential-privacy}, we discuss how the budget was set and explore other design choices related to differential privacy.

\subsubsection{Format}
\label{sec:intro-format}

The most common approach to apply differential privacy in practice is to enumerate a concrete list of queries of interest to the data users (e.g., specific histograms and contingency tables), compute them with differential privacy and publish the results. For this strategy we have the most developed theory and tools to carefully balance between privacy protection and accuracy.

When we proposed this solution concept to the primary stakeholders, they have rejected it strongly. The stakeholders envisioned a release in the format of microdata (tabular data), where each record represents an individual birth.

In light of this requirement and the choice of differential privacy as privacy protection measure, we proposed to produce the release with \emph{synthetic data}. 
We stress that our proposal of synthetic data is not as privacy protection measure as suggested by others \cite{rubin1993statistical,Nowok2016synthpopBC,Drechsler2023ThirtyYears,Garfinkel2023},
but rather as a design choice resulting from the stakeholders' requirement to have a microdata format \cite{Stadler2020SyntheticD}. 
In past real-world deployments, differentially private synthetic data is a rare solution concept \cite{desfontain2021ListRealworld}. Synthetic data is typically generated by sampling records from a statistical generative model trained on the original dataset \cite{Jordon2022SyntheticD}. 

\subsubsection{Quality}
\label{sec:intro-quality}

One core challenge in using differentially private synthetic data is the lack of tools in the literature to establish accuracy and quality guarantees for a diverse collections of queries: some of the synthetic data mechanisms includes theoretical quality guarantees for a collection of statistical queries (such as MWEM for contingency tables \cite{Hardt2010ASA}) and some are not (such as PATE-GAN \cite{Jordon2019PATEGAN} and PrivBayes \cite{Zhang2014PrivBayesPD}). However, these theoretical quality guarantees might be insufficient because they are either too loose or not aligned with the intended use of the synthetic data \cite{arnold2020}.

Together with the stakeholders we elicited diverse families of queries that the released dataset is primarily designed to answer: contingency tables
($k$-way marginals), means with group-by aggregations (conditional means), median with group-by aggregations (conditional medians) and linear regressions. According to our stakeholders, contingency tables and central tendency queries (means, medians) carry a lot of value.

To guarantee the quality of the released dataset with respect to these queries, we set a collection of \emph{acceptance criteria} that are assessed empirically on the released dataset.
Passing an acceptance criterion means that the a statistical query executed on the released dataset is sufficiently close, as defined by the stakeholders, to the same query applied on the original dataset.
The acceptance criteria, as used in our overall solution, are also a response to the gap in the literature mentioned above and provide a tunable way to ensure accuracy and quality guarantees.

\subsubsection{Configurations}
\label{sec:intro-configurations}

One of the first steps in creating a synthetic dataset is selecting the \emph{family} of generative mechanisms to utilize. Typically, mechanisms ensuring differential privacy involve running a learning algorithm to produce a generative model that can subsequently be sampled in order to generate synthetic data. Options include marginals-based models, GANs, and Bayesian networks. 
In addition, these learning algorithms and the resulting generative models tend to have \emph{hyperparameters}, namely parameters set by the user to control the learning process and the model structure. For example, consider the learning rate for GANs or maximal node degree in a Bayesian network. Depending on the data, some hyperparameters might work well and others might lead to models that produce low-quality synthetic data.

Additionally, we allowed the stakeholders to specify multiple alternatives for representing data fields. In other words, the original dataset can undergo various transformations to adjust the granularity of each data field through binning, as predefined by the stakeholders. 

We define a \emph{configuration} as the complete specification required for generating the released dataset, including the model family, hyperparameters and data transformations. We span a Cartesian product of all possible combinations. In Section~\ref{sec:intro-scheme}, we present our approach to find sufficiently-good configuration to produce the released dataset.

\subsubsection{Faithfulness}
\label{sec:intro-faithfulness}

Initially, the primary stakeholders rejected the synthetic data solution, even though it would accompanied with quality guarantees via the acceptance criteria, because synthetic data has a negative reputation within the medical research community. The stakeholders were not familiar with successful usage of synthetic data for statistical analysis. In addition, according to the primary stakeholders, synthetic data might raise a transparency concern when it is officially released by the government. Some might consider such data as ``fake'', and treat it as an attempt of the government to ``hide'' information. At that point of the project, it seemed that there is no way forward to satisfy these requirement of differential private release in a microdata format \emph{without} using synthetic data.

Through a deliberated dialogue with the stakeholders, we were able to pin down the reason behind these statements: \emph{the lack of record-level alignment between the original and synthetic datasets}. To address this concern, we defined a new technical notion called (record-level) \emph{faithfulness}. Faithfulness assesses whether the released dataset resemble the original dataset in a record-level granularity.

The stakeholders have set the minimal accepted level of faithfulness, and this requirement was included as an additional acceptance criterion.

The introduction of the faithfulness criterion effectively alleviated discomfort about using synthetic data. We observed this on multiple occasions with stakeholders from diverse backgrounds and expertise. We hypothesize that while faithfulness may not provide an additional notion of accuracy beyond the quality acceptance criteria, it enhances the perceived trustworthiness of the release.

\subsubsection{Face Privacy}
\label{sec:intro-face-privacy}

In a later stage of the project the head of data stewardship department and the Birth Registry steward introduced an additional requirement à la $k$-anonymity \cite{Sweene02}.
For each column $c$ and each set $B$ of $b$ other columns, treat $c$ as a sensitive column, treat the columns in $B$ as
quasi-identifier columns and require that this configuration satisfies $k$-anonymity. In other words, for all possible choices of values for the columns in $B$, each value configuration appears in at least $k$ rows. The requirement was made with $k=5$ and $b=5$, namely one less than the overall number of columns. This is a strong requirement which, to the best of our knowledge, was not presented before in the literature even with respect to non-synthetic data.

A synthetic dataset sampled from a differential private generative model enjoys formal privacy guarantees. Nonetheless, stakeholders, especially data subjects, might have a different expectation of how privacy-protected data might look like. We call such expectations \emph{face privacy}, regardless of whether these expectations contribute to mitigating privacy risks such as reconstruction and attribute inference attacks. The term is inspired by the term ``face validity'' from Psychometrics. Face validity is the appropriateness or sensibility of a test as they appear to the test-taker. It is subjective and distinct from the technical notion of validity---whether a test measures what it is supposed to measure \cite{Irving2010TheCE}.

Following a discussion with the stakeholders, the requirement converged into this face privacy statement: \emph{private governmental data released to the public cannot contain unique records under any circumstances, even if it is synthetic data generated with differential privacy}. 

To achieve the ``no unique rows'' expectation in this release, we apply a specific \emph{dataset projection} on the synthetic dataset.
The output of this projection also satisfies complete $k$-anonymity.
The projection only alters the synthetic dataset without sampling additional rows from the generative model. From a differential privacy prospective, the dataset projection is a post-processing step, so it does not consume any privacy loss budget.
The output dataset of the projection is evaluated against the acceptance criteria.

\subsubsection{Transparency}
\label{sec:intro-transparency}

Since the released data is presented in a tabular format, it has the same affordance\footnote{Following \textcite{Norman2013TheDO}, we use the term \emph{affordance} to represent the set of possibilities for using the data, as perceived by the data user .} as the original data. This may mislead the data users and give them the impression that they can perform any type of analysis on it.
However, it is important to note that the released dataset has been specifically designed to address predefined, albeit broad, statistical queries outlined in the acceptance criteria. We cannot guarantee the quality or accuracy of the results obtained from other types of analysis, such as more complex machine learning tasks or anomaly detection.
This is issue is inherent to differentially private microdata, as shown by \textcite{UllmanV11}, it is in general impossible to efficiently generate microdata that is differentially private while also preserving the accuracy of all statistical queries \cite{UllmanV11}. In fact, even preserving the accuracy of all two-way marginals is impossible. 

To mitigate this risk of inappropriate analysis, we rely on the README document as our primary method of communication, particularly through the inclusion of the \emph{intended usage} section. The documented is treated as inseparable part of the release.

\begin{tcolorbox}[box, title=Summary of stakeholders' requirements]
    \begin{enumerate}[leftmargin=*,itemsep=4pt]
        \item \textsc{Format}: The data should be presented in tabular form of individual records.
        
        \item \textsc{Quality}: Accuracy with respect to a broad set of statistical measures.
        
        \item \textsc{Faithfulness}: There should be row-level mapping between the released and the original data.
        
        \item \textsc{Privacy}: Protecting the privacy of mothers and newborns. This requirement came in two flavors: first, differential privacy as rigorous state-of-the-art privacy criteria must be met; in addition, some face privacy requirements were made (e.g., unique rows are not allowed).
        
        \item \textsc{Transparency}: The intended and unintended uses of the dataset should be clearly documented, particularly detailing which statistical queries carry accuracy guarantees and which are not.
    \end{enumerate}
\end{tcolorbox}

\subsection{Universal Scheme for Differentially Private Synthetic Data}
\label{sec:intro-scheme}

So far we have discussed the release's requirements one by one, along with their corresponding solution concepts. Now we turn to present our universal scheme that integrate all the solution concepts together and addresses additional technical challenges in deploying differentially private synthetic data (and, in generally, any differentially private release.) The universal scheme is illustrated in Figure~\ref{fig:scheme} and summarized in 
Algorithm~\ref{alg:scheme}. The description of its concrete steps follows.

We assume that the dataset provided as input to the scheme, referred to as the \emph{original dataset}, has been preprocessed and cleansed from the \emph{raw dataset}, the dataset extracted from the registry.

In the scheme, we first apply the data transformations, as defined in the configuration, on the original dataset to obtain the \emph{transformed dataset}, which is used to train a generative model. We then sample records from this model, filtering them based on predefined plausibility constraints to create a \emph{synthetic dataset} (see Section~\ref{sec:components-constraints}). The training process includes various configurations, such as model family selection, and hyperparameter tuning (see Section~\ref{sec:components-configurations}).

We then perform a dataset projection on the synthetic dataset, creating a \emph{release-candidate dataset}. This is done to guarantee face privacy (see Section~\ref{sec:components-projection}).

Finally, we compare this release-candidate dataset to the transformed dataset, using a set of \emph{acceptance criteria} to gauge quality (see Section~\ref{sec:components-acceptance-criteria}). 
One criterion evaluates record-level faithfulness, ensuring a one-to-one row similarity between the two datasets (see Section~\ref{sec:components-faithfulness}). Each additional criterion comprises an error measure and a corresponding threshold. The release-candidate dataset must meet these criteria---based on error measures for stakeholder-specific statistical queries---to be accepted.

To preserve the differential privacy guarantee, all computations involving the original data should adhere to differential privacy to ensure the desired level of privacy protection. This not only includes the step of learning the generative model but also extends to additional steps that may rely on private data---such as data transformations, model selection, hyperparameter search, post-processing, and evaluation. If these steps are not conducted under differential privacy, the resulting release may not be differentially private or may offer ineffective privacy protection \cite{Liu2018PrivateSF,Papernot2022Hyperparameter,Xiang2024How}.

\begin{figure*}[tb]
    \centering
    \includegraphics[width=\textwidth]{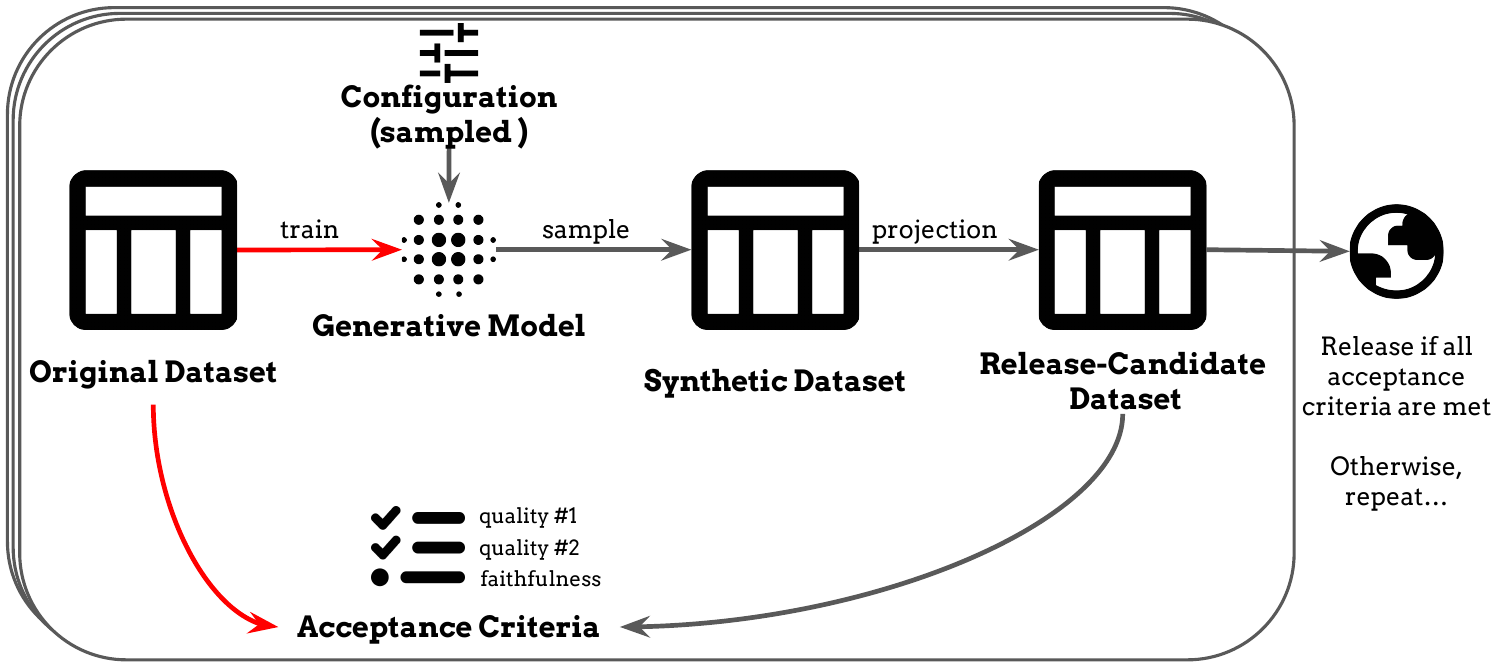}
    \captionof{figure}{Schematic overview of the universal scheme for  differentially private microdata data release. A red arrow ({\color{red}$\rightarrow$}) represents a computation done with differential privacy.}
    \label{fig:scheme}
\end{figure*}

Once the scheme generates a dataset that meets all acceptance criteria, we publicly release that dataset, \emph{along with a differentially private release of the actual accuracy measures for that particular dataset.}  

Conducting multiple iterations to search over configurations would result in a linear accumulation of the privacy loss budget if done straightforwardly with simple composition. The private selection algorithm with a known threshold (Algorithm~\ref{alg:known-threshold}) of \textcite{Liu2018PrivateSF} addresses this issue by allowing multiple iterations until success, while consuming only twice the privacy loss budget of a single iteration, unlike the factor dependent on $T$ attempts with simple composition \cite{Liu2018PrivateSF} (see Section~\ref{sec:components-private-selection}).

The fact that we output an estimate of the actual  accuracy measure of the released dataset is crucial: It allows us to set the hard threshold of the selection algorithm (Algorithm~\ref{alg:known-threshold}) at a relatively low level, thus reducing the probability of failure to generate a dataset within reasonable runtime. Since these estimates are released with differential privacy  guarantees, the overall privacy of the original data is preserved. 

The scheme is \emph{universal} in the sense that it designed to be algorithm-agnostic and readily-used, regardless of the generative model and the collection of acceptance criteria. The scheme takes as input any ``standalone'' differentially private synthetic data generators. These generators often lack quality guarantees or provide some guarantees over a specific type of queries (e.g., marginals) only. In contrast, our universal scheme uplift any differentially private synthetic data generator to produce synthetic data that fulfil any required empirical guarantees as long as the guarantees (1) are  quantitatively expressed with differential privacy; and (2) are feasible given the input dataset and configuration space.

\begin{algorithm}[tb]
    \caption{Universal scheme for differentially private microdata data release.}\label{alg:scheme}
     \begin{algorithmic}[1]
        \Require Original dataset $\mathcal{O}$, privacy loss budget for generative model $\eps_x > 0$, privacy loss budget for acceptance criteria $\eps_q > 0$, list of experiment configurations $\mathcal{C}$, list of constraints $\mathcal{F}$, list of acceptance criteria $\mathcal{A}$
        \While{True}
            \State Uniformly sample a configuration $c$ from $\mathcal{C}$
            \State Create transformed dataset $\DR$ using data transformations specified by $c$
            \State Fit an $\eps_x$-DP generative model $G$ using hyperparameters from $c$
            \Repeat
                \State Sample a synthetic record $s$ from $G$
                \State Retain $s$ if it satisfies all constraints in $\mathcal{F}$; otherwise, discard it
            \Until{Collected $|\DR|$ synthetic records}
            \State Apply dataset projection specified by $c$ to obtain release-candidate dataset $\DS$
            \State Compute $\eps_q$-DP acceptance criteria $\mathcal{A}$ between $\DR$ and $\DS$
            \State Store the acceptance criteria results in $Q$
            \State Test if all acceptance criteria in $Q$ are met; if yes, exit the loop
        \EndWhile
        
        \State
        The algorithm's output includes: (1) the release candidate dataset $\DS$; (2) the acceptance criteria results $Q$; and (3) the chosen configuration $c$.
    \end{algorithmic}
\end{algorithm}

\subsection{Leveraging Public Data}
\label{sec:intro-public-data}

Any access or computation on the Israeli National Registry of Live Births data is restricted to an enclave environment (Section~\ref{sec:components-environments}) with limited computational resources. However, the prior configuration space of possible data transformation and model hyperparameters is large and cannot be exhausted in the enclave environment in a reasonable amount of time. Not only that, if none of the configurations passes the acceptance criteria, the embedded private selection with a known threshold algorithm (Algorithm~\ref{alg:known-threshold}) in our scheme would run forever, and the privacy loss budget would be consumed for nothing.

One option is to leverage available resources (e.g., cloud computing) to run a grid search with multiple attempts for each configuration if one had public data similar to the private data. Hopefully, one or more configurations would pass all acceptance criteria with a not-too-small probability. Then, the collection of successful configuration could be used on the private data in the enclave environment.

The underlying assumption is that the structural characteristics of the data distribution would be similar between the private and public data. If a configuration, i.e., the collection of specific values for data transformations and model hyperparameters, is accepted for the public data, it would probably be accepted for the private data as well, perhaps after a few more attempts.

To the best of our knowledge, no existing birth datasets are publicly available except for those from the US, The National Vital Statistics System (NVSS) \cite{cdcnvss}, maintained by the Centers for Disease Control and Prevention (CDC).
This collection includes detailed individual-level records of birth events, categorized by state and month. The NVSS data has been accessible to the public for several decades. All six fields planned for release in the Israeli National Registry of Live Births are present in the NVSS data. Throughout the years, various privacy-protection measures were taken, such as the removal of Personal Identifiable Information (PII) and data field generalizations.

\subsection{The Release}
\label{sec:intro-release}

The execution of the our scheme on the Israeli National Registry of Live Birth resulted in a released dataset that passed all the acceptance criteria successfully. The privacy loss budget was allocated as follow: $\eps = 4$ for fitting the generative model; $\eps = 0.99$ for evaluating of the dataset with a list of predefined of acceptance criteria;  and a factor $2$ is applied due to the private selection process \cite{Liu2018PrivateSF}. Therefore, the overall privacy loss budget is $\pmb{\eps = 9.98}$. Considering past real-world differentially private releases as a benchmark \cite{desfontain2021ListRealworld}, we aimed to a total privacy loss budget of $\eps < 10$ (see Section~\ref{sec:components-differential-privacy-budget} for a discussion on setting the value of $\eps$).

Based on experiments conducted on a sample public data\footnote{Specifically, we used a sample of NVSS birth data from the year 2019 that corresponds to the approximate number of births in Israel in 2014, based on counts from a public report by Israel Central Bureau of Statistics \cite{cbs2014newborns}.} (see Section~\ref{sec:intro-public-data}), we narrowed down the space of possible configurations to those with an acceptance probability of at least 10\%. All of these configurations specify the use of the PrivBayes algorithm for training the generative model \cite{Zhang2014PrivBayesPD}. When running the scheme on the data from the Registry, a configuration featuring the data transformations outlined in Table~\ref{tab:dataset} was selected.

The stakeholders were pleased with the quality of the released dataset, as evidenced by the fact that all acceptance criteria were met and by the level of actual errors obtained, as seen in Table~\ref{tab:ac}. (For instance, in accordance with the first and the most important acceptance criterion, the released dataset demonstrates a maximal error in histograms and contingency tables of less than $0.5\%$ out of the total number of rows in the original data, smaller than the pre-set limit of $1\%$.)

To obtain a more granular quality assessment of statistical queries under the selected configuration, we generated multiple ``released datasets'' using our scheme with the public data sample as input. The analysis included queries beyond the acceptance criteria (e.g., correlations and medians). Overall, the scheme under the selected configuration produced high-quality datasets with sufficient accuracy, as confirmed by public health experts based on the granular evaluation report. 

The released dataset and its documentation tailored to different audiences \cite{moh2024release}, along with the code\footnote{\url{https://github.com/shlomihod/synthflow}} used to produce and evaluate the release, were made publicly available in February 2024.

\subsection{Summary of our Contributions}

The most immediate impact of this work is the actual release of the 2014 birth data, which, to the best of our knowledge, is the first of its kind in the medical domain. (We aware of only one other differently private release with synthetic data, in the context of human trafficking \cite{Migration2022Global}.)

More generally, this work demonstrates the feasibility of using differential privacy as a privacy protection measure for the release of government data, specifically within the medical domain.
The release represents the successful outcome of a co-design process involving various stakeholders. 
It also demonstrates how incorporating context-specific requirements regarding the stakeholders' expectations of accuracy and privacy within the overall scheme can be critical to the success of such release. We identify, conceptualize and analyze new stakeholders' requirements, namely \emph{faithfulness} and \emph{face privacy}.

On a more technical level, this work we designed a \emph{universal scheme} for producing a differentially private synthetic data that satisfies a diverse set of properties while providing an overall differential privacy guarantee. We do so by utilizing the private selection algorithm of \textcite{Liu2018PrivateSF} in a new way. Our scheme allows separating the process of designing collection of the desired properties and determining their feasibility from the actual processing of the data---that, crucially, takes place only once. The scheme is general and can be applied to other  types of differentially private releases.

\subsection{Organization of this Paper}

In the next section, we provide a detailed account of the universal scheme and design choices concerning the release (Section~\ref{sec:componenets}). Then, we review related work (Section~\ref{sec:related-work}) and discuss future research directions (Section~\ref{sec:discussion}).
\section{The Components of the Universal Scheme}
\label{sec:componenets}

In this section, we present the theoretical aspects and implementation details for each step of our scheme: differential privacy guarantee (Section~\ref{sec:componenets-privacy}), data preparation (Section~\ref{sec:components-data}), differential privacy-related choices such as privacy loss budget allocation (Section~\ref{sec:components-differential-privacy}), faithfulness (Section~\ref{sec:components-faithfulness}) acceptance criteria (Section~\ref{sec:components-acceptance-criteria}), dataset projection for face privacy (Section~\ref{sec:components-projection}), configuration (Section~\ref{sec:components-configurations}), constraint filtering (Section~\ref{sec:components-constraints}), and the private selection algorithm (Section~\ref{sec:components-private-selection}). 
We also elaborate on the software and environments used to execute the scheme (Sections~\ref{sec:components-software} and \ref{sec:components-environments}). We introduce the documentation that accompany the released dataset (Section~\ref{sec:components-documentation}). Finally, we conclude this section with potential improvements identified \emph{after} the scheme was executed, so they could not be integrated into the differentially private release dataset (Section~\ref{sec:components-improvements}).

\subsection{Differential Privacy Guarantee}
\label{sec:componenets-privacy}
 
We recall the notion of differential privacy, which provides a rigorous and quantifiable measure of privacy \cite{Dwork2006CalibratingNT}.
Importantly, rather than considering the privacy loss caused by a specific piece of data, the measure assesses the level of privacy loss incurred by a given \emph{mechanism} (or, algorithm) for releasing information about a dataset.
The definition considers very powerful and general adversaries, thus resulting in a strong privacy protection for each record in the dataset.  

\begin{definition}[Bounded Neighboring Datasets]
    Let $\XX$ be the universe of records. Two datasets $\DD, \mathcal{D'} \in \XX^n$ of size $n$ are \emph{neighbors} if they differ only in a single record.
\end{definition}

\begin{definition}[Differential Privacy \cite{Dwork2006CalibratingNT}]
    A randomized mechanism $\M$ is \emph{$\eps$-differently private (DP)} if for all pairs of neighboring datasets $\DD$, $\DD$' and all events Y in the output space of $\M$
    \[ \Pr[\M(\DD) \in Y] \leq e^\eps \Pr[\M(\DD') \in Y]. \]
\end{definition}

It is readily apparent that meaningful differentially private mechanisms must be randomized; in other words, they create a distribution over their outputs.
Furthermore,  there is a clear tension between the level of privacy provided by the mechanism and the accuracy of statistics on the data that are given in (or can be inferred from) the mechanism's output. This tension is commonly referred to as the \emph{privacy-accuracy trade-off}.

Two very useful properties of differential privacy mechanisms, that are central in finding an effective ``compromise'' between privacy and accuracy, are composition and post-processing:

\begin{proposition}[Differential Privacy Basic Composition \cite{DworkKMMN06}]
    Let $\{\M_i\}_{i=1}^k$ be a collection of $\{(\eps_i)\}_{i=1}^k$-differently private mechanisms, respectively. Then the their combination mechanism, $\M$, defined to be $\M(x) = (\M_1(X), \ldots, \M_k(x))$ is $\big(\sum_{i=1}^k \eps_i \big)$-differentially private.
\end{proposition}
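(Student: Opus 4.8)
The plan is to prove the basic composition theorem directly from the definition of $\eps$-differential privacy, proceeding by induction on $k$. The base case $k=1$ is immediate. For the inductive step, it suffices to handle $k=2$ and then compose; so the heart of the argument is showing that if $\M_1$ is $\eps_1$-DP and $\M_2$ is $\eps_2$-DP, then $\M(x) = (\M_1(x), \M_2(x))$ is $(\eps_1 + \eps_2)$-DP. A small subtlety to flag up front: the statement as written treats each $\M_i$ as operating on the same input $x$, so I will assume the $\M_i$ use independent coin tosses (the standard assumption for this form of composition); adaptivity, where $\M_2$ may see the output of $\M_1$, needs a slightly more careful conditioning argument but follows the same lines.

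First I would fix a pair of neighboring datasets $\DD, \DD'$ and an event $Y$ in the (product) output space of $\M$. The key step is to write $\Pr[\M(\DD) \in Y]$ by integrating (or summing, in the discrete case) over the value taken by the first coordinate: for each fixed output $y_1$ of $\M_1$, the conditional distribution of the second coordinate is just that of $\M_2(\DD)$, by independence. So $\Pr[\M(\DD) \in Y] = \int \Pr[\M_2(\DD) \in Y_{y_1}] \, d\Pr_{\M_1(\DD)}(y_1)$, where $Y_{y_1}$ is the section of $Y$ at $y_1$. Now apply the $\eps_2$-DP guarantee of $\M_2$ pointwise inside the integral to replace $\Pr[\M_2(\DD) \in Y_{y_1}]$ by at most $e^{\eps_2} \Pr[\M_2(\DD') \in Y_{y_1}]$, pull the constant $e^{\eps_2}$ out, and then apply the $\eps_1$-DP guarantee of $\M_1$ to the resulting integral against $\Pr_{\M_1(\DD)}$ to bound it by $e^{\eps_1}$ times the same integral against $\Pr_{\M_1(\DD')}$ — being slightly careful that the integrand $y_1 \mapsto \Pr[\M_2(\DD') \in Y_{y_1}]$ is a bounded nonnegative measurable function, so that the DP inequality for $\M_1$, which is usually stated for events, extends to it by the standard approximation-by-simple-functions argument. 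Recombining, the integral against $\Pr_{\M_1(\DD')}$ is exactly $\Pr[\M(\DD') \in Y]$, giving the bound $e^{\eps_1 + \eps_2} \Pr[\M(\DD') \in Y]$.

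With $k=2$ settled, the general case follows by induction: group $(\M_1, \ldots, \M_{k-1})$ into one mechanism, which is $(\sum_{i=1}^{k-1} \eps_i)$-DP by the inductive hypothesis, and compose it with $\M_k$ using the $k=2$ case to get $(\sum_{i=1}^{k} \eps_i)$-DP.

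I expect the main obstacle — or at least the only non-routine point — to be the measure-theoretic bookkeeping in the general (non-discrete) output space: justifying the iterated-integral decomposition via Fubini/Tonelli, ensuring the sections $Y_{y_1}$ are measurable, and upgrading the defining DP inequality from events to bounded measurable functions. In a discrete setting all of this collapses to manipulating finite or countable sums and the proof is a two-line computation, so I would likely present the discrete version in the main text and remark that the general case follows by the standard extension. If one wants the adaptive version, the same computation goes through with $\M_2(\DD; y_1)$ in place of $\M_2(\DD)$, using that the DP guarantee holds for $\M_2(\cdot; y_1)$ uniformly in $y_1$.
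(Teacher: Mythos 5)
The paper does not actually prove this proposition: it is quoted as a known result and attributed to the literature (\cite{DworkKMMN06}), so there is no in-paper argument to compare against. Judged on its own, your proof is correct. The two-mechanism step via sections, $\Pr[\M(\DD)\in Y]=\int \Pr[\M_2(\DD)\in Y_{y_1}]\,d\Pr_{\M_1(\DD)}(y_1)$, followed by a pointwise application of the $\eps_2$-guarantee and then the extension of the $\eps_1$-guarantee from events to bounded nonnegative measurable functions (valid by the simple-function approximation you cite, since the inequality is preserved under nonnegative combinations and monotone limits), is sound, and the induction to general $k$ is routine. You were also right to flag independence of the coins: the statement as written is the non-adaptive product form, and your remark that the adaptive case goes through with $\M_2(\cdot\,;y_1)$ uniformly in $y_1$ is the standard fix. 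The more common textbook route is slightly more elementary: in the discrete case one bounds the probability of each output \emph{tuple} directly, $\Pr[\M(\DD)=(y_1,\dots,y_k)]=\prod_i\Pr[\M_i(\DD)=y_i]\le e^{\sum_i\eps_i}\prod_i\Pr[\M_i(\DD')=y_i]$, and sums over $Y$; your section-based argument buys the general (non-discrete) output space without invoking densities, at the cost of the measure-theoretic bookkeeping you already identified. Either presentation would be acceptable; for a paper at this level of formality the discrete computation with a remark on the general case, exactly as you propose, is the natural choice.
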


\begin{proposition}[Differential Privacy Post-Processing \cite{Dwork2006CalibratingNT}]
    Let $\M: \XX^n \rightarrow Z$ be a $\eps$-differently private mechanism. Let $f: Z \rightarrow Z'$ be an arbitrary randomized mapping. Then $f \circ \M: \XX^n \rightarrow Z'$ is $\eps$-differently private.
\end{proposition}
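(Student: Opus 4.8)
The plan is to reduce to the deterministic case and then dispatch the general randomized case by conditioning on $f$'s internal coins. First I would suppose $f$ is deterministic. Fix neighboring datasets $\DD, \DD'$ and an arbitrary event $Y'$ in the output space of $f\circ\M$. The crucial observation is the set identity
\[ \{\,f(\M(\DD)) \in Y'\,\} = \{\,\M(\DD) \in f^{-1}(Y')\,\}, \]
where $f^{-1}(Y') = \{z : f(z)\in Y'\}$ is an event in $Z$, the output space of $\M$. Instantiating the $\eps$-DP guarantee of $\M$ at the event $f^{-1}(Y')$ then gives
\[ \Pr[f(\M(\DD))\in Y'] = \Pr[\M(\DD)\in f^{-1}(Y')] \le e^{\eps}\,\Pr[\M(\DD')\in f^{-1}(Y')] = e^{\eps}\,\Pr[f(\M(\DD'))\in Y'], \]
which is exactly the defining inequality for $f\circ\M$.

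For a general randomized mapping $f$, I would write it as a mixture of deterministic maps: $f$ draws an internal random string $r$ from a distribution that is independent of the randomness of $\M$ and of the input dataset, and conditioned on $r=\rho$ it behaves as a deterministic map $f_\rho\colon Z\to Z'$. Using this independence to factor the joint law, for any event $Y'$ one has
\[ \Pr[f(\M(\DD))\in Y'] = \E_{\rho}\!\big[\,\Pr[f_\rho(\M(\DD))\in Y']\,\big] \le \E_{\rho}\!\big[\,e^{\eps}\Pr[f_\rho(\M(\DD'))\in Y']\,\big] = e^{\eps}\,\Pr[f(\M(\DD'))\in Y'], \]
where the middle inequality applies the deterministic case pointwise in $\rho$ and then integrates, legitimate because $e^{\eps}$ is a constant that pulls out of the expectation.

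The only genuine subtlety, and the step I would be most careful with, is measurability when $Z$ and $Z'$ are not discrete: one needs $f^{-1}(Y')$ to be a measurable event whenever $Y'$ is, which is precisely the requirement that $f$ (and each $f_\rho$) be a measurable map, and one needs the interchange of the expectation over $\rho$ with the probability statement to be valid, which follows from Tonelli's theorem since every integrand is nonnegative and bounded. In the discrete setting all of this is automatic, and the two displayed chains of (in)equalities constitute the entire argument.
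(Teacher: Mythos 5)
Your proof is correct. Note, however, that the paper does not actually prove this proposition---it is stated as a standard result cited to Dwork et al., so there is no in-paper argument to compare against. Your route is the standard textbook one: for deterministic $f$ you instantiate the $\eps$-DP guarantee of $\M$ at the pre-image event $f^{-1}(Y')$, and for randomized $f$ you condition on $f$'s internal coins, apply the deterministic case pointwise, and average; the constant $e^{\eps}$ pulls out of the expectation. Incidentally, that second (averaging) step is exactly the argument the paper does spell out later for its proposition on the family $\{\M_r\}_{r \in R}$ of $\eps$-DP mechanisms with $r$ sampled randomly, so your mixture decomposition is consistent with how the paper reasons about randomness elsewhere, and your measurability caveat is the right one to flag in the non-discrete setting.
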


Because of the composition property of differential privacy, the privacy parameter $\eps$ is also called the \emph{privacy loss budget}.

Differential privacy underpins the privacy guarantees of our scheme. All computations that take the original dataset as input must be differentially private. In our approach, this requirement extends to both the training of the generative model and the calculations of the acceptance criteria.

\subsection{Registry Data}
\label{sec:components-data}

The Live Birth Registry is the official data source of live births occurring in Israel, provided that at least one parent holds an Israeli identity card, excluding the births of Israelis that occur abroad \cite{israel23live}.

We turn to discuss two design choices regarding the preparation of the data: which year and fields to release.

\subsubsection{Releasing a single year} 

Although we have access to data from recent years, which data users generally consider more valuable, in this pilot project we decided to release data from the earliest year available to us, 2014. This choice was made for two reasons. First, it eliminates the need to release an additional column (\texttt{year}) and reduces the amount of statistical relationships that need to be learned by the generative model. Second, post-release feedback from stakeholders can be incorporated into subsequent releases of more valuable years.

\subsubsection{Selecting data fields}

The Birth Registry contains dozens of data fields for each birth. Before our collaboration with the Ministry of Health began, the primary stakeholders had already identified which data fields were planned for release, guided by the data minimization principle: selecting the smallest possible subset of fields that still delivers sufficient value to users. The pre-selected fields were also suitable for piloting with differential privacy synthetic data because they were diverse in terms of type and statistical properties (e.g., binary, categorical, continuous) and were relatively clean and complete. We considered adding one more data field (\texttt{group\_population}) but decided against it due to the high number of missing values. We will leave the release of this and other data fields for future research.

\subsection{Differential Privacy Design Choices}
\label{sec:components-differential-privacy}

Here we discuss design choices related to the differential privacy guarantees.

\subsubsection{Releasing singleton births only}

What constitutes the unit of privacy within the Birth Registry context? How should neighboring datasets be defined? At least three levels are discernible: (1) a newborn in a birth; (2) a birth, whether it involves one newborn or more; and (3) a mother.

For simplicity in this pilot project, together with the primary stakeholders, we decided to release only singleton births (births involving a single newborn), thereby obliterating the distinction between levels (1) and (2). According to primary stakeholders, the release of singleton birth data already has significant value. Multiple births are infrequent ($\approx4.6\%$ in 2014, as per the Israel Central Bureau of Statistics public report \cite{cbs2014newborns}), and are typically analyzed separately from singleton births in biostatistics and medical literature.

Consequently, we set the unit of privacy to a single singleton birth.

This is sidestepping the necessity for more intricate differential privacy algorithms designed for multiple record settings (e.g., \cite{Liu2020LearningDD}). Moreover, it is plausible to presume that a mother may have at most two pregnancies within a calendar year, with other scenarios being exceedingly rare. Hence, the vast majority of mothers will have privacy protection on the basis of birth. For a minority of mothers, protection is accorded through the composition of a group of size two.

\subsubsection{Treating the total number of live births as public information}

As articulated by several stakeholders, data users anticipate that the total number of births in the released dataset aligns with the actual figure. Any discrepancy could undermine the trustworthiness of the release. Therefore, we treat the total number of live birth \emph{after applying constrain filtering on the raw data} (Section~\ref{sec:components-constraints}) as public information. By doing so, we discard the requirement to execute a counting query and allocate privacy loss budget for it.

\subsubsection{Opting for pure differential privacy}

Conveying the assurance of differential privacy to non-experts is a challenging task \cite{Cummings2021INA,Nanayakkara2023WhatAT}. The definition encapsulates the worst-case probabilistic divergence between two world states, encompassing several non-intuitive concepts such as probability, worst-case scenarios, and hypothetical worlds \cite{Wood2018DifferentialPA}. This communicational hurdle is inescapable for those intending to release data under differential privacy. 
We aspire to mitigate this challenge by favoring pure differential privacy mechanisms, characterized by a singular parameter $\eps$, for the synthetic data generator as well as the acceptance criteria. Predominantly, we employed the simplest mechanism for acceptance criteria: noise addition from the Laplace distribution.

\subsubsection{Setting the privacy loss budget \texorpdfstring{$\eps$}{epsilon}}
\label{sec:components-differential-privacy-budget}

Differential privacy theory imparts a meaningful interpretation to the protection afforded by privacy loss budget of $\eps \leq 1$ \cite{Dwork2014TheAF}.
Nonetheless, our empirical experimentation on publicly available data (Section~\ref{sec:intro-public-data}) revealed that such a value does not yield adequate data quality with our methodology. 
Beyond this range, no explicit guidelines exist for selecting $\eps$.

To surmount this, we adhered to three heuristics to balance privacy and utility.

\begin{enumerate}
    \item Opt for the lowest $\eps$ feasible for the acceptance criteria evaluation, provided the perturbation in the error metric due to differential privacy is relatively diminutive compared to the threshold.
    \item Select the lowest $\eps$ feasible for the generative model training, as long as the release-candidate dataset has not too small probability ($\geq 10\%$) to satisfy all acceptance criteria.
    \item The aggregate $\eps$ should not exceed 10, positioning it within the mid range of privacy loss budgets utilized in previous real-world deployments \cite{desfontain2021ListRealworld}. While strictly speaking a factor of $e^{10}$ within the differential privacy definition offers only a very weak theoretical guarantee against worst-case adversaries, recent studies provide empirical evidence that a differentially private mechanism with such an $\eps$ still offers strong privacy guarantees against more realistic attackers (e.g., with limited access to the mechanism and the private dataset) \cite{Stadler2020SyntheticD}. Additionally, we further limit the attacker's information by not making the differentially private trained synthesizer public and  releasing only the final dataset.
\end{enumerate}

The actual allocation of the privacy loss budget was conducted according to the heuristics applied to empirical experiments on publicly available data (Section~\ref{sec:intro-public-data}). The chosen allocation is outlined in Section~\ref{sec:intro-release}.
Table~\ref{tab:ac} presents the breakdown per criterion.

\subsection{Faithfulness}
\label{sec:components-faithfulness}

Recall that faithfulness is the requirement of having a row-level mapping between the released dataset and the transformed dataset. The following definition formalizes this notion. 

\begin{definition}[$(\alpha, \beta)$-faithfulness]
    Let $c: \XX \times \XX \rightarrow \R_{\geq 0}$ be a cost function between two records. We say that a dataset $\DS \in \XX^n$ is \emph{$(\alpha, \beta)$-faithful} with respect to a dataset $\DR \in \XX^n$ and a cost function $c$ if there exists a 1-1 matching (bijection) $\pi: \DS \rightarrow \DR$ such that
    
    \[
        \frac{1}{n} \sum_{i=1}^n \mathbbm{1} \Big[ c(s_i,r_{\pi(i)}) \leq \alpha \Big] \geq \beta.
    \]
\end{definition}

In words, there are at most $(1-\beta) n$ records in $\DS$ (released dataset) that are not 1-1 matched to similar-enough records in $\DR$ (transformed dataset). 

For a given $\alpha$, the cost function can be scaled by a factor of $1/\alpha$, so we can assume that $\alpha = 1$, and rephrase the definition with a single parameter as \textbf{$\beta$-faithfulness}. The actual matching is not very important for the purpose of faithfulness, but its existence and value of $\beta$ matter. Therefore, we care only about the maximum possible value of $\beta$, provided by the following definition.
 
\begin{definition}[Maximal-$\beta$-faithfulness]
    \label{def:max-faithfulness}
    The maximal faithfulness of a dataset $\DS \in \XX^n$ with respect to a dataset  $\DR \in \XX^n$ and a cost function $c$  is
    \[
        \beta_\mathrm{max}(\DR, \DS) = \max_{\substack{\pi \\ \mathrm{matching}}} \frac{1}{n} \sum_{i=1}^n \mathbbm{1} [c(s_i, r_{\pi(i)}) \leq 1].
    \]
\end{definition}

$\beta_\mathrm{max}$ can be found efficiently by solving the maximum-cardinality matching problem on the bipartite graph: nodes correspond to original and released records; an edge between original and released nodes exists if the cost function is smaller or equal to 1. 

\subsubsection{Cost function}
The cost function is elicited from the stakeholder. We note that the definitions of $\beta$-faithfulness and maximal-$\beta$-faithfulness do not depend on the exact values of the cost function $c$; rather, it depends only on whether the cost between two records is less than or equal to one.
 
\subsubsection{Faithfulness vs. Quality}

Similarly to data quality evaluation, faithfulness is also a property of the relationship between the transformed and released datasets.
Data quality ensures that analyzing the released dataset as a whole produces similar statistical results as performing the same analysis on the transformed dataset.
However, the faithfulness criterion considers much finer granularity and it requires that each released dataset record ``looks like'' one distinct transformed dataset record. Namely, there is a 1-to-1 matching between the records of released dataset and the records of the transformed dataset: each released record is matched with an transformed record that is sufficiently similar. 
Note that faithfulness is a property of the two datasets, and not of the process of how they were created. For example, if one of the datasets is produced via synthetic data, the notion of faithfulness is agnostic to the chosen generative model.

\subsection{Acceptance Criteria}
\label{sec:components-acceptance-criteria}

\begin{table}[t]
\centering
\caption{\label{tab:ac}List of acceptance criteria used for evaluating this release (refer to Section~\ref{sec:components-acceptance-criteria}). Total privacy loss budget of acceptance criteria's release is $\eps = 0.99$. Recall that the total number of records in the dataset is $n = \num[group-separator={,}]{165915}$.}
\begin{adjustbox}{width=\textwidth}
\begin{tabular}{@{}lllllllll@{}}
\toprule
Type & Metric & Threshold & Result & $\eps$ & Mechanism & $[L,U]$ & $\Delta$ & $\sigma$  \\
\midrule
\multirow{2}{*}{\begin{tabular}[c]{@{}l@{}}Maximal error\\ of marginals\end{tabular}}
    & Absolute all k-way  & 1\% & 0.440\% & 0.01    & Laplace  & $[0,1]$   & $0.001>$         & 0.001
    \\
    & Relative 1-way & $\times1.4$ & $\times1.284$ & 0.30    & Laplace    & $[1, 2]$ & 0.029           & 0.135   \\
\addlinespace
\multirow{3}{*}{\begin{tabular}[c]{@{}l@{}}Maximal error\\ of conditional means\end{tabular}}
    & $\overline{\texttt{parity}}$ & 0.3 live births & -0.014 & 0.01     & Laplace  & $[1, 11]$  & $0.001>$     & 0.044   \\
    & $\overline{\texttt{birth weight}}$ & 100 grams & 28.634 & 0.17    & Laplace    & $[1400, 4600]$  & 0.459     & 3.821    \\
    & $\overline{\texttt{gestation week}}$  & 1 week & 0.062 & 0.02    & Laplace    & $[28, 43]$  & $0.001>$     & 0.033    \\
\addlinespace
\multirow{2}{*}{Linear Regression} & Max coefficient error
    & 30 & 27.185 & 0.43  & Functional &            &    \\
    & Absolute prediction error   & 5 grams & 0.351  & 0.04    & Laplace   & $[1400,4600]$   & 0.039         & 1.364  \\
    \addlinespace
Faithfulness & Record-level matching error & 5\%  & 3.876\%
 & 0.01    & Laplace   & $[0,1]$   & $0.001>$         & 0.001
 \\
\bottomrule
\end{tabular}
\end{adjustbox}
\end{table}

Each data release occurs in a specific context, that encompasses, among other factors, the needs and expectations of the data users.
For a release to be successful, it must be evaluated in a way that respects its specific context \cite{arnold2020}. To achieve this, we have constructed a set of \textbf{acceptance criteria} to ensure that the release meets the data users' requirements.

\begin{definition}[Acceptance criterion]
    \label{def:ac}
    An \emph{acceptance criterion} $\AC \defeq (\EE, T)$ is a pair of an error measure between two datasets of the same size $\EE: \XX^n \times \XX^n \rightarrow \R$ and a threshold $T \in \R$. We say that an acceptance criteria is \emph{met} for datasets $\DR, \DS \in \XX^n$ if $\EE(\DR, \DS) < T$.
\end{definition}

A \emph{release-candidate dataset} $\DS$ passes an acceptance criterion if its error metric falls below the specified threshold. The metric is computed using the \emph{transformed dataset} $\DR$ (recall that $\DR$ is created by applying data transformations to the original dataset).
We accept the release-candidate dataset only if it meets all such criteria.
These acceptance criteria and their relevant design choices, including error metrics and thresholds, are publicly pre-established by stakeholders, drawing on their subject-matter expertise and the existing literature.

Notably, acceptance criteria allow us to test any property of the released dataset, including quality-related properties, as long as it can be computed in a differentially private way. 

To preserve the end-to-end differential privacy of the scheme  (Algorithm~\ref{alg:scheme}), all acceptance criteria are computed with differential privacy, as required by the private selection algorithm (Algorithm~\ref{alg:known-threshold})
We publicly release the differentially private metric results along with the associated privacy parameters (budget, mechanism, sensitivity, and variance). These results, thresholds, and noise information provide additional value for data users, allowing them to assess whether the dataset errors are adequate for their needs. This benefit is amplified when metric results, as observed in our release, are significantly lower than thresholds, indicating stronger quality guarantees. (Releasing metric results without differential privacy could leak individual information, posing an unacceptable privacy risk \cite{Ganev2023Inadequacy}.)

For the release of the Registry, we designed, together with the stakeholders, eight acceptance criteria grouped into four types (see Table~\ref{tab:ac}):\begin{enumerate}
    \item Maximal error in contingency tables  (marginals): absolute (all $k$-way) and relative ($1$-way)
    \item Maximal error in conditional means: \texttt{parity}, \texttt{birth\_weight}, \texttt{gestation\_week}
    \item Linear regression errors: coefficients ($\ell_1$) and predictions (MAE)
    \item Faithfulness
\end{enumerate}

We present the implementation details of the eight acceptance criteria organized by type (Table~\ref{tab:ac}). We start with preliminaries (Section~\ref{sec:components-acceptance-criteria-preliminaries}), then move to the marginals-related criteria (Section~\ref{sec:components-acceptance-criteria-marginals}), conditional means-related criteria (Section~\ref{sec:components-acceptance-criteria-means}), linear regression-related criteria (Section~\ref{sec:components-acceptance-criteria-regression}), and finally, the faithfulness criterion (Section~\ref{sec:components-acceptance-criteria-faithfulness}).

\subsubsection{Preliminaries}
\label{sec:components-acceptance-criteria-preliminaries}

Let $\XX$ be the universe of records. Let $d$ be the dimension of $\XX$, i.e., number of data fields. Let $\DR, \DS \in \XX^n$ be the transformed and release-candidate datasets, respectively. When the acceptance criteria are computed in our scheme, we assume that the release-candidate dataset $\DS$ is public and fixed (e.g., it was released with another differential privacy mechanism), but the transformed dataset $\DR$ is private.

Let $M_k$ be the set of all $k$-way marginals \emph{count} queries (contingency tables) on a dataset with records from $\XX$.

\begin{definition}[Clipping Function]
    The clipping function between $L$ and $U$, denoted $[\,\cdot\,]^U_L: \R \rightarrow \R$, is defined as follows:
    \[
        [\,x\,]^U_L = \max \{ \min \{ x, U \}, L \}.
    \]
\end{definition}

A common blueprint for building a differentially private version of a function that takes a private dataset as input (e.g., count or mean) is to add sufficient random noise to the result \cite{Dwork2006CalibratingNT,DworkKMMN06}. Intuitively, the noise should be in the same magnitude as the contribution of a single record to the function output. The notion of global sensitivity quantifies this idea.

\begin{definition}[Global Sensitivity]
    The \emph{global sensitivity of a function} $f: \XX^n \rightarrow \R$ is 
    \[
        \Delta_f = \max_{\mathrm{neighbors\,} \DD, \DD'} | f(\DD) - f(\DD') |
    \]
\end{definition}

\begin{theorem}
[Laplace Mechanism \cite{Dwork2006CalibratingNT}]
    Let  $f: \XX^n \rightarrow \R$ be a function with global sensitivity $\Delta$, and let $\eps>0$. Then the algorithm $\M$, that given an dataset $\DD\in \XX^n $, outputs $\M(\DD) = f(\DD) + Z$, where $Z$ is a Laplace random variable with scale parameter $\Delta/\eps$, is $\eps$-DP. 
\end{theorem}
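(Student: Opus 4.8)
The plan is to prove the Laplace mechanism satisfies $\eps$-DP by directly comparing the densities of $\M(\DD)$ and $\M(\DD')$ at an arbitrary output point, then integrating over the event $Y$. First I would fix neighboring datasets $\DD, \DD'$ and recall that $\M(\DD) = f(\DD) + Z$ where $Z$ has density $p(z) = \frac{\eps}{2\Delta}\exp(-\eps|z|/\Delta)$. Hence the output density of $\M(\DD)$ at a point $t \in \R$ is $p_{\DD}(t) = \frac{\eps}{2\Delta}\exp(-\eps|t - f(\DD)|/\Delta)$, and similarly for $\DD'$.

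The key step is to bound the pointwise ratio $p_{\DD}(t)/p_{\DD'}(t)$. This ratio equals $\exp\!\big(\frac{\eps}{\Delta}(|t - f(\DD')| - |t - f(\DD)|)\big)$. By the (reverse) triangle inequality, $|t - f(\DD')| - |t - f(\DD)| \leq |f(\DD) - f(\DD')|$, and by the definition of global sensitivity, $|f(\DD) - f(\DD')| \leq \Delta$ since $\DD, \DD'$ are neighbors. Therefore the exponent is at most $\frac{\eps}{\Delta}\cdot\Delta = \eps$, giving $p_{\DD}(t) \leq e^{\eps} p_{\DD'}(t)$ for every $t$.

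Finally I would integrate this pointwise bound over an arbitrary (measurable) event $Y$ in the output space: $\Pr[\M(\DD) \in Y] = \int_Y p_{\DD}(t)\,dt \leq \int_Y e^{\eps} p_{\DD'}(t)\,dt = e^{\eps}\Pr[\M(\DD') \in Y]$, which is exactly the defining inequality of $\eps$-DP. Since $\DD, \DD'$ were arbitrary neighbors and $Y$ arbitrary, the mechanism is $\eps$-DP.

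I do not expect a genuine obstacle here — the argument is standard and short. The only point requiring a little care is the correct application of the reverse triangle inequality in the right direction (bounding the difference of absolute values by the absolute value of the difference), and noting that the bound holds uniformly in $t$ so that integration preserves it; one should also remark that the same argument with the roles of $\DD$ and $\DD'$ swapped is not needed since $Y$ and the neighbor pair are arbitrary, but it is worth stating that measurability of $Y$ suffices for the integral manipulation.
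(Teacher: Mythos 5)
Your proof is correct: the pointwise density-ratio bound via the triangle inequality and the definition of global sensitivity, followed by integration over a measurable event $Y$, is exactly the standard argument for the Laplace mechanism. The paper itself states this theorem without proof (citing Dwork et al.), and your argument matches the canonical proof from that reference, so there is nothing to reconcile.
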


We proceed to bound the sensitivity of an acceptance criterion. Since most of these criteria are based on taking the maximum over multiple queries, we first provide a general bound on the global sensitivity of a maximum query.

\begin{proposition}
    \label{prop:max}
    Let $(f_i: \XX^n \rightarrow \R)_{i=1}^k$ be a collection of functions with global sensitivity values $(\Delta_i)_{i=1}^k$, respectively. Then the global sensitivity of the maximum function $f_{\max}(\DD) = \max_{i} f_i(\DD)$ is upper bounded by $\Delta_{\max} = \max_{i} \Delta_i$.

\begin{proof}
 Let $\DD_1$, $\DD_2$ be neighbor datasets. Let $f_i$ and $f_j$ be the functions that achieve the maximum value on  $\DD_1$, $\DD_2$, respectively (if more than one function achieve it, choose arbitrary). In other words, $f_{\max}(\DD_1) = f_i(\DD_1)$ and $f_{\max}(\DD_2) = f_j(\DD_2)$. Without loss of generality $f_{\max}(\DD_1) \geq f_{\max}(\DD_2)$. Then

 \begin{align*}
    |f_{\max}(\DD_1) - f_{\max}(\DD_2)| 
    = |f_i(\DD_1) - f_j(\DD_2)| 
    \leq |f_i(\DD_1) - f_i(\DD_2)|
    \leq \Delta_i \leq \Delta_{\max}
 \end{align*}
    
 We get the first inequality because $f_j$ achieve the maximal value on $\DD_2$, and the second inequality---by the definition of the global sensitivity $\Delta_i$ of a function $f_i$. We finish the proof by taking the maximum of the global sensitivities over all $i \in [n]$, which is exactly $\Delta_{\max} = \max_{i} \Delta_i$.
 
\end{proof}
\end{proposition}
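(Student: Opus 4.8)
The plan is to reduce to a single pair of neighboring datasets and then exploit the fact that the index achieving the maximum on one of the two datasets can be reused as a (possibly suboptimal) witness on the other. Fix neighbors $\DD,\DD'\in\XX^n$. Since the quantity $|f_{\max}(\DD)-f_{\max}(\DD')|$ is symmetric in $\DD,\DD'$, I would assume without loss of generality that $f_{\max}(\DD)\geq f_{\max}(\DD')$, and let $i^\star$ be any index with $f_{i^\star}(\DD)=f_{\max}(\DD)$.

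First I would bound $f_{\max}(\DD)$ from above by chaining two facts: by the definition of the global sensitivity $\Delta_{i^\star}$ of $f_{i^\star}$ we have $f_{i^\star}(\DD)\leq f_{i^\star}(\DD')+\Delta_{i^\star}$, and by maximality $f_{i^\star}(\DD')\leq f_{\max}(\DD')$. Hence
\[
  f_{\max}(\DD)=f_{i^\star}(\DD)\leq f_{i^\star}(\DD')+\Delta_{i^\star}\leq f_{\max}(\DD')+\Delta_{\max}.
\]
Combined with the reduction assumption this gives $|f_{\max}(\DD)-f_{\max}(\DD')|=f_{\max}(\DD)-f_{\max}(\DD')\leq\Delta_{\max}$. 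Taking the maximum over all neighboring pairs $\DD,\DD'$ then yields $\Delta_{f_{\max}}\leq\Delta_{\max}$, which is the claim.

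The one point to be careful about is to never compare $f_i$ evaluated on $\DD$ with $f_j$ evaluated on $\DD'$ for two \emph{different} indices $i\neq j$: such a cross-difference is not controlled by any single $\Delta_i$. The fix is exactly the step above — keep the common index $i^\star$ on both datasets and use the maximality of $f_{\max}(\DD')$ only to move from $f_{i^\star}(\DD')$ up to $f_{\max}(\DD')$, in the direction that helps. I do not anticipate a real obstacle here; morally the statement is just the fact that $\max:\R^k\to\R$ is $1$-Lipschitz in the $\ell_\infty$ norm, applied to the tuples $(f_i(\DD))_{i=1}^k$ and $(f_i(\DD'))_{i=1}^k$, whose coordinatewise distance is at most $\Delta_{\max}$; the only work is spelling out the WLOG and the choice of $i^\star$.
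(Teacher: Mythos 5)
Your proof is correct and uses essentially the same idea as the paper's: take the maximizing index on the dataset where the max is larger, use its global sensitivity once, and invoke maximality on the other dataset to complete the chain. The only cosmetic difference is that the paper first writes the cross-index difference $|f_i(\DD_1)-f_j(\DD_2)|$ and then removes it, whereas you avoid it from the start.
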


In the following subsections, we present the acceptance criteria. Specifically, for each criterion, we define the error measure $\EE$, the threshold $T$, and how to compute the error measure with differential privacy.

\subsubsection{Maximal error of marginals}
\label{sec:components-acceptance-criteria-marginals}

We consider two criteria for capturing the maximal error of marginals. The first is the \textbf{maximal absolute error} of all $k$-way marginal counts, with the following error measure

\[
    \EE_{\text{abs}}(\DR, \DS) \defeq \frac{1}{n} \max_{k \in [d]} \max_{q \in M_k} | q(\DR) - q(\DS) |
\]

and threshold $T_{\text{abs}} = 0.01$ as set by the stakeholders.

It is anticipated that counts and frequencies will be the most valuable queries to the users, so this is the most important acceptance criterion. This measure is a maximum over all $q \in M_k$ for all $k \in [d]$. Each query $q$ has low sensitivity of $1$, so the overall global sensitivity is $1/n$. The error measure is released using the Laplace mechanism.

The second acceptance criterion is the \textbf{maximal relative error} of 1-marginal counts with smoothing, with the following error measure

\[
    \EE_{\text{rel}}(\DR, \DS) \defeq \max_{q \in M_1} \max \Bigg\{ \frac{\hat{q}(\DR)}{\hat{q}(\DS)}, \frac{\hat{q}(\DS)}{\hat{q}(\DR)} \Bigg\}
\]

where $\hat{q}(\DD) \defeq q(\DD) + 1$ is the smoothed count query of a count query $q$. The stakeholders set the threshold to be $T_\text{rel} = 2$.

However, as we will see (Proposition~\ref{prop:gs-1way}), $\EE_{\text{rel}}$ has high global sensitivity. 
To mitigate this, we use a common technique for reducing the sensitivity of a function: We clip its value to a bounded interval. Specifically, we modify the definition of the error measure $\EE_{\text{rel}}$ to include clipping of each component into the interval $[1, \lambda]$ for $\lambda > 1$ as follows

\[
\EE^{\lambda}_{\text{rel}}(\DR, \DS) \defeq \max_{q \in M_1} \max \Bigg\{ \Big[\frac{\hat{q}(\DR)}{\hat{q}(\DS)} \Big]^\lambda_{1}, \Big[\frac{\hat{q}(\DS)}{\hat{q}(\DR)} \Big]^\lambda_{1} \Bigg\},
\]
where $[\,\cdot\,]^U_L$ is the clipping function. Given that the two ratios are reciprocals of each other, to find the maximum the relative error, it is sufficient to confine them within the interval $[1, \lambda]$ and not $[1/\lambda, \lambda]$. 

We also adopt a threshold of $T_\text{rel} = 1.4$, which is slightly more conservative than what was set by the stakeholders. This adjustment, calculated based on the Laplace distribution CDF, aims to decrease the probability of a false positive to 5\%, where the criterion is incorrectly deemed met due to the noise added for differential privacy.

Prior to discussing the sensitivity analysis of the two variations of the error measure, we briefly present its significance in terms of data quality guarantees. Assuming ``nice'' univariate distributions (for instance, those characterized by a single mode and diminishing tails), and in conjunction with the first criterion, the 1-way maximal relative error acceptance criterion essentially imposes a bound on the error in the tails of these distributions between the transformed and the release-candidate datasets.

\paragraph{Unclipped flavor $\EE_{\text{rel}}$ has high global sensitivity.}

\begin{proposition}
    \label{prop:gs-1way}
    Assuming $\DS$ is public and fixed, the global sensitivity of $\EE_{\text{rel}}(\DR, \DS)$ with respect to $\DR$ is $\frac{s_{\max} + 1}{2}$ where $s_{\max} = \max_{q \in M_1} q(\DS)$.

\begin{proof}
    Based on Proposition~\ref{prop:max} it is sufficient to find the maximal global sensitivity of
    \[
        q'(\DR, \DS) = \frac{\hat{q}(\DR)}{\hat{q}(\DS)}
    \]
    and
    \[
        qq''(\DR, \DS) = \frac{\hat{q}(\DS)}{\hat{q}(\DR)}
    \]
    for all $q \in M_1$.
    
    First, we observe that $q'$ is an affine transformation of the count query $q(\DR)$, so $q'$ has low global sensitively equals to
    \[
        \max_{q \in M_1} 1/\hat{q}(\DS) = \frac{1}{s_{\min} + 1}
    \]
    where $s_{\min} = \min_{q \in M_1} q(\DS)$.
    
    Second, for any $\hat{q}(\DS)$, the sensitivity of $q''$ is maximized when $\hat{q}(\DR)$ is minimal, i.e.  $\hat{q}(\DR) = 1$, because
    
    \begin{align*}
         \Delta_{q''} &= \max_{\hat{q}(\DS), \hat{q}(\DR)} \Big | \frac{\hat{q}(\DS)}{\hat{q}(\DR)} - \frac{\hat{q}(\DS)}{\hat{q}(\DR) + 1} \Big| \\
         &= \Bigg( \max_{\hat{q}(\DS)} \hat{q}(\DS) \Bigg) \Bigg(\max_{\hat{q}(\DR)} \Big | \frac{1}{\hat{q}(\DR)} - \frac{\hat{q}(\DS)}{\hat{q}(\DR) + 1} \Big| \Bigg) \\
         &= (s_{\max} + 1) \cdot \Big | \frac{1}{1} - \frac{1}{1 + 1} \Big|  \\
         &= \frac{s_{\max} + 1}{2},
    \end{align*}
    
    where $s_{\max} = \max_{q \in M_1} q(\DS)$.

    Taking the maximum between $q$ and $q''$, we get that the global sensitivity is $\frac{s_{\max} + 1}{2}$.
\end{proof}
\end{proposition}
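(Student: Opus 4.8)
The plan is to bound the sensitivity of the outer maximum by the sensitivities of its individual arguments via Proposition~\ref{prop:max}, and then evaluate those arguments one family at a time. Since $\DS$ is public and fixed, $\EE_{\text{rel}}(\cdot,\DS)$ is exactly the pointwise maximum, over $q \in M_1$, of the two functions $q'(\DR) = \hat q(\DR)/\hat q(\DS)$ and $q''(\DR) = \hat q(\DS)/\hat q(\DR)$. By Proposition~\ref{prop:max} it therefore suffices to compute $\Delta_{q'}$ and $\Delta_{q''}$ for every $q$ and take the overall maximum of these values.

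First I would handle the easy family $q'$. For any two neighboring datasets $\DR, \DR'$ a single $1$-way count query changes by at most one, so $q'$, being the affine map $t \mapsto (t+1)/\hat q(\DS)$ applied to the count $q(\DR)$, has global sensitivity exactly $1/\hat q(\DS)$. Maximizing over $q \in M_1$ amounts to minimizing $\hat q(\DS) = q(\DS)+1$, which yields $1/(s_{\min}+1)$ where $s_{\min} = \min_{q \in M_1} q(\DS)$; in particular this is at most $1$.

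The crux is the family $q''$, which is not affine in the count. Here I would reparametrize by setting $a = \hat q(\DS)$ (a fixed positive integer for each query) and $x = \hat q(\DR) \in \{1, 2, \dots, n+1\}$, so that $q''(\DR) = a/x$ and passing to a neighbor changes $x$ by at most one. Since $x \mapsto 1/x$ is decreasing and convex on the positive integers, the largest gap $|a/x - a/x'|$ over admissible neighbors $x, x' \ge 1$ with $|x - x'| \le 1$ is attained at the smallest possible pair $\{x, x'\} = \{1, 2\}$ — note $x = 0$ is excluded because $\hat q \ge 1$ — giving $a\cdot|1 - \tfrac12| = a/2$. Hence $\Delta_{q''} = \hat q(\DS)/2$, and maximizing over $q \in M_1$ (i.e., maximizing $\hat q(\DS)$) gives $(s_{\max}+1)/2$ with $s_{\max} = \max_{q\in M_1} q(\DS)$. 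I expect this monotonicity/convexity step — specifically the observation that the denominator $\hat q(\DR)$ can be pushed all the way down to its floor of $1$, so that a unit change is relatively enormous — to be the one place that needs care.

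Finally I would combine the two families: Proposition~\ref{prop:max} gives $\Delta_{\EE_{\text{rel}}} \le \max\{1/(s_{\min}+1),\ (s_{\max}+1)/2\} = (s_{\max}+1)/2$, using $s_{\max}\ge 1$. To obtain the matching lower bound, and hence the claimed equality, I would exhibit neighboring datasets on which this worst case is realized: take $\DR$ in which the most popular $\DS$-value of some attribute never appears (so the corresponding $\hat q(\DR) = 1$), with the other records arranged so that no competing ratio dominates, and let $\DR'$ flip one record to that value (so $\hat q(\DR') = 2$); then $\EE_{\text{rel}}$ falls from $s_{\max}+1$ to $(s_{\max}+1)/2$. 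When formalizing, the only bookkeeping point worth checking is that ``no competing ratio dominates'', which one can ensure by spreading the remaining records appropriately; if one only wants the upper bound actually used to calibrate the Laplace mechanism, this direction can be skipped entirely.
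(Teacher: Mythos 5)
Your proposal follows essentially the same route as the paper's proof: reduce via Proposition~\ref{prop:max} to the per-query functions $q'$ and $q''$, note that $q'$ is affine in the count with sensitivity $1/(s_{\min}+1)$, and observe that the worst case for $q''$ occurs when $\hat{q}(\DR)$ hits its floor of $1$, giving $(s_{\max}+1)/2$ after maximizing over $q \in M_1$. Your extra matching-lower-bound construction goes beyond what the paper actually proves (its use of Proposition~\ref{prop:max} yields only the upper bound, despite the statement's phrasing as an equality), and it is indeed the delicate step you flag; it is not needed for the paper's purpose, which is that the sensitivity bound used to calibrate the Laplace mechanism would be $\Theta(n)$ and hence drown the signal.
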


The proposition shows that the global sensitivity of $\EE_{\text{rel}}$ is of the same magnitude as the criterion's values, i.e., $\Theta(n)$. Therefore, adding Laplace noise would wipe out any signal and render the result useless.

\paragraph{Clipped-flavor $\EE^{\lambda}_{\text{rel}.}$ has low global sensitivity.}
Recall that we apply clipping to each term of $\EE^{\lambda}_{\text{rel}}$ into the interval $[1, \lambda]$ for $\lambda > 1$.

\[
\EE^{\lambda}_{\text{rel}}(\DR, \DS) \defeq \max_{q \in M_1} \max \Bigg\{ \Big[\frac{\hat{q}(\DR)}{\hat{q}(\DS)} \Big]^\lambda_{1}, \Big[\frac{\hat{q}(\DS)}{\hat{q}(\DR)} \Big]^\lambda_{1} \Bigg\}
\].

Recall also that a release-candidate dataset passes an acceptance criterion only if its result is smaller than the threshold. Because this acceptance criterion is designed to control the 1-way marginal tails, we are interested only in lower values of this ratio, which are smaller than the threshold $T_{\text{rel}} = 2$. Therefore the loss of information due to clipping is insignificant for this purpose as long as $\lambda \geq T_{\text{rel}}$.

The following proposition shows that $\EE^{\lambda}_{\text{rel}}$ has much lower sensitivity than $\EE_{\text{rel}}$.

 \begin{proposition}
    \label{prop:1way-gs-clip}
     Fix $\lambda > 1 + \frac{1}{s_{\max}}$. The global sensitivity of the acceptance criterion for the $\lambda$-clipped maximal relative error of 1-marginal frequencies $\EE^{\lambda}_{\text{rel}}$ is $\Delta = \max \Big\{ \frac{1}{s_{\min} + 1}, \lambda - \frac{1}{\frac{1}{\lambda} + \frac{1}{s_{\min} + 1}} \Big\}$, where $s_{\min} = \min{q \in M_1} q(\DS)$.

 \begin{proof}

     Based on Proposition~\ref{prop:max} it is sufficient to find the maximal global sensitivity of
    \[
        q'(\DR, \DS) = \Big[ \frac{\hat{q}(\DR)}{\hat{q}(\DS)}\Big]^\lambda_{1}
    \]
    and
    \[
        qq''(\DR, \DS) = \Big[ \frac{\hat{q}(\DS)}{\hat{q}(\DR)} \Big]^\lambda_{1}.
    \]

    Let $\DR_1, \DR_2$ be two neighbor datasets. Assume that they differ by one on the count query $q$, without loss of generality $\hat{q}(\DR_2) = \hat{q}(\DR_1) + 1$. Recall that in the evaluation step we consider $\DS$ fixed and public. Denote $s = \hat{q}(\DS)$ and $r = \hat{q}(\DR_1)$; therefore, $\hat{q}(\DR_2) = r + 1$.

    We define the functions $f$ and $g$ given $r$ and $s$ corresponding to the sensitivity of $q'$ and $q''$, respectively:

    \[
    f(r, s) = \Big[ f'(r, s) \Big]^\lambda_{1} - \Big[ f''(r, s)\Big]^\lambda_{1}\text{ where } f'(r, s)=  \frac{r + 1}{s}  \text{ and } f''(r, s) = \frac{r}{s}
    \]

and

    \[
    g(r, s) = \Big[ g'(r, s) \Big]^\lambda_{1} - \Big[ g''(r, s) \Big]^\lambda_{1}\text{ where } g'(r, s)= \frac{s}{r}  \text{ and } g''(r, s) =\frac{s}{r + 1} 
    \]

    Observe that in both cases, the first term ($f', g'$) is always larger or equal to the second term ($f'', g''$) for $s, r \geq 1$, so functions $f$ and $g$ are always non-negative in that domain. Moreover, $g$ is monotonously decreasing in $r$.
    
    Our goal is to find the supremum values of these functions which is equal to the global sensitivity of $\EE^{\lambda}_{\text{rel}}$.

    \subparagraph{Maximizing $f$.} We have five cases to consider based on whether $f'$ and $f''$ hit the upper or lower clipping boundaries. Recall that $f'(r, s) \geq f''(r, s)$.

    \begin{enumerate}
        \item If $f'(r, s), f''(r, s) \in [1, \lambda]$, then we are in the same setting of Proposition~\ref{prop:gs-1way}, and $\max f(r, s) = \frac{1}{s_{\min} +1}$.
        
        \item If $f'(r, s) = f''(r, s)$, and they are both $> \lambda$ or  $< 1$, then $f(r, s) = 0$.
        
        \item If $f'(r, s) > \lambda$, but $f''(r, s) \in [1, \lambda]$, so $\frac{r+1}{s} > \lambda$ and $\frac{r}{s} \leq \lambda$. Consequently, $r \in (\lambda s - 1, \lambda s]$, so $\sup_{r,s} f(r, s) = \sup_{r,s} \{ \lambda - \frac{r}{s} \}= \sup_{s} \{\lambda - \frac{\lambda s - 1}{s} \}= \lambda - \lambda + \sup_{s} \frac{1}{s} = \frac{1}{s_{\min} + 1}$.

        \item If $f'(r, s) \in [1, \lambda]$, but $f''(r, s) < 1$, so $\frac{r+1}{s} \geq 1$ and $\frac{r}{s} < 1$. Consequently, $r = s - 1$, so $\max_{r,s} f(r, s) = \max_{r,s} \{\frac{r + 1}{s} - 1 \} = \max_{s} \{\frac{s-1+1}{s} - \frac{1}{\lambda} \} = 1 - 1 = 0$.

        \item If $f'(r, s) > \lambda$ and $f''(r, s) < 1$. , so $\frac{r+1}{s} > \lambda$ and $\frac{r}{s} < 1$. Consequently, $\lambda s - 1 < r < s$, so $\lambda < 1 + 1/s$, in contradiction to the requirement on $\lambda > 1 + 1/s_{\max}$. Hence this case is invalid.
    \end{enumerate}

    Therefore, we conclude that the $\Delta_{q'} = \max_{r,s} f(r, s) \leq \frac{1}{s_{\min} + 1}$ for all cases, and this is the global sensitivity of $q'$.
    \subparagraph{Maximizing $g$.} We have five cases to consider based on whether $g'$ and $g''$ hit the upper or lower clipping boundaries. Recall that $g'(r, s) \geq g''(r, s)$.

    \begin{enumerate}
        \item If $g'(r, s), g''(r, s) \in [1, \lambda]$, then $\frac{s}{r} \in [1, \lambda]$ and $\frac{s}{r+1} \in [1, \lambda]$, so we have the constrain $r \in [s/\lambda, s - 1]$. $g$ is monotonically decreasing in $r$, so it is maximized with $r = s/\lambda$. Therefore, $\max_{r,s} g(r, s) = \max_s \big\{ \frac{s}{s/\lambda} - \frac{s}{s/\lambda + 1} \big\} = \max_s \big\{ \lambda - \frac{1}{1/s + 1/\lambda} \big\} = \lambda - \frac{1}{\frac{1}{s_{\min} + 1} + \frac{1}{\lambda}}$.

        \item If $g'(r, s) = g''(r, s)$, and they are both $> \lambda$ or  $< 1$, then $g(r, s) = 0$.

        \item If $g'(r, s) > \lambda$, but $g''(r, s) \in [1, \lambda]$, so $\frac{s}{r} > \lambda$ and $\frac{s}{r + 1} \leq \lambda$. Consequently, $r \in [s/\lambda - 1, s/\lambda )$. We want $\max_{r,s} g(r, s) = \lambda - \min_{r, s} g''(r, s)$. For each $s$, $g''(r, s)$ is monotonously decreasing, so it is minimized with $r = s / \lambda$, which gives us the same expression as in (1).
        
        \item If $g'(r, s) \in [1, \lambda]$, but $g''(r, s) < 1$, so $\frac{s}{r} \geq 1$ and $\frac{s}{r+1} < 1$. Consequently, $r = s$, so $\max_{r,s} g(r, s) = \max_{r,s} \frac{s}{r} - 1 = \max_{s} \frac{s+1}{s} - 1 = 1 + \max_{s} \frac{1}{s} - 1 = \max_{s} \frac{1}{s} = \frac{1}{s_{\min} + 1}$.

        \item If $g'(r, s) > \lambda$ and $g''(r, s) < 1$. , so $\frac{s}{r} > \lambda$ and $\frac{s}{r + 1} < 1$. Consequently, $s - 1 < r < s/\lambda$, so $\lambda < 1 + \frac{1}{s-1}$, in contradiction to the requirement on $\lambda > 1 + 1/s_{\max}$. Hence this case is invalid.
    \end{enumerate}

    Therefore, we conclude that the $\Delta_{q''} = \max_{r,s} g(r, s) = \Big\{ \lambda - \frac{1}{\frac{1}{s_{\min} + 1} + \frac{1}{\lambda}}, \frac{1}{s_{\min} + 1} \Big\}$ for all cases, and this is the global sensitivity of $q''$.

    In conclusion, we showed that the global sensitivity of $\EE^{\lambda}_{\text{rel}}$ is $\Delta = \max \{\Delta_{q'}, \Delta_{q''} \} \leq \max \Big\{ \frac{1}{s_{\min} + 1}, \lambda - \frac{1}{\frac{1}{\lambda} + \frac{1}{s_{\min} + 1}} \Big\}$.
\end{proof}
\end{proposition}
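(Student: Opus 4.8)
The plan is to peel the two outer maxima off $\EE^{\lambda}_{\text{rel}}$ using Proposition~\ref{prop:max} and then solve a small two‑variable optimization by a case analysis on the clipping. Concretely, $\EE^{\lambda}_{\text{rel}}$ is the maximum over $q\in M_1$ of the two functions $q'(\DR,\DS)=\big[\hat q(\DR)/\hat q(\DS)\big]^{\lambda}_{1}$ and $q''(\DR,\DS)=\big[\hat q(\DS)/\hat q(\DR)\big]^{\lambda}_{1}$, so by Proposition~\ref{prop:max} it suffices to upper bound the global sensitivity of each of these $2|M_1|$ functions in the private argument $\DR$ (with $\DS$ public and fixed) and take the largest. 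Since a one‑way marginal $q\in M_1$ is a single‑cell count, replacing one record changes $q(\DR)$, hence the smoothed count $\hat q(\DR)$, by at most $1$; writing $s\defeq\hat q(\DS)$ (a constant, $s\ge s_{\min}+1$) and $r\defeq\hat q(\DR)\ge 1$, and taking without loss of generality the neighbour to have smoothed count $r+1$, the two sensitivity contributions become
\[
  f(r,s)\defeq\Big[\tfrac{r+1}{s}\Big]^{\lambda}_{1}-\Big[\tfrac{r}{s}\Big]^{\lambda}_{1},
  \qquad
  g(r,s)\defeq\Big[\tfrac{s}{r}\Big]^{\lambda}_{1}-\Big[\tfrac{s}{r+1}\Big]^{\lambda}_{1},
\]
both nonnegative because clipping is monotone and the first argument dominates the second. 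The goal is then $\max\{\sup_{r,s}f,\ \sup_{r,s}g\}$.

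To bound $\sup f$ I would split into cases according to where $\tfrac{r+1}{s}$ and $\tfrac{r}{s}$ fall relative to the window $[1,\lambda]$ (five feasible configurations, using $\tfrac{r+1}{s}\ge\tfrac{r}{s}$): if both land inside the window, clipping is inactive and $f=\tfrac1s\le\tfrac1{s_{\min}+1}$; if both clip to the same boundary, $f=0$; and in each of the two remaining ``one‑sided'' cases $f$ is a monotone function of $r$ whose supremum is again $\le\tfrac1{s_{\min}+1}$ (the worst $s$ being $s_{\min}+1$, since $1/s$ decreases). This gives $\Delta_{q'}\le\tfrac1{s_{\min}+1}$. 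For $g$ the binding case is $\tfrac{s}{r},\tfrac{s}{r+1}\in[1,\lambda]$, which forces $r\in[s/\lambda,\,s-1]$; as $g$ is decreasing in $r$ it is maximised at $r=s/\lambda$, yielding $g=\lambda-\tfrac1{1/\lambda+1/s}$, in turn maximised at $s=s_{\min}+1$ (both summands decrease in $s$); the remaining one‑sided case reduces to the same expression or to $\tfrac1{s_{\min}+1}$, and the same‑side‑clip case gives $0$. Hence $\Delta_{q''}\le\max\big\{\tfrac1{s_{\min}+1},\ \lambda-\tfrac1{1/\lambda+1/(s_{\min}+1)}\big\}$, and Proposition~\ref{prop:max} then delivers the stated value as an upper bound on the global sensitivity of $\EE^{\lambda}_{\text{rel}}$ (which is all the Laplace mechanism needs; tightness follows from the $(r,s)$ realising these extrema).

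I expect the main obstacle to be the bookkeeping of the case analysis, and in particular excluding the ``straddling'' configuration in which the larger ratio clips up to $\lambda$ while the smaller clips down to $1$: there $f$ (or $g$) would equal $\lambda-1$, which can exceed the claimed bound. The defining inequalities of that case force $\lambda<1+\tfrac1s$ for the relevant $s\le s_{\max}+1$, which contradicts the hypothesis $\lambda>1+\tfrac1{s_{\max}}$; it is also ruled out by integrality of the smoothed counts, since $\tfrac rs<1$ forces $r\le s-1$ and hence $\tfrac{r+1}{s}\le1<\lambda$. One must also check, case by case, that the feasible $(r,s)$‑region is nonempty at $s=s_{\min}+1$ and locate each case's optimum at an interior point or at a clipping boundary; the rest is routine algebra together with the monotonicity remarks above.
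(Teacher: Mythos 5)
Your proposal is correct and follows essentially the same route as the paper: reduce via Proposition~\ref{prop:max} to the two clipped ratio terms, analyze $f(r,s)$ and $g(r,s)$ by cases on which clipping boundary is active, use the hypothesis $\lambda > 1 + 1/s_{\max}$ to exclude the straddling case, and take the worst case at $s = s_{\min}+1$. The only notable addition is your alternative integrality argument for ruling out the straddling configuration, which is a nice but inessential supplement.
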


\paragraph{Setting the $\lambda$ and $T_{\text{rel}}$ parameters.}
Recall that a release-candidate dataset pass an accepted criterion only if the result of the error measure is lower than an predefined threshold. The primary stakeholder set the threshold for this acceptance criteria at $T_{\text{rel}} = 2$. The acceptance criteria $\EE^{\lambda}_{\text{rel}}$ is released via the Laplace mechanism. How $\lambda$ should be chosen? 

Note that the global sensitivity decreases if $\lambda$ is lower. Because the release-candidate dataset does not pass the acceptance criterion if the measure value is above the threshold, we could set $\lambda = T_{\text{rel}}$. However, in that case, all unacceptable values of the criterion collapse to a single value, namely $T_{\text{rel}}$. Consider a false positive event where the acceptance criteria is not met with the true value of the error measure, it is met when the measure is calculated with the noise addition of the Laplace mechanism. If $\lambda = T_{\text{rel}}$, then the false positive event is devastating because the true measure result could be much higher than $T_{\text{rel}}$.

Let's apply Proposition~\ref{prop:1way-gs-clip}. Based on the experiments with the public data (see Section~\ref{sec:intro-public-data}), we conservatively assume that $s_{\min} = 50$. For example, if the clipping factor is $\lambda = 2.5$, the global sensitivity is $\Delta \approx 0.119$. With $\eps = .3$, the standard deviation of the additive noise is $\sigma = \sqrt{2} \Delta / \eps \approx 0.561$. This is quite large noise compared to the threshold, and there is a high probability of a false positive event. For that reason, we proposed to drop this criterion, but the primary stakeholder assured us of its importance for data users. We could spend more privacy loss budget, but $\eps = .3$ seems already quite high for this purpose. 

To reduce sensitivity, we decided to set $\lambda = 2$. In order to avoid a false positive collapse, we opted to use a more conservative threshold $T_{\text{rel}}' < \lambda = T_{\text{rel}}$. The false positive probability $p$ is the probability that an acceptance criterion result that is clipped to $\lambda$ would be shifted to a value lower than the threshold  $T_{\text{rel}}' < \lambda$ due to the additive differentially private Laplace mechanism. We can calculate the appropriate threshold $T_{\text{rel}}'$ according to the desired $p$ with the following statement.

\begin{proposition}
    Let $\lambda > 1, \eta > 0$. Let $Z = \lambda + \mathrm{Lap}(\eta)$ be a random variable and $p = \Pr[Z \leq T']$\, for $T' \leq \lambda$. Then

    \[
    T' = \lambda + \eta \ln (2p).
    \]

\begin{proof}

    According the definition of the CDF of the Laplace distribution, we have
\[
    p = \Pr[Z \leq T'] = \Pr[\lambda + \mathrm{Lap}(\frac{\Delta}{\eps}) \leq T']
    = \Pr[\mathrm{Lap}(\eta) \leq T' - \lambda]
    = \frac{1}{2} \exp\Big(\frac{T' - \lambda}{\eta}\Big).
\]
By rearranging the terms, we get the statement. 
\end{proof}
\end{proposition}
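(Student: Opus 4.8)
The plan is to compute $p$ directly from the cumulative distribution function of the Laplace law and then invert the resulting expression for $T'$. Recall that if $X \sim \mathrm{Lap}(\eta)$ has location $0$ and scale $\eta > 0$, its CDF is piecewise: $\Pr[X \le t] = \tfrac{1}{2}\exp(t/\eta)$ for $t \le 0$, and $\Pr[X \le t] = 1 - \tfrac{1}{2}\exp(-t/\eta)$ for $t \ge 0$. Since $Z = \lambda + \mathrm{Lap}(\eta)$, the event $\{Z \le T'\}$ coincides with $\{\mathrm{Lap}(\eta) \le T' - \lambda\}$.

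First I would use the hypothesis $T' \le \lambda$ to note that $T' - \lambda \le 0$, so the left branch of the Laplace CDF is the relevant one. This yields
\[
p \;=\; \Pr[Z \le T'] \;=\; \Pr\bigl[\mathrm{Lap}(\eta) \le T' - \lambda\bigr] \;=\; \tfrac{1}{2}\exp\!\Bigl(\frac{T' - \lambda}{\eta}\Bigr).
\]
Then I would rearrange: multiply both sides by $2$, take the natural logarithm (legitimate since $0 < 2p \le 1$, which also forces $\ln(2p) \le 0$, consistent with $T' \le \lambda$), and obtain $\ln(2p) = (T' - \lambda)/\eta$; solving for $T'$ gives $T' = \lambda + \eta \ln(2p)$, as claimed.

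There is essentially no obstacle here — the argument is a one-line CDF evaluation followed by algebra — and the only step warranting a sentence of justification is confirming that the argument $T' - \lambda$ falls in the non-positive piece of the CDF, which is exactly what the assumption $T' \le \lambda$ guarantees. As a sanity check one may observe that $p = \tfrac{1}{2}$ returns $T' = \lambda$, the median of $Z$, and $p \to 0^{+}$ sends $T' \to -\infty$, both as expected.
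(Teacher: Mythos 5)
Your proof is correct and follows essentially the same route as the paper's: evaluate the Laplace CDF on the non-positive branch (justified by $T' \leq \lambda$) to get $p = \tfrac{1}{2}\exp\bigl((T'-\lambda)/\eta\bigr)$, then rearrange. You are slightly more careful than the paper in explicitly noting why the left branch of the CDF applies and that $0 < 2p \leq 1$ makes the logarithm legitimate.
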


For $p = .05$, $s_{\min}=50$, $\lambda = 2$, , it follows that $\Delta < 0.077$. Set $\eps=.3$. Hence $\eta = \Delta/\eps < .257 $, and we get that $T_{\text{rel}}' > 1.4$, which is the actual threshold we have used for this acceptance criterion. Note the values of $\Delta$ and $\sigma$ in Table~\ref{tab:ac} are slightly smaller because they are calculated from $s_{\min}$ taken from the release-candidate dataset.

\subsubsection{Maximal error in conditional means}
\label{sec:components-acceptance-criteria-means}

After frequency and count queries, measures of central tendency are the second most important considerations for users, according to our discussions with the stakeholders.

Medians, as opposed to means, are the preferred statistics for analyzing binned data like the released dataset. The stakeholders also expect the inclusion of median queries in data users' analysis. However, we chose to assess the release-candidate dataset using means for the following reasons. First, the mechanisms for computing means with differential privacy are simpler than those for medians. We favor simplicity, assuming these mechanisms (noise addition-based) are easier to explain to the stakeholders and the public. Second, in ``nice'' distributions, the error in a mean query between the transformed and release-candidate datasets often serves as an upper bound for the error in medians, since medians are less affected by extreme values. Indeed, experiments on the public data supported this assertion (Section~\ref{sec:intro-public-data}).

\textbf{Preliminaries.}
A conditional mean query $q_{a|b \leftarrow v}$ over column $a$ (averaging column) by value $v$ (group-by value) in column $b$ (group-by column) calculates the mean of column $a$ at value $v$ in the grouped-by column $b$. If $b$ is assigned the value of the ``none'' column $\bot$, then $q_{a|\bot}$ represents the mean of column $a$ without any grouping. Let $C(a)$ be a set of predefined group-by columns for an averaging column $a$, ncluding the none column $\bot$, and let $V(b)$ be a set of all values of column $b$. The maximal error of column $a$’s conditional mean acceptance criterion has the following error measure: 

\[
    \EE_{\bar{a}}(\DR, \DS) \defeq \max_{b \in C(a)} \max_{v \in V(b)} | q_{a|b \leftarrow v}(\DR) - q_{a|b \leftarrow v}(\DS) |.
\]

\begin{minipage}{\textwidth}
    Three acceptance criteria of the conditional mean type are defined:
    
    \begin{table}[H]
    \centering
    \begin{tabular}{ll}
    \toprule
    Averaging column $a$                     & Group-by columns $C(a)$  \\
    \midrule
    \texttt{parity}         &  \{\texttt{mother age}, $\bot$\}        \\
    \texttt{birth weight}   &  \{\texttt{sex}, \texttt{parity}, \texttt{gestation week}, \texttt{mother age}, $\bot$\}           \\
    \texttt{gestation week} &  \{\texttt{parity}, \texttt{mother age}, $\bot$\}
    \\
    \bottomrule
    \end{tabular}
    \end{table}
\end{minipage}

\paragraph{Releasing with differential privacy.}
To release \(\EE_{\bar{a}}(\DR, \DS)\) with differential privacy via the Laplace mechanism, an upper bound on its global sensitivity is required. According to Proposition~\ref{prop:max}, the sensitivity of the error measure \(\EE_{\bar{a}}(\DR, \DS)\) is bounded by the maximum of the sensitivities of the conditional means queries \(q_{a|b \leftarrow v}\) for each \(b \in C(a)\) and each \(v \in V(b)\).
The global sensitivity of a query \(q_{a|b \leftarrow v}\) is \((U_a - L_a)/n_{b \leftarrow v}\), where \(U_a, L_a\) are the upper and lower bounds of values in column \(a\), respectively, and \(n_{b \leftarrow v}\) is the number of records in \(\DR\) having the value \(v\) in column \(b\).
While \(U_a\) and \(L_a\) are publicly known, \(n_{b \leftarrow v}\) is not. Thus, without a lower bound on \(n_{b \leftarrow v}\), releasing the error measure via the Laplace mechanism is not feasible.

Note that \(q_{a|b \leftarrow v}\) is a quotient of two simpler queries: a sum (numerator; sum of column \(a\) for records with value \(v\) in column \(b\)) and a count (denominator; number of such records). A common strategy for releasing such composite queries is to independently compute the simpler queries with differential privacy (e.g., use the Laplace mechanism to separately compute the numerator and the denominator). Then, the results are divided, achieving privacy via post-processing.
This strategy, while straightforward, has two drawbacks.
First, noise added to the count (denominator) can significantly distort the final result.
Second, given that we are interested only in the maximum value among the \(q_{a|b \leftarrow v}\) queries for \(b \in C(a)\) and \(v \in V(b)\), this method expends privacy loss budget on calculations we do not intend to release.

Therefore, we adopt OpenDP's approach\footnote{\url{https://github.com/opendp/opendp/blob/c79ef2268bdc09cf733aba08b005b241ca63b365/docs/source/examples/unknown-dataset-size.ipynb}} for computing the mean of a dataset of unknown size by applying a resize transformation (Algorithm~\ref{alg:resize}) \cite{Gaboardi2020APF}.

\begin{algorithm}
\caption{Resized Mean (based on OpenDP implementation\protect\footnotemark)}
\label{alg:resize}
\begin{algorithmic}[1]
\Require{Dataset \(X = (x_1, x_2, \ldots, x_n) \in \R^n\), resize parameter \(m\), constant value \(w \in [L_a, U_a]\), randomness \(r\).}
\State \(s \leftarrow 0\)
\If{\(m = n\)}
    \State \(s \leftarrow \sum_{i=1}^{n} x_i\)
\ElsIf{\(m > n\)}
    \State \(s \leftarrow \sum_{i=1}^{n} x_i + (m - n) \times w\)
\ElsIf{\(m < n\)}
    \State Sample \(m\) indices \(\{i_1, \ldots, i_m\}\) from \(X\) using randomness \(r\)
    \State \(s \leftarrow \sum_{j=1}^{m} x_{i_j}\)
\EndIf
\State \Return \(s / m\).
\end{algorithmic}
\end{algorithm}

\footnotetext{\url{https://github.com/opendp/opendp/blob/c79ef2268bdc09cf733aba08b005b241ca63b365/rust/src/transformations/resize/mod.rs\#L46}}

Let \(X_\DR = (x_1, \ldots, x_{n_{b \leftarrow v}})\) be the values of column \(a\) for records with value \(v\) in column \(b\). Using this notation, \(q_{a|b \leftarrow v}(\DR) = \sum_{i=1}^{n_{b \leftarrow v}} x_i / n_{b \leftarrow v}\). We set a size \(m\), independent of \(\DR\), and ``resize'' \(X_\DR\) to create a new dataset with a publicly known size \(m\). The mean of column \(a\) is then computed as detailed in Algorithm~\ref{alg:resize}, and we denote the result by \(q_{a|b \leftarrow v}^{\widehat{m}_{b \leftarrow v}}(\DR)\), where $\widehat{m}_{b \leftarrow v}$ corresponds to the resize parameter. The following proposition outlines its global sensitivity.

\begin{proposition}[Global sensitivity of Algorithm~\ref{alg:resize}]
    \label{prop:resize}
    Let $X = (x_1, x_2, \ldots, x_n) \in \R^n$ be a private dataset of unknown size $n$. Let $L$ and $U$ be the publicly known lower and upper bounds of the values in dataset $X$, respectively. Let $\mathcal{A}(X; m, w, r)$ be the output of Algorithm~\ref{alg:resize} for dataset $X$, public resize parameter $m$, public constant value $w \in [L, U]$ and randomness $r$. Then for any $m,w,r$, the global sensitivity of $\mathcal{A}(\cdot; m, w, r)$ is $(U - L) / m$.
\end{proposition}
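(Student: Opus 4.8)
The plan is to observe that Algorithm~\ref{alg:resize} implicitly builds a \emph{virtual} dataset $\tilde X$ of size exactly $m$, every entry of which lies in $[L,U]$, and that $\mathcal{A}(X;m,w,r)$ is simply the average of $\tilde X$; the whole statement then reduces to showing that neighboring inputs $X,X'$ produce virtual datasets whose sums $S(\tilde X),S(\tilde X')$ differ by at most $U-L$, from which dividing by $m$ yields the claimed bound $(U-L)/m$. First I would unpack the three branches of the pseudocode: if $m=n$ then $\tilde X=X$; if $m>n$ then $\tilde X$ is $X$ together with $m-n$ extra copies of the constant $w\in[L,U]$; and if $m<n$ then $\tilde X$ is the sub-multiset of $X$ indexed by the $m$ indices that the randomness $r$ selects, noting that this index set depends only on $n$ and $r$, not on the values stored in $X$. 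In every branch all coordinates of $\tilde X$ lie in $[L,U]$, since the entries of $X$ do and $w$ does; in particular $\mathcal{A}(X;m,w,r)=\tfrac{1}{m}S(\tilde X)\in[L,U]$.

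Next, with $m,w,r$ fixed, I would take neighbors $X,X'$ and bound $|\mathcal{A}(X)-\mathcal{A}(X')|=\tfrac{1}{m}|S(\tilde X)-S(\tilde X')|$ by a case split mirroring the three branches. If $X,X'$ have a common size $n$ and differ in one record (the notion of neighbor in the paper's Definition), then: for $m=n$, $\tilde X$ and $\tilde X'$ differ only in that one coordinate, whose two values lie in $[L,U]$, so $|S(\tilde X)-S(\tilde X')|\le U-L$; for $m>n$ the padding is identical for both, so again only the single changed original entry contributes; and for $m<n$ the index set chosen by $r$ is the same for $X$ and $X'$, so $\tilde X,\tilde X'$ agree except possibly at the one sampled position hit by the change. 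Thus the global sensitivity is at most $(U-L)/m$. For the matching lower bound I would exhibit the neighbors $X=(L,\dots,L)$ and $X'=(U,L,\dots,L)$ with $n=m$ (no resizing), for which $|\mathcal{A}(X)-\mathcal{A}(X')|=(U-L)/m$ exactly, giving equality.

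The step I expect to require the most care is the regime relevant to the downstream conditional-means criterion, where replacing one record of $\DR$ can change the group sub-dataset $X_{\DR}$ not only by altering a value but by \emph{inserting or deleting} a single element, so $|X'|=|X|\pm 1$. One must still show $|S(\tilde X)-S(\tilde X')|\le U-L$: when both sizes lie on the padding side ($|X|,|X'|\le m$), inserting or deleting a value $y\in[L,U]$ changes the sum by exactly $y-w$, hence by at most $U-L$; but when the sizes straddle $m$, or both exceed $m$, one needs the sub-sampling rule induced by $r$ to be \emph{stable} under a single insertion, deletion, or change of element, i.e.\ to perturb the sampled multiset in at most one position. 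That stability is a property of the specific (OpenDP) index-selection procedure rather than a consequence of the pseudocode as written, and it genuinely matters here: the crude bound ``every sampled value lies in $[L,U]$'' only gives sensitivity $\le U-L$, a factor of $m$ too weak. I would therefore either cite the stability of OpenDP's resize transformation directly or make the index-selection rule explicit (for instance, $r$ fixes a total priority order on slots and one keeps the top $m$) so that the single-position-change property is manifest, and then conclude exactly as in the common-size case.
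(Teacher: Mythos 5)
Your proof is correct and follows essentially the same argument as the paper: a case split over the three branches ($m=n$, $m>n$, $m<n$), using that the fixed randomness $r$ selects the same indices for both neighbors, so the two sums of $m$ clamped-range values differ by at most $U-L$, giving $(U-L)/m$ after dividing by $m$. Your additional remarks---the matching lower-bound construction and the caution about insertion/deletion-type perturbations arising in the downstream conditional-means application---go beyond what the paper proves (it only establishes the upper bound under its bounded same-size neighboring definition), but they do not change the core argument.
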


\begin{proof}
    Let $X_1$ and $X_2$ be two neighboring datasets.
    Note that $\mathcal{A}(X; m, w, r)$ is always the mean of $m$ in the range $[L, U]$, even with sampling or imputation.
    Let $s_i$ be the sum of the $m$ chosen elements by the algorithm for dataset $X_i$.
    \begin{itemize}
        \item If $m = n$, then the sums $s_1$ and $s_2$ are taken over all the elements in the respective detests, so they differ by a single element.  Therefore $|s_1 - s_2| \leq U - L$.
        \item If $m > n$, then each sum include all the elements in the respective dataset and additional $m - n$ copies of $w$, so $|s_1 - s_2| \leq U - L$.
        \item If $m < n$, then the same indices $\{i_1, \ldots, i_m\}$ are sampled because the randomness $r$ is fixed. Therefore the sums might differ by at most one element, so $|s_1 - s_2| \leq U - L$.
    \end{itemize}
    In conclusion, $|\mathcal{A}(X_1; m, w, r) - \mathcal{A}(X_2; m, w, r)| = |s_1/m - s_2/m| \leq (U - L)/m$.
\end{proof}

Fix $m, w$, and let $\M_r(X) = \mathcal{A}(X; m, w, r) + \mathrm{Lap}\big(\frac{U-L}{m\eps}\big)$ be the $\eps$-differentially private release of $\mathcal{A}$ with the Laplace mechanism according to the global sensitivity from Proposition~\ref{prop:resize}. Ultimately, we would uniformly sample $r$ to compute $\mathcal{A}$, resulting the following mechanism: $\widehat{\M}(\cdot) = \mathpzc{M_r}(\cdot)$ where $r \sim R$.
The following proposition shows that $\widehat{\M}$ is also $\eps$-differentially private.

\begin{proposition}
    Let \(\{\mathpzc{M_r}\}_{r\in R}\) be a family of \(\eps\)-differentially private mechanisms, where \(\mathpzc{M_r}: \XX^n \rightarrow \mathcal{Y}\).  Then, the mechanism \(\widehat{\M}(\cdot) \defeq \mathpzc{M_r}(\cdot)\) where \(r \sim R\) is also an \(\eps\)-differentially private mechanism.
\end{proposition}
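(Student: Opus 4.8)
The plan is to unfold the definition of $\widehat{\M}$ via the law of total probability, conditioning on the value of the selection randomness $r$, apply the $\eps$-DP guarantee of each $\mathpzc{M_r}$ pointwise in $r$, and then average back over $r \sim R$. This is essentially the statement that a mixture of $\eps$-DP mechanisms whose mixing distribution is independent of the data is itself $\eps$-DP.

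First I would fix an arbitrary pair of neighboring datasets $\DD, \DD' \in \XX^n$ and an arbitrary (measurable) event $Y$ in the output space $\mathcal{Y}$. By construction, $\widehat{\M}$ produces its output by first drawing $r \sim R$, independently of the input dataset, and then outputting $\mathpzc{M_r}$ of that dataset. Hence, writing the total probability over the choice of $r$,
\[
    \Pr[\widehat{\M}(\DD) \in Y] = \E_{r \sim R}\big[ \Pr[\mathpzc{M_r}(\DD) \in Y] \big],
\]
and likewise for $\DD'$. Next, for each fixed $r$ in the support of $R$, the mechanism $\mathpzc{M_r}$ is $\eps$-DP, so $\Pr[\mathpzc{M_r}(\DD) \in Y] \leq e^{\eps}\, \Pr[\mathpzc{M_r}(\DD') \in Y]$. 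Since this inequality holds for every such $r$, it is preserved under taking the expectation over $r \sim R$, and the constant $e^{\eps}$ pulls outside, giving
\[
    \E_{r \sim R}\big[ \Pr[\mathpzc{M_r}(\DD) \in Y] \big] \leq e^{\eps}\, \E_{r \sim R}\big[ \Pr[\mathpzc{M_r}(\DD') \in Y] \big] = e^{\eps}\, \Pr[\widehat{\M}(\DD') \in Y].
\]
Chaining the two displays yields $\Pr[\widehat{\M}(\DD) \in Y] \leq e^{\eps}\, \Pr[\widehat{\M}(\DD') \in Y]$; as $\DD, \DD', Y$ were arbitrary, $\widehat{\M}$ is $\eps$-DP.

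I do not expect a genuine obstacle here — the computation is routine. The two points deserving a line of care are: (i) that the selection randomness $r$ is drawn independently of the input dataset, so that the \emph{same} mixing law $R$ appears for both $\DD$ and $\DD'$ (this independence is exactly what makes the interchange of the DP inequality with the expectation over $r$ legitimate); and (ii) a measurability remark ensuring that $r \mapsto \Pr[\mathpzc{M_r}(\DD) \in Y]$ is integrable, so the expectation/integral is well-defined (for a discrete $R$, as in our application where $r$ is sampled uniformly, this is immediate, and the expectation is just a finite sum). As an alternative presentation, one could view the pair $(r, \mathpzc{M_r}(\cdot))$ as the output of a single $\eps$-DP mechanism — with $r$ a data-independent coordinate — and then appeal to the post-processing proposition to discard $r$; but the direct total-probability argument above is cleaner and self-contained.
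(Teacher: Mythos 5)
Your proof is correct and matches the paper's argument: both condition on the selection randomness $r$ via total probability, apply the $\eps$-DP bound of each $\mathpzc{M_r}$ pointwise, and average back over $r \sim R$ (the paper writes this as a sum over the support of $R$; your version with events $Y$ and an expectation is the same argument, stated slightly more carefully). No further changes needed.
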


\begin{proof}
    The proof is straightforward: any property of $\mathpzc{M_r}$ that holds for all $r$  will hold for a random $r$ as well. To elaborate,  let \(\DD_1\) and \(\DD_2\) be two neighboring datasets.
    \begin{align*}
        \Pr[\widehat{\M}(\DD_1) = y]
        &= \sum_{r \in \mathrm{supp}(R)} \Pr[R = r]\Pr[\M_r(\DD_1) = y] \\
        &\leq e^\eps \sum_{r \in \mathrm{supp}(R)} \Pr[R = r]\Pr[\M_r(\DD_2) = y] \\
        &= e^\eps \sum_{r \in \mathrm{supp}(R)} \Pr[R = r] \Pr[\M_r(\DD_2) = y] \\
        &= e^\eps \Pr[\widehat{\M}(\DD_2) = y]
    \end{align*}
    where the inequality holds because \(\M_r(\cdot)\) is \(\eps\)-differentially private.
\end{proof}

\paragraph{Setting the resize parameter \(\widehat{m}_{b \leftarrow v}\).}
The impact of the resize transformation on the accuracy of mean estimation depends on the gap between \(\widehat{m}_{b \leftarrow v}\) and the actual size $n_{b \leftarrow v}$. When \(\widehat{m}_{b \leftarrow v}\) is smaller, the estimation error arises from sampling. If \(\widehat{m}_{b \leftarrow v}\) is larger, the imputed values introduce error. Note that these errors behave differently: the estimator for \(m < n_{b \leftarrow v}\) is unbiased, but the same cannot be said for the case \(\widehat{m}_{b \leftarrow v} > n_{b \leftarrow v}\) without additional information about column \(a\). From a quality perspective, it is generally preferable to set \(\widehat{m}_{b \leftarrow v}\) as close to the actual dataset size as possible; if an error must be made, it is better for \(\widehat{m}_{b \leftarrow v}\) to be smaller.

Thanks to the first acceptance criterion, which is based on the maximum absolute error of all \(k\)-marginal counts, we can derive a lower bound for the actual dataset size with high probability by obtaining this count from the release-candidate dataset. We denote this lower bound as \(\widehat{m}_{b \leftarrow v}\). Let $T_{\text{abs}}$ be the predefined threshold of the first acceptance criterion. If a release-candidate dataset passes the first acceptance criterion, then any \(k\)-way marginal count query has an absolute error of up to \(n \cdot T_{\text{abs}}\) when compared to the transformed dataset. Thus, we can estimate the number of records with value \(v\) in column \(b\) without spending any privacy loss budget, as the release-candidate dataset is considered public and fixed at this stage of our scheme. The chosen resize parameter for the query \(q_{a|b \leftarrow v}^{\widehat{m}_{b \leftarrow v}}(\DR)\) is then \(\widehat{m}_{b \leftarrow v} \defeq \max \{1, q_{b \leftarrow v}(\DS) - n \cdot T_{\text{abs}} \}\), where \(q_{b \leftarrow v}(\DS)\) represents the number of occurrences of value \(v\) in column \(b\) in the release-candidate dataset \(\DS\).

\paragraph{Putting it all together.}
We modify the acceptance criterion to include the resize transformation, 

\[
    \widehat{\EE}_{\bar{a}}(\DR, \DS) \defeq \max_{b \in C(a)} \max_{v \in V(b)} | q_{a|b \leftarrow v}^{\widehat{m}_{b \leftarrow v}}(\DR) - q_{a|b \leftarrow v}(\DS) |.
\]

Observe that $\widehat{\EE}_{\bar{a}}$ is a maximum taken over all $b \in C(a)$ and $v \in V(b)$. Therefore, we can apply Proposition~\ref{prop:max} to find the global sensitivity. The global sensitivity of an individual term defined by $(b, v)$ is $\Delta_{b \leftarrow v} = (U_a - L_a)/\widehat{m}_{b \leftarrow v}$ according to Proposition~\ref{prop:resize}, so

\[
    \Delta_{\max}
    = \max_{b,v} \Delta_{b \leftarrow v}
    = \max_{b,v} \frac{U_a - L_a}{\widehat{m}_{b \leftarrow v}}
    = \frac{U_a - L_a}{\min_{b,v} \widehat{m}_{b \leftarrow v}}.
\]

Note that $\min_{b,v} \widehat{m}_{b \leftarrow v}$ represents the minimal occurrence in the release-candidate dataset $\DS$ of any value in column $b$ with the additional margin. Therefore, $\widehat{\EE}_{\bar{a}}$ is released using the Laplace mechanism with global sensitivity $\Delta_{\max}$.

\paragraph{Additional Technical Details}

\subparagraph{Converting binned columns to numeric values.}

Calculating the mean of a binned column is not straightforward. We apply the following heuristic to convert a bin into a single numerical value:

\begin{enumerate}
    \item If the bin contains only a single value (e.g., 1), that value is used (1).
    \item If the bin has defined bounds (e.g., 2-3), the average of the boundary values is used (2.5).
    \item If the bin is unbounded (e.g., >10), the specified edge value is used (10).
\end{enumerate}

\subparagraph{Coarse binning for group-by columns.}

Rather than using the group-by columns as given, we transform them to create coarser bins.  They were chosen to mimic anticipated group-by queries by data users. The bins were created by the stakeholders based on their expertise and the literature. 

\begin{enumerate}
    \item \texttt{mother\_age}: $\leq$24, 25--29, 30--34, 35$\leq$
    \item \texttt{parity}: 1, 2--3, 4$\leq$
    \item \texttt{gestation\_week}: $<$37, 37$\leq$
    \item \texttt{sex}: M, F
    \item \texttt{birth\_weight}: $<$2500, 2500--3999, 4000$\leq$
\end{enumerate}

\subsubsection{Linear regression}
\label{sec:components-acceptance-criteria-regression}

Linear regressions allow us to succinctly describe the relationship between multiple variables by assuming a simple structure. Even though linear regressions may not capture complex dependencies as multi-way marginals do, they can provide deeper insights by quantifying the effect of each feature variable on the target variable. When appropriate, linear regressions have the advantage of accounting for the distance between values of a column (e.g., parity 1 is closer to 2 than to 4), in contrast to multi-way marginal queries. 

Fitting a linear regression model is a typical analysis done on birth data, particularly with \texttt{birth\_weight} as the target variable. Linear regression can be used to serve two purposes: description and prediction \cite{Shmueli2011ToEO}. In descriptive modeling, an analyst aims to summarize data, and in the context of linear regression, this corresponds to finding the model's coefficients. Prediction modeling aims to propose a value for the target variable given the rest of the variables (features) in a new record. We defined two acceptance criteria in accordance with these two purposes.

\paragraph{Preliminaries.}
Let $\DD$ be a dataset of $k + 1$ columns. Without loss of generality, the dataset may or may not have a column of all 1s (corresponding to the constant term in the regression). Let $w_c(\DD) \in \R^{k}$ be the coefficient vector of a linear regression trained on $\DD$ to predict the last column $c$ based on the rest of the columns using standard OLS optimization. If the same linear regression model is trained with $\eps$-differentially private Functional Mechanism \cite{Zhang2012FunctionalMR}, we denote the model coefficient vector as $w_c^\eps(\DD) \in \R^{k}$.

The binned data is transformed into real numbers with the transformation used for the conditional means acceptance criteria (Section~\ref{sec:components-acceptance-criteria-means}). To make the error in the coefficients' entries comparable to each other, they are standardized using the mean and variance calculated from the release-candidate dataset $\DS$, so no privacy loss budget is consumed.

Recall that $[\,x\,]^U_L \defeq \max \{ \min \{ x, U, \}, L \}$ is the clipping function within the boundaries $L < U$. The (clipped) Maximum Average Error (MAE) of a linear regression with a coefficient vector $w$ on a dataset $\DD$ with $L$ and $U$ as the boundaries of the target variable $y \in [L, U]$ is
\[
    \mathrm{MAE}_\DD(w)
    = \frac{1}{n} \sum_{(x, y)\in \DD} \big|y - [w^\intercal \cdot x]^U_L \big|
\]

Together with the primary stakeholders, we chose $c = \texttt{birth\_weight}$ for the target column, i.e., the column $y$ that the linear regression predicts.

\paragraph{Descriptive modeling: coefficient error.}
For descriptive modeling, the acceptance criterion of the \textbf{max coefficient error} of a linear regression has the following error measure

\[
    \EE_{\text{lr-coef}}(\DR, \DS) \defeq \lVert w_c^\eps(\DR) - w_c(\DS) \rVert_1,
\]

with threshold $T_{\text{lr-coef}} = 30$.

We chose the $\ell_1$ distance because it bounds both the maximum and total errors, provided the threshold is a relatively low value.

After producing the released data, we observed that we could save $\eps = .43$ privacy loss budget spent on this acceptance criteria. Refer to Section~\ref{sec:components-improvements-lr} for details.

\paragraph{Predictive modeling: absolute prediction error.}
For predictive modeling, we follow the ``Train on Synthetic, Test on Real'' (TSTR) approach to evaluate the quality of predictions \cite{Esteban2017RealvaluedT}. We assess whether training a linear regression on the synthetic data produces accurate predictions on the \emph{transformed data}. Our baseline is a linear regression trained directly on the transformed data.

The acceptance criterion of the \textbf{absolute prediction error} of a linear regression has the following error measure
\[
    \EE_{\text{lr-mae}}(\DR, \DS) \defeq | \text{MAE}_\DR(w_c^\eps(\DR)) - \text{MAE}_\DR(w_c(\DS)) |
\]

with a threshold of $T_{\text{lr-mae}} = 5$.

To release this metric with differential privacy, we consider the coefficient vector of the linear regression $w_{\DR} \defeq w_c^\eps(\DR)$ as public since it is released in the previous acceptance criterion. Similarly, $w_{\DS} \defeq w_c(\DS)$ is public because it is computed from the release-candidate dataset $\DS$ produced from a differentially private generative model. As shown in the proposition below, the metric $\EE_{\text{lr-mae}}(\DR, \DS)$ has relatively small sensitivity, so it is released using the Laplace mechanism.

\begin{proposition}
    Let $w_{\DR}$ and $w_{\DS}$ be two public linear regression coefficient vectors. Let $L$ and $U$ be the lower and upper bounds of the target variable. The global sensitivity of the error measure \emph{$\EE_{\text{lr-mae}}(\DR, \DS)$} is $2(U-L)/n$.

\begin{proof}
    Let $\DR_1$, $\DR_2$ be neighbor datasets differ in one element, without loss of generality, the last one $(x_{1,n}, y_{1,n}) \neq (x_{2,n}, y_{2,n})$. The global sensitivity is bounded as follows:
    \begin{align*}
        &\Big| \EE_{\text{lr-mae}}(\DR_1, \DS)
        - \EE_{\text{lr-mae}}(\DR_2, \DS) \Big| \\
        &= \Big| \big| MAE_{\DR_1}(w_\DR) - MAE_{\DR_1}(w_\DS) \big|
        - \big| MAE_{\DR_2}(w_\DR) - MAE_{\DR_2}(w_\DS) \big| \Big| \\
        &\leq \Big| MAE_{\DR_1}(w_\DR) - MAE_{\DR_1}(w_\DS)
        - MAE_{\DR_2}(w_\DR) + MAE_{\DR_2}(w_\DS) \Big| \\
        &= \Big| \frac{1}{n} \sum_{i=1}^n \big|y_{1,i} - [w_{\DR}^\intercal \cdot x_{1,i}]^U_L \big| - \frac{1}{n} \sum_{i=1}^n \big|y_{1,i} - [w_{\DS}^\intercal \cdot x_{1,i}]^U_L \big| \\
        &-\frac{1}{n} \sum_{i=1}^n \big|y_{2,i} - [w_{\DR}^\intercal \cdot x_{2,i}]^U_L \big| + \frac{1}{n} \sum_{i=1}^n \big|y_{2,i} - [w_{\DS}^\intercal \cdot x_{2,i}]^U_L \big| \Big|\\
        &= \frac{1}{n} \Big| \big|y_{1,n} - [w_{\DR}^\intercal \cdot x_{1,n}]^U_L \big| - \big|y_{1,n} - [w_{\DS}^\intercal \cdot x_{1,n}]^U_L \big| \\
        &- \big|y_{2,n} - [w_{\DR}^\intercal \cdot x_{2,n}]^U_L \big| + \big|y_{2,n} - [w_{\DS}^\intercal \cdot x_{2,n}]^U_L \big| \Big|\\
        &\leq \frac{1}{n} \Big| \big|y_{1,n} - [w_{\DR}^\intercal \cdot x_{1,n}]^U_L \big| - \big|y_{1,n} - [w_{\DS}^\intercal \cdot x_{1,n}]^U_L   \big| \Big| \\
        &+ \frac{1}{n} \Big| - \big|y_{2,n} - [w_{\DR}^\intercal \cdot x_{2,n}]^U_L \big| + \big|y_{2,n} - [w_{\DS}^\intercal \cdot x_{2,n}]^U_L \big| \Big|\\
        &\leq \frac{1}{n} \Big| y_{1,n} - [w_{\DR}^\intercal \cdot x_{1,n}]^U_L - y_{1,n} + [w_{\DS}^\intercal \cdot x_{1,n}]^U_L \Big| \\
        &+ \frac{1}{n} \Big| - y_{2,n} + [w_{\DR}^\intercal \cdot x_{2,n}]^U_L + y_{2,n} - [w_{\DS}^\intercal \cdot x_{2,n}]^U_L \Big|\\
        &= \frac{1}{n} \Big| [w_{\DR}^\intercal \cdot x_{1,n}]^U_L - [w_{\DS}^\intercal \cdot x_{1,n}]^U_L \Big|
        + \frac{1}{n} \Big| [w_{\DR}^\intercal \cdot x_{2,n}]^U_L - [w_{\DS}^\intercal \cdot x_{2,n}]^U_L \Big|\\
        &\leq 2(U - L)/n.
    \end{align*}

The reverse triangle inequality gives the first and third inequalities. The triangle inequality gives the second inequality. The last inequality is given because each term is between $U$ and $L$.
\end{proof}
\end{proposition}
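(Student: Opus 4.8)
The plan is to follow the standard sensitivity-analysis template. Fix neighboring datasets $\DR_1, \DR_2 \in \XX^n$ that agree on all records except one, say the $n$-th, with $(x_{1,n}, y_{1,n}) \neq (x_{2,n}, y_{2,n})$, and bound $|\EE_{\text{lr-mae}}(\DR_1, \DS) - \EE_{\text{lr-mae}}(\DR_2, \DS)|$. The key structural observation is that in this proposition the coefficient vectors $w_\DR \defeq w_c^\eps(\DR)$ and $w_\DS \defeq w_c(\DS)$ are treated as fixed, public quantities, so we do \emph{not} differentiate the regression fit with respect to $\DR$; only the datasets on which the (clipped) MAE is evaluated change. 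This is precisely what keeps the sensitivity on the order of $1/n$.

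The first step is to strip the two outermost absolute values via the reverse triangle inequality $\big||a| - |b|\big| \le |a - b|$, applied with $a = \mathrm{MAE}_{\DR_1}(w_\DR) - \mathrm{MAE}_{\DR_1}(w_\DS)$ and $b = \mathrm{MAE}_{\DR_2}(w_\DR) - \mathrm{MAE}_{\DR_2}(w_\DS)$. Next I would expand each $\mathrm{MAE}_{\DR_j}(w)$ as the average over $n$ clipped-residual magnitudes $\frac{1}{n}\sum_i \big|y_{j,i} - [w^\intercal x_{j,i}]^U_L\big|$; since $\DR_1$ and $\DR_2$ share their first $n-1$ records and the coefficient vectors are held fixed, all of those paired terms cancel, leaving only the four terms that involve the differing record — two coming from $\DR_1$ and two from $\DR_2$ — each carrying the factor $1/n$.

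The remaining steps are: apply the triangle inequality to split the surviving expression into a $\DR_1$-contribution and a $\DR_2$-contribution; inside each, invoke the reverse triangle inequality once more so that $\big|\,|y - [w_\DR^\intercal x]^U_L| - |y - [w_\DS^\intercal x]^U_L|\,\big| \le \big|[w_\DR^\intercal x]^U_L - [w_\DS^\intercal x]^U_L\big|$ (the identical $y$ term cancels); and finally use that both clipped predictions lie in $[L, U]$, so each such difference is at most $U - L$. Summing the two contributions gives $|\EE_{\text{lr-mae}}(\DR_1, \DS) - \EE_{\text{lr-mae}}(\DR_2, \DS)| \le 2(U-L)/n$, and maximizing over neighboring pairs yields the claimed global sensitivity.

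I do not expect a genuine obstacle here — the argument is elementary — but the write-up must be disciplined, since there are three nested invocations of the (reverse) triangle inequality at different depths and one must be explicit that the telescoping of the shared $n-1$ records is legitimate (it is, because $w_\DR$ and $w_\DS$ do not move). The one conceptual point worth flagging is why it is sound to treat $w_\DR = w_c^\eps(\DR)$ as public even though it depends on $\DR$: it is released under its own privacy budget in the preceding (coefficient-error) acceptance criterion, so for the purposes of this criterion it is post-processing of already-private output and may be conditioned on.
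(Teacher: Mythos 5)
Your proposal is correct and follows essentially the same argument as the paper: reverse triangle inequality on the outer absolute values, cancellation of the $n-1$ shared records (legitimate because $w_\DR$ and $w_\DS$ are held fixed), a triangle-inequality split into the two contributions of the differing record, a second reverse triangle inequality canceling the $y$ terms, and the final bound $U-L$ on each difference of clipped predictions, giving $2(U-L)/n$. Your closing remark on why $w_\DR = w_c^\eps(\DR)$ may be treated as public matches the paper's justification preceding the proposition.
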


\subsubsection{Faithfulness}
\label{sec:components-acceptance-criteria-faithfulness}

Section~\ref{sec:components-faithfulness} offers an overview of the faithfulness acceptance criterion.
Here, we present the technical details related to the threshold and the cost function in this release. We also detail the specifics of the differentially private mechanism used for this criterion.

The faithfulness acceptance criterion has the following error measure
\[
\EE_{\text{ff}}(\DR, \DS) \defeq 1 - \beta_\mathrm{max}(\DR, \DS),
\]

where $\beta_\mathrm{max}$ stands for the maximal-$\beta$-faithfulness of a dataset $\DS$ with respect to a dataset  $\DR$ (see Definition~\ref{def:max-faithfulness}). Recall that $\beta_\mathrm{max}$ can be computed efficiently see (Section~\ref{sec:components-faithfulness}).

The stakeholders set the threshold at $T_{\text{ff}} = 0.05$. In words, no more than 5\% of the records in $\DS$ would remain unmatched to records in $\DR$.

The $\beta_\mathrm{max}$ has low sensitivity, so it is released with the Laplace mechanism.

\begin{proposition}
    The global sensitivity of the faithfulness acceptance criteria $\EE_{\text{ff}}(\DR, \DS) = 1 - \beta_\mathrm{max}(\DR, \DS) = 1 - \max_{\pi} \frac{1}{n} \sum_{i=1}^n \mathbbm{1} [c(s_i, r_{\pi(i)}) \leq 1]$ is $1/n$, where $\DS$ is public.

\begin{proof}
    Let $\DR_1$, $\DR_2$ be neighbor datasets that, without loss of generality, differ in the last element $r_{1,n} \neq r_{2,n}$. Let $\pi_1$ and $\pi_2$ be their matchings that realize the optimal value. Let $\sigma_1 = \inv{\pi_1}$, $\sigma_2 = \inv{\pi_2}$  and assume $|S| = |R_1| = |R_2|$. Without loss of generality, $\beta_\mathrm{max}(\DR_1) \geq \beta_\mathrm{max}(\DR_2)$. Then the global sensitivity is
    
    \begin{align*}
        \Big| \EE_{\text{ff}}(\DR_2) - \EE_{\text{ff}}(\DR_1) \Big|
        &= \Big| \Big(1 - \beta_\mathrm{max}(\DR_2) \Big) - \Big(1 - \beta_\mathrm{max}(\DR_1) \Big) \Big| \\
        & = \Big| \beta_\mathrm{max}(\DR_1) - \beta_\mathrm{max}(\DR_2) \Big| \\
        &= \Big| \frac{1}{n} \sum_{i=1}^n \mathbbm{1}[c(s_i, r_{1, \pi_1(i)}) \leq 1]  - \frac{1}{n} \sum_{i=1}^n  \mathbbm{1}[c(s_i, r_{2, \pi_2(i)}) \leq 1] \Big| \\
        &= \frac{1}{n} \Big| \sum_{j=1}^n \mathbbm{1} [c(s_{\sigma_1(j)}, r_{1,j}) \leq 1] - \sum_{j=1}^n  \mathbbm{1} [c(s_{\sigma_2(j)}, r_{2,j}) \leq 1] \Big| \\
        &\leq \frac{1}{n} \Big| \sum_{j=1}^n \mathbbm{1} [c(s_{\sigma_1(j)}, r_{1,j}) \leq 1] - \sum_{j=1}^n  \mathbbm{1} [c(s_{\sigma_1(j)}, r_{2,j}) \leq 1] \Big| \\
        &= \frac{1}{n} \Big| \sum_{j=1}^n \Big( \mathbbm{1} [c(s_{\sigma_1(j)}, r_{1,j}) \leq 1]  -  \mathbbm{1} [c(s_{\sigma_1(j)}, r_{2,j} \leq 1] \Big) \Big|  \\
        &= \frac{1}{n} \Big|  \mathbbm{1} [c(s_{\sigma_1(n)}, r_{1,n}) \leq 1]  -  \mathbbm{1} [c(s_{\sigma_1(n)}, r_{2,n}) \leq 1] \Big|  \\
        &\leq \frac{1}{n}.
    \end{align*}
    
    The first inequality holds because $\sigma_2$ is the maximizer of $\beta_\mathrm{max}(\DR_2)$, and the second inequality holds because $\mathbbm{1}[\,\cdot\,]$ can evaluate to either 0 or 1.

\end{proof}
\end{proposition}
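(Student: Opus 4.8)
The plan is to reduce the sensitivity of $\EE_{\text{ff}}$ to that of $\beta_\mathrm{max}$ and then bound the latter by transferring an optimal matching between the two neighboring datasets. Since $\EE_{\text{ff}}(\DR,\DS) = 1 - \beta_\mathrm{max}(\DR,\DS)$ differs from $\beta_\mathrm{max}(\cdot,\DS)$ only by the additive constant $1$, for every pair of neighboring datasets $\DR_1,\DR_2$ we have $|\EE_{\text{ff}}(\DR_1,\DS) - \EE_{\text{ff}}(\DR_2,\DS)| = |\beta_\mathrm{max}(\DR_1,\DS) - \beta_\mathrm{max}(\DR_2,\DS)|$, so it suffices to bound the latter by $1/n$. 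Because $\DR_1$ and $\DR_2$ are neighbors, I may assume they agree on every record except one slot, say the last, with $r_{1,n}\neq r_{2,n}$, and that $\DS$ is fixed and public throughout (as assumed in the preliminaries). By symmetry I may also assume $\beta_\mathrm{max}(\DR_1,\DS)\ge\beta_\mathrm{max}(\DR_2,\DS)$.

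Next I would exploit that $\beta_\mathrm{max}$ is a maximum over bijections. Let $\pi_1$ realize the optimum for $\DR_1$ and set $\sigma_1 = \pi_1^{-1}$; reindexing the sum by the record index $j$ of $\DR$ gives $\beta_\mathrm{max}(\DR_1,\DS) = \frac1n\sum_{j=1}^n \mathbbm{1}[c(s_{\sigma_1(j)},r_{1,j})\le 1]$. The key point is that $\pi_1$ (equivalently $\sigma_1$) is still a valid bijection against $\DR_2$ — only the record occupying the $n$-th slot changed, not the index structure of the matching — so $\beta_\mathrm{max}(\DR_2,\DS)\ge \frac1n\sum_{j=1}^n \mathbbm{1}[c(s_{\sigma_1(j)},r_{2,j})\le 1]$. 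Subtracting these two expressions, every term with $j<n$ cancels because $r_{1,j}=r_{2,j}$ there, leaving only the $j=n$ term, which is a difference of two $\{0,1\}$-valued indicators and hence has absolute value at most $1$. Dividing by $n$ yields $\beta_\mathrm{max}(\DR_1,\DS) - \beta_\mathrm{max}(\DR_2,\DS)\le 1/n$, which is what we wanted.

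The only real obstacle is the bookkeeping of the matching under a record swap: one must argue cleanly that an optimal bijection for $\DR_1$ remains a feasible (though not necessarily optimal) bijection for $\DR_2$, and that after passing to the inverse permutation the two indicator sums are compared term-by-term over a common index set so that only the perturbed slot contributes. Everything else is routine triangle-inequality-free algebra. To confirm the global sensitivity equals $1/n$ rather than merely being bounded by it, I would note a trivial tight instance — e.g., a one-coordinate cost in which $r_{1,n}$ lies within cost $1$ of its $\DS$-partner while $r_{2,n}$ does not — for which $\beta_\mathrm{max}$ changes by exactly $1/n$. Consequently the Laplace mechanism with scale $1/(n\eps)$ releases $\EE_{\text{ff}}$ with $\eps$-differential privacy.
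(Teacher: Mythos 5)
Your proposal is correct and follows essentially the same route as the paper: reduce to the sensitivity of $\beta_\mathrm{max}$, reuse the optimal matching for $\DR_1$ as a feasible (suboptimal) matching for $\DR_2$, and observe that after cancellation only the perturbed slot contributes a single indicator difference, giving a bound of $1/n$. Your added tightness example is a small bonus beyond what the paper states, but the core argument is the same.
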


Recall that the definition of maximal-$\beta$-faithfulness is independent of the exact value of the cost function
$c$. It relies solely on whether the cost between two records is less than or equal to one. The criteria under which the cost function between two records does not exceed $1$ were predefined by the stakeholder, drawing on subject-matter expertise. The conditions for a possible record match follow.

\begin{itemize}
    \item The following columns must have identical values between the two records: \texttt{birth\_month}, \texttt{parity}, \texttt{birth\_sex}.
    \item These columns can differ by one bin either up or down, but only for one column: \texttt{mother\_age}\footnote{Based on input from the primary stakeholder, there is a unique exception to this rule. If the value 37, which is clinically meaningful, falls within a bin and not at its edges, that bin must be an exact match. This is only relevant for one binning alternative of the \texttt{mother\_age} column (see Section~\ref{sec:components-configurations}).}, \texttt{gestation\_week}, \texttt{birth\_weight}.
\end{itemize}

\subsection{Face Privacy and Dataset Projection}
\label{sec:components-projection}

To satisfy the expectations regarding a privacy-protected release, face privacy, we introduce the following data projection. Algorithm~\ref{alg:projection} presents the projection for minimal occurrence, i.e., it outputs a dataset in which each record appears at least \texttt{min\_count} times. 

Let $\DD$ be a multiset (dataset) $\DD: \XX \rightarrow \mathbb{Z}_{0+}$, and let $|\DD| \defeq \sum_{x \in \XX} \DD(x)$ be the size of dataset $\DD$. Let $\REC(\DD)$ be the set of all records that appear at least once in $\DD$, i.e., $\REC(\DD) \defeq \{x \in \XX | \DD(x) > 0 \}$. Define also the set $\REC(\DD,\#=k) \defeq \{ x \in \XX: \DD(x) = k \}$ and denote its size as $n_k^\DD \defeq |\REC(\DD,\#=k)|$.

\begin{algorithm}
\caption{Dataset Projection for Minimal Occurrence}\label{alg:projection}
\begin{algorithmic}[1]
    \Require a multiset (dataset) $\DD$, a minimum count of records $m$.
    \For{$k = 1, \ldots, m-1$}
        \State $R_k \leftarrow$ Sample without without replacement $\floor{\frac{k}{m} n^\DD_k}$ records from $\REC(\DD, \#=k)$
        \State $\DD'_{k} \leftarrow \{ (x, m): x \in R_k \}$
    \EndFor
    \State $\DD'_{< m} \leftarrow \cup_{k=1}^{m-1} \DD'_{k}$
    \State $\DD'_{=m} \leftarrow \{(x, \DD(x)): x \in \XX, \DD(x) = m \}$
    \State $\DD'_{> m} \leftarrow \{(x, \DD(x)): x \in \XX, \DD(x) > m \}$
    \State $\DD' \leftarrow \DD'_{< m} \cup \DD'_{=m} \cup \DD_{> m}'$
    \State \Return $\DD'$.
\end{algorithmic}
\end{algorithm}

Algorithm~\ref{alg:projection} produces an output dataset whose records are a subset of the input dataset. All records in the output dataset appear at least $m$ times, and the total count of the records appearing $k \leq m$ times in the input dataset is preserved. We prove these properties in the following proposition.

\begin{proposition}
    \label{prop:projection}
    Let $\DD$ be a multiset (dataset) of size $n$ and $\DD'$ be the multiset corresponding to the output of Algorithm~\ref{alg:projection} when running it on $\DD$ with a minimum count of records $m$ (\texttt{min\_count}). Assume $\forall k \in [m-1]: m \mid (k \cdot n^\DD_k)$. Then the following holds:
    \begin{enumerate}
        \item $\REC(\DD') \subseteq \REC(\DD)$,
        \item For all $k > m$, $\REC(\DD',\# = k) = \REC(\DD,\# = k)$,
        \item $\REC(\DD,\# = m) \subseteq \REC(\DD',\# = m)$,
        \item $\REC(\DD',\# < m) = \varnothing$,
        \item For all $k < m$, $\sum\limits_{x \in \REC(\DD,\#=k)} \DD'(x) = \sum\limits_{x \in \REC(\DD,\#=k)} \DD(x)$,
        \item $|\DD'| = |\DD|$.
    \end{enumerate}

\begin{proof}
    Statement (1) is straightforward; the algorithm does not add a new record $x \in \XX$ that is not already present in $\DD$.

    To show that statements (2), (3) and (4) hold, note that $\DD'$ is a union of three disjoint sets: $\DD'_{>m}$, $\DD'_{=m}$ and $\DD'_{< m}$.\\
    $\DD'_{>m}$ is an exact copy of the records in $\DD$ that appears $m+1$ times or more, so for all $k > m$, $\REC(\DD',\# = k) = \REC(\DD,\# = k)$.
    Similarly, $\DD'_{=m}$ contains a copy of the records in $\DD$ that appears exactly $m$ times, so $\REC(\DD,\# = m) \subseteq \REC(\DD',\# = m)$.
    Finally, each record of $\DD'_{<m}$appears exactly $m$ times in $\DD'$. Putting it all together, there is no record in $\DD'$ that appears less than $m$ times. Hence $\REC(\DD',\# < m) = \varnothing$.

    For $k < m$, $|R_k| = \floor{\frac{k}{m} n^\DD_k} = \frac{k}{m} n^\DD_k$ because we assume that $m | (k \cdot n^\DD_k)$. Therefore, 
        \[|\DD_k'| = m \cdot |R_k| = m \cdot \frac{k}{m} n^\DD_k = k \cdot n^\DD_k.
    \]

    Consequently

    \[
    \sum\limits_{x \in \REC(\DD,\#=k)} \DD'(x) 
        = \sum\limits_{x \in R_t} \DD'(x)
        + \sum\limits_{x \in \REC(\DD,\#=k) \setminus R_t} \DD'(x) 
        = |\DD_k'|
        + \sum\limits_{x \in \REC(\DD,\#=k) \setminus R_t} 0
        = k \cdot n^\DD_k.
    \]

    Then, we derive statement (5) because $\sum\limits_{x \in \REC(\DD,\#=k)} \DD(x) = k \cdot |\REC(\DD,\#=k)| = k \cdot n^\DD_k$.
    
    Finally, statement (6) is given by
    \begin{align*}
        |\DD'|
        &= |\DD'_{>m}| + |\DD'_{= m}| + |\DD'_{<m}| \\
        &= \sum\limits_{x \in \REC(\DD,\#>m)} \DD'(x) 
        + \sum\limits_{x \in \REC(\DD,\#=m)} \DD'(x) 
        + \sum\limits_{x \in \REC(\DD,\#<m)} \DD'(x) \\
        &= \sum\limits_{x \in \REC(\DD,\#>m)} \DD(x) 
        + \sum\limits_{x \in \REC(\DD,\#=m)} \DD(x) 
        + \sum\limits_{x \in \REC(\DD,\#<m)} \DD(x) \\
        &= |\DD|.
    \end{align*}

\end{proof}
\end{proposition}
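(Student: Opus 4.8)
The plan is to read all six statements off the structure of the output: by construction $\DD'$ is the union $\DD'_{<m} \cup \DD'_{=m} \cup \DD'_{>m}$ of three multisets with pairwise disjoint supports, so the multiplicity in $\DD'$ of any record equals the multiplicity contributed by the single piece it belongs to (or $0$). First I would dispose of statement~(1): every record placed into $\DD'$ comes either from copying a record of $\DD$ verbatim (in $\DD'_{=m}$ and $\DD'_{>m}$) or from a set $R_k \subseteq \REC(\DD,\#=k) \subseteq \REC(\DD)$, whence $\REC(\DD') \subseteq \REC(\DD)$.

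Next I would establish (2), (3), (4) by inspecting multiplicities piece by piece. A record $x$ with $\DD(x) = k > m$ is copied into $\DD'_{>m}$ with multiplicity $k$ and appears nowhere else --- it is never a candidate for any $R_j$ (since $j \le m-1 < m \le k$) and it is excluded from $\DD'_{=m}$ --- which gives (2). A record with $\DD(x) = m$ enters $\DD'_{=m}$ with multiplicity $m$, giving the inclusion in (3); this inclusion can be strict, since a record of original multiplicity $k < m$ chosen into $R_k$ also ends up with multiplicity exactly $m$ in $\DD'$. For (4): each record of $\DD'_{<m}$ is assigned multiplicity exactly $m$ by the definition of $\DD'_k$, while records of $\DD'_{=m} \cup \DD'_{>m}$ retain $\DD$-multiplicity $\ge m$, so $\DD'$ contains no record of multiplicity below $m$.

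For (5) I would use the divisibility hypothesis to drop the floor: for $k \in [m-1]$, $|R_k| = \lfloor \tfrac{k}{m} n^\DD_k \rfloor = \tfrac{k}{m} n^\DD_k$ since $m \mid (k \cdot n^\DD_k)$, hence $|\DD'_k| = m|R_k| = k \cdot n^\DD_k$. The crucial observation is that a record $x \in \REC(\DD,\#=k)$ with $k<m$ has $\DD'(x) = m$ if $x \in R_k$ and $\DD'(x) = 0$ otherwise --- it cannot survive into $\DD'_{=m}$ or $\DD'_{>m}$, whose members all have original multiplicity $\ge m > k$ --- so $\sum_{x \in \REC(\DD,\#=k)} \DD'(x) = m|R_k| = k \cdot n^\DD_k$, which equals $\sum_{x \in \REC(\DD,\#=k)} \DD(x) = k \cdot n^\DD_k$. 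Statement (6) then follows by summing over the partition of $\XX$ by original multiplicity: the classes $k > m$ and $k = m$ contribute identically to $|\DD|$ and $|\DD'|$ by (2) and (3), the classes $k < m$ contribute identically by (5), and records of multiplicity $0$ contribute $0$ to both.

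The only point demanding care --- really bookkeeping rather than a genuine obstacle --- is tracking the unselected low-multiplicity records and the disjointness of the three pieces: one must confirm that a record in $\REC(\DD,\#=k)$ with $k<m$ not chosen into $R_k$ truly vanishes from $\DD'$ (it reappears in neither $\DD'_{=m}$ nor $\DD'_{>m}$, which collect only records already of multiplicity $\ge m$), and that the supports of $\DD'_{<m}$, $\DD'_{=m}$, $\DD'_{>m}$ are pairwise disjoint so that ``multiplicities add'' reduces to ``multiplicity equals that of the single contributing piece.'' Once these facts are nailed down, each of the six claims is immediate.
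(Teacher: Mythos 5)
Your proposal is correct and follows essentially the same route as the paper's proof: the same decomposition of $\DD'$ into the three disjoint pieces $\DD'_{<m}$, $\DD'_{=m}$, $\DD'_{>m}$, the same use of the divisibility hypothesis to drop the floor and get $|\DD'_k| = k \cdot n^\DD_k$, and the same count-preservation argument for (5) and (6). Your explicit bookkeeping of disjointness and of unselected low-multiplicity records vanishing is just a slightly more careful spelling-out of what the paper asserts.
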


\begin{remark}
    If $\exists k \in [m-1]$ that $m \nmid (k \cdot n^\DD_k)$, then the size of the output dataset $\DD'$ would be smaller than the size of the input dataset $\DD$. To overcome that, $|\DD| - |\DD'|$ records from $\DD'$ should be duplicated. If $m$ is rather small (e.g., $m \in \{2, 3\}$, as in our release), the impact of this change is negligible.
\end{remark}

\begin{remark}
    In the actual execution of our scheme to produce the released dataset, we used a slightly different version of Algorithm~\ref{alg:projection} of which statement (5) from Proposition~\ref{prop:projection} holds in expectation. Only after producing the released dataset, which consumed the privacy loss budget, we realized that the implemented algorithm could be improved to Algorithm~\ref{alg:projection}. Refer to Section~\ref{sec:components-improvements-projection} for additional information.
\end{remark}

The parameter \texttt{min\_count} is part of the configuration space, and needs to be tuned (see Section~\ref{sec:components-configurations}).

\subsection{Configuration Space}
\label{sec:components-configurations}

A configuration is the complete specification requires for producing a release-candidate dataset from the original dataset. It consists information about the (1) data transformation; (2) generative model family; (3) its hyperparameters; and (4) dataset projections. 
In this work, we performed a search within the Cartesian space spanned by these four dimensions, as detailed in the following.

\subsubsection{Data transformations}

In many use cases, there might be more than one option acceptable by the stakeholders regarding the released dataset schema, i.e., how each data column is represented. For example, a date column could be represented at the resolution of months or weeks, and both could carry value to the users. Another example, should we treat the column \texttt{birth\_weight} as a continuous or categorical variable?

These design choices might yield different quality outcomes; for example, if the privacy loss budget is fixed, finer binning enables finer analysis, but it might also be less accurate because relatively more noise is added due to bins with smaller counts.
By eliciting a space of possible per-column data transformations and representations from the stakeholders, we avoid committing to a single combination before producing the released dataset, allowing for greater flexibility in generating a high-quality dataset.
The data transformations also interact with the model family and its hyperparameters in a way that might not be predictable ahead of time. To the best of our knowledge, this is the first instance of considering multiple alternatives for data transformations as an integral part of a differentially private release.

For each column there is one or more data transformations alternatives.

\begin{table}[H]
\centering
\begin{tabular}{ll}
\toprule
Column                           & Transformation Alternatives                                                          \\
\midrule
\texttt{birth\_month}                     & 1, 2, 3, 4, 5, 6, 7, 8, 9, 10, 11, 12                                                \\
\addlinespace
\multirow{3}{*}{\texttt{mother\_age}}     & \textless{}18, 18-19, 20-24, 25-29, 30-34, 35-39, 40-42, 43-44, 44\textless{}        \\
                                 & \textless{}18, 18-19, 20-24, 25-29, 30-34, 35-36, 37-39, 40-42, 43-44, 44\textless{} \\
                                 & \textless{}18, 18, 19, 20, ..., 42, 43, 44, 44\textless{}                            \\
\addlinespace

\multirow{2}{*}{\texttt{parity}}          & 1, 2-3, 4-6, 7-10, 10\textless{}                                                     \\
                                 & 1, 2, 3, 4, 6, 7, 8, 9, 10, 10\textless{}                                            \\
\addlinespace

\multirow{2}{*}{\texttt{gestation\_week}} & \textless{}29, 29-31, 32-33, 34-36, 37-41, 41\textless{}                             \\
                                 & \textless{}29, 29, 31, 32, ..., 38, 39, 40, 41\textless{}                            \\
\addlinespace
\texttt{birth\_sex}                       & M, F                                                                                 \\
\addlinespace
\texttt{birth\_weight}                    & \textless{}1500, 1500-1599, 1600-1699, …, 4300-4399, 4400-4499, 4499\textless{}      \\ 
\bottomrule
\end{tabular}
\end{table}

\subsubsection{Model family}

A synthetic data mechanism learns some private representation of the data distribution and outputs a generative model. We chose to work with three families of differentially private synthetic data mechanisms: (1) Marginals-based (PrivBayes) \cite{Zhang2014PrivBayesPD}; (2) Query-based (MWEM, PEP) \cite{Hardt2010ASA,Liu2021IterativeMF} and (3) Deep Learning (DPCTGAN, PATECTGAN) \cite{Rosenblatt2020DifferentiallyPS,Xu2019ModelingTD,Jordon2019PATEGAN}. Generally speaking, private marginals-based mechanisms seem to perform well in practice in settings similar to ours \cite{Tao2021BenchmarkingDP,ganev2023understanding}, query-based mechanisms tend to have theoretical utility guarantees, and deep learning approach are more common in non-differentially private settings \cite{Jordon2022SyntheticD,qian2023Synthcity}.

Within each family, the specific mechanisms were included in the configuration space only if they were available in the well-maintained open-source package \emph{SmartNoise} \cite{ms2020smartnoise}. There were two exceptions to this rule: PrivBayes and PEP. PrivBayes is widely regarded as a solid baseline for differentially private synthetic data generation, with its original implementation in C extensively used across various packages \cite{Zhang2014PrivBayesPD}. PEP is a more sophisticated query-based mechanism that outperforms MWEM and shows better empirical results than other query-based mechanisms in a similar setting to ours \cite{Liu2021IterativeMF}. Other potential candidates for a marginal-based mechanism include the MST \cite{McKenna2021WinningTN} or AIM \cite{McKenna2022AIMAA} algorithms, which have demonstrated top performances in recent benchmark analyses \cite{Tao2021BenchmarkingDP,McKenna2022AIMAA,McKenna2021WinningTN}; however, they were only added to SmartNoise at a later stage of this project.

\subsubsection{Hyperparameter space}

\paragraph{PrivBayes.}

PrivBayes \cite{Zhang2014PrivBayesPD} is a differentially private Bayesian Network based mechanism. Some portion of the privacy loss budget is allocated to learning the network structure with a greedy algorithm, and the other portion is devoted to learning the conditional distribution for each node.

The code is taken from the original implementation with a few modifications. We added a new hyperparameter (\texttt{epsilon\_split}) that determined how to divide the privacy loss budget between structure and distribution learning. The original PrivBayes mechanism has a single hyperparameter, \texttt{theta}, that heuristically tunes each node's degree (number of dependencies). We created a second flavor with a hyperparameter, \texttt{degree}, that set the maximum degree directly.

\begin{table}[H]
\centering
\begin{tabular}{lll}
\toprule
                   & Hyperparameters & Possible Values                              \\
\midrule
All models         & \texttt{epsilon\_split}   & 0.1, 0.25, 0.5, 0.7                          \\
\addlinespace
Theta flavor only  & \texttt{theta}           & 2, 4, 8, 16, 20, 25, 30, 35, 40, 50, 60, 100 \\
\addlinespace
Degree flavor only & \texttt{degree}          & 2, 3, 4                                 
\\
\bottomrule
\end{tabular}
\end{table}

\paragraph{MWEM \& PEP.}

MWEM \cite{Hardt2010ASA} and PEP \cite{Liu2021IterativeMF} are query-based mechanisms that learn a representation of the universe of records distribution induced by the transformed data. First, a pool of counting queries is randomly generated (\texttt{num\_query}). Second, for multiple iterations (\texttt{num\_iteration}), the most poorly performing counting query on the representation is selected and evaluated in a differentially private way. Then, this query result updates the representation multiple times (\texttt{num\_inner\_updates}). The code of PEP has an additional hyperparameter, \texttt{marginal}, that sets the order of the marginal queries.

Hyperparameter combinations where $\texttt{num\_query} < \texttt{num\_iterations}$ are excluded.

\begin{table}[H]
\centering
\begin{tabular}{lll}
\toprule
                            & Hyperparameters     & Possible Values       \\
\midrule
\multirow{3}{*}{All models} & \texttt{num\_query}          & 128, 512, 1024 , 4096 \\
                            & \texttt{num\_iterations}     & 100, 500, 1000        \\
                            & \texttt{num\_inner\_updates} & 25, 100               \\
\addlinespace
PEP only                    & \texttt{marginal}            & 2, 3, 4      \\
\bottomrule
\end{tabular}
\end{table}

\paragraph{DPCTGAN \& PATECTGAN.}

Both DPCTGAN and PATECTGAN \cite{Rosenblatt2020DifferentiallyPS,Jordon2019PATEGAN} are variants of the CTGAN model \cite{Xie2018DPGAN}; the former uses DP-SGD for training CTGAN directly, while the latter embeds it within the PATE framework \cite{papernot2017semi,nicolas2018scalable}.  Unless otherwise specified, hyperparameters default to the values in SmartNoise.

The parameter \texttt{epochs} denotes the maximum number of training epochs, applicable even if the privacy loss budget remains unexhausted. \texttt{batch\_size} defines the size of each training batch. Learning rates and weight decays for generator and discriminator networks are configured by \texttt{generator\_lr}, \texttt{discriminator\_lr}, \texttt{generator\_decay}, and \texttt{discriminator\_decay}, respectively.

The hyperparameter \texttt{noise\_multiplier} adjusts the amount of noise injected into the DP-SGD and PATE-GAN algorithms. The \texttt{max\_per\_sample\_grad\_norm} hyperparameter, which is specific to DP-SGD, sets the gradient clipping threshold.

\begin{table}[H]
\centering
\begin{tabular}{lll}
\toprule
                                & Hyperparameters              & Possible Values             \\
\midrule
\multirow{7}{*}{All models}     & \texttt{epochs}                       & 300                         \\
                                & \texttt{batch\_size}                  & 500                         \\
                                & \texttt{generator\_lr}                & 2e-4, 2e-5                  \\
                                & \texttt{discriminator\_lr}            & 2e-4, 2e-5                  \\
                                & \texttt{generator\_decay}             & 1e-6                        \\
                                & \texttt{discriminator\_decay}         & 1e-6                        \\
                                & \texttt{noise\_multiplier}            & 0.001, 0.1, 1, 5            \\
                                & \texttt{discriminator\_decay}         & 1e-6                        \\
                                & \texttt{batch\_size}                  & 500                         \\
                                & \texttt{noise\_multiplier}            & 0.001, 0.1, 1, 5            \\
\addlinespace
\multirow{2}{*}{DPCTGAN only}   & \texttt{loss}                         & \texttt{cross\_entropy}, \texttt{wasserstein} \\
                                & \texttt{max\_per\_sample\_grad\_norm} & 0.1, 1, 5                   \\
\addlinespace
\\
\multirow{2}{*}{PATECTGAN only} & \texttt{loss}                         & \texttt{cross\_entropy}              \\
                                & \texttt{regularization}               & \texttt{none},  \texttt{dragan}
\\
\bottomrule
\end{tabular}
\end{table}

\subsubsection{Data projection space}

The hyperparameter \texttt{min\_count}, which represents the minimal occurrence data projection (Algorithm~\ref{alg:projection}), is also part of the configuration space with two possible values $\{2, 3\}$ set together with our stakeholders.

\subsection{Cleaning and Constraint Filtering}
\label{sec:components-constraints}

Real-world data often suffer from erroneous values for a variety of reasons, and the Israeli National Registry of Live Births is no exception. Errors may be introduced when values are manually entered into the registry, and artifacts may occur during past database migrations or software updates. Additionally, the raw data may contain a few extreme yet realistic records, where such outlier values can hinder model fitting and exaggerate errors in the evaluation.

These issues are also pertinent to synthetic data generation. Even with clean and ``nice'' original data, the trained generative model might still produce biologically implausible samples because it learns ``soft'' statistical relationships.

Therefore, we established a list of record-level constraints that must be satisfied by \emph{both the raw and synthetic data}. Records not adhering to these constraints are filtered out (1) from the raw data before preprocessing and (2) from the synthetic samples. The initial list was created without access to the raw data, based on published reference charts and established data processing practices in the biostatistics community regarding birth data. \emph{We acknowledge that one author and one primary stakeholder expanded the constraint list after examining the raw data exported from the Ministry of Health database}.

\subsubsection{Raw data constraints}

After exporting the singleton live birth data from the Registry database, we removed rare, extreme, erroneous, or implausible records that might hinder the training of the generative models. In total, less than $1.5\%$ of the records were removed.

\begin{enumerate}
    \item Records with missing values in one field or more
    \item Records with \texttt{birth\_weight} smaller than 500 (strict) OR greater than 5500 (strict)
    \item Records with \texttt{gestation\_week} smaller than 22 (strict) OR greater than 44 (strict)
    \item Records with a \texttt{mother\_age} smaller than 23 (strict) AND a \texttt{parity} greater than 6 (strict)
    \item Records with a \texttt{mother\_age} smaller than 20 (strict) AND a \texttt{parity} greater than 3 (strict)
    \item Records with a \texttt{gestation\_week} smaller than 26 (strict) AND a \texttt{birth\_weight} greater than 1499 (strict)
    \item Records with a \texttt{gestation\_week} smaller than 29 (strict) AND a \texttt{birth\_weight} greater than 2999 (strict)
    \item Records with a \texttt{gestation\_week} smaller than 34 (strict) AND a \texttt{birth\_weight} greater than 3999 (strict)
    \item Records with a \texttt{birth\_weight} smaller than 600 (strict) AND a \texttt{gestation\_week} greater than 29 (strict)
    \item Records with a \texttt{birth\_weight} smaller than 700 (strict) AND a \texttt{gestation\_week} greater than 32 (strict)
\end{enumerate}

\subsubsection{Synthetic data constraints}

The following records were removed from the synthetic data because they were rare, extreme, and implausible, which might exaggerate errors in the evaluation (corresponding to raw data constraints 3, 4, 7, 8, and 9).

\begin{enumerate}
    \item Records with a \texttt{mother\_age} smaller than 23 (strict) AND a \texttt{parity} greater than 6 (strict)
    \item Records with a \texttt{mother\_age} smaller than 20 (strict) AND a \texttt{parity} greater than 3 (strict)
    \item Records with a \texttt{gestation\_week} smaller than 29 (strict) AND a \texttt{birth\_weight} greater than 2999 (strict)
    \item Records with a \texttt{gestation\_week} smaller than 34 (strict) AND a \texttt{birth\_weight} greater than 3999 (strict)
\end{enumerate}

\subsection{Private Selection}
\label{sec:components-private-selection}

The universal scheme (Algorithm~\ref{alg:scheme}), animated by the private selection algorithm (Algorithm~\ref{alg:known-threshold}), essentially implements a differentially private version of random search to find a configuration that produces a dataset that meets all acceptance criteria. This approach is valuable since predicting a suitable configuration without access to the private data is challenging. The private selection algorithm facilitates this process while controlling the total privacy loss budget.

In the private selection algorithm, the mechanism $\M$ corresponds to a single iteration of our scheme, where $x$ represents the release-candidate dataset and $q \in \{0, 1\}$ indicates whether this dataset passed all acceptance criteria, while setting the private selection threshold to $\tau = 1$.
Before we present the formal theorem that provides the privacy parameters of the private selection algorithm, we state it informally first. If $\M$ is $\eps$-DP, then Algorithm~\ref{alg:known-threshold} is $2\eps$-DP for $\gamma \in [0, 1]$ and sufficiently large integer $T$.

The theorem also holds for $\gamma = 0$ and $T = \infty$---meaning the algorithm can run indefinitely until achieving good-enough output while consuming only $2\eps$ privacy loss budget in total. We used these values in our universal scheme.

\begin{theorem}[Private selection with a known threshold \cite{Liu2018PrivateSF}]
\label{thm:known-thresholds}
Fix $\eps_1, \delta_1 > 0, \eps_0 \in [0,1], \gamma \in [0, 1]$. Let $T$ be any integer such that $T \geq \max \Big\{ \frac{1}{\gamma} \ln \frac{2}{\eps_0}, 1 + \frac{1}{e\gamma} \Big\}$, and let $p_1 = \Pr_{q \sim Q(D)}[q \geq \tau]$, then Algorithm~\ref{alg:known-threshold} with these parameters satisfies the following:
\begin{enumerate}
    \item Let $A_{\text{out}}(D)$ be the output of Algorithm~\ref{alg:known-threshold} on input $D$. Then there exists a constant $C$ 
 such that for any  $q \geq \tau$ we have
    \[
    \Pr[A_{\text{out}}(D) = (x, q)] = C \cdot \Pr_{(\tilde{x}, \tilde{q}) \sim Q(D)} [(\tilde{x}, \tilde{q}) = (x, q)].
    \]    
    \item If $\M$ is $\eps_1$-DP, then the output is $(2\eps_1 + \eps_0)$-DP.
    
    \item Let $\tilde{T}$ be the number of iterations of the algorithm, then
    \[
    \E \tilde{T} \leq \frac{1}{p_1(1-\gamma)+\gamma}
    \leq \min \Big\{\frac{1}{p_1}, \frac{1}{\gamma} \Big\}.
    \]
    
    \item Furthermore, $\Pr[A_{\text{out}}(D)= \bot] \leq \frac{(1 - p_1)(1 + \eps_0/2)}{p_1}\gamma$.
\end{enumerate}
\end{theorem}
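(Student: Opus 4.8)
\emph{Overall approach.} The theorem rests essentially on part~1: once the exact law of $A_{\text{out}}(D)$ is known, parts~2--4 are bookkeeping combined with two applications of the $\eps_1$-DP of $\M$. So the plan is (i) to compute $\Pr[A_{\text{out}}(D)=(x,q)]$ for $q\ge\tau$ and $\Pr[A_{\text{out}}(D)=\bot]$ by summing one geometric series over the iterations of Algorithm~\ref{alg:known-threshold}, and then (ii) to read off the $2\eps_1$ factor and (iii) to derive the expected-iteration and failure bounds by substitution.

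\emph{Part~1.} Write $\mu_D$ for the law of $(x,q)\sim Q(D)$ and $p_1=p_1(D)=\mu_D(\{q\ge\tau\})$. Algorithm~\ref{alg:known-threshold} reaches iteration~$t$ only if each of the previous $t-1$ iterations produced a sub-threshold candidate and then did not abort, an event of probability $\beta^{t-1}$ with $\beta\defeq(1-\gamma)(1-p_1)$; conditioned on reaching iteration~$t$ it returns a fixed $(x,q)$ with $q\ge\tau$ with probability $\mu_D(x,q)$. Summing over $t\in\{1,\dots,T\}$ gives $\Pr[A_{\text{out}}(D)=(x,q)]=\bigl(\sum_{t=1}^{T}\beta^{t-1}\bigr)\mu_D(x,q)=C(D)\,\mu_D(x,q)$, where $C(D)=\tfrac{1-\beta^{T}}{1-\beta}$ does not depend on $(x,q)$; this proves part~1 with $C=C(D)$. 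Using $1-\beta=p_1(1-\gamma)+\gamma$ and summing over $q\ge\tau$ yields $\Pr[A_{\text{out}}(D)=\bot]=1-C(D)p_1=\beta^{T}+\tfrac{\gamma(1-p_1)}{p_1(1-\gamma)+\gamma}(1-\beta^{T})$, the identity behind part~4. It is convenient to set $C_\infty(D)\defeq\tfrac{1}{1-\beta}=\tfrac{1}{p_1(1-\gamma)+\gamma}$, so $C(D)=(1-\beta^{T})\,C_\infty(D)$.

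\emph{Part~2.} Fix neighbours $D,D'$. For a set $Y\subseteq\{q\ge\tau\}$ we have $\Pr[A_{\text{out}}(D)\in Y]=C(D)\mu_D(Y)$, and we combine two applications of the $\eps_1$-DP of $\M$. First, $\mu_D(Y)\le e^{\eps_1}\mu_{D'}(Y)$ directly. Second, since $p_1(D)=\Pr_{\M(D)}[q\ge\tau]$ is an output event of $\M$ we get $p_1(D')\le e^{\eps_1}p_1(D)$, and hence $\tfrac{C_\infty(D)}{C_\infty(D')}=\tfrac{p_1(D')(1-\gamma)+\gamma}{p_1(D)(1-\gamma)+\gamma}\le e^{\eps_1}$ because $\gamma\le e^{\eps_1}\gamma$. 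The product of these two bounds is exactly where the factor~$2$ originates. The truncation factor supplies the extra $\eps_0$: from $T\ge\tfrac1\gamma\ln\tfrac2{\eps_0}$ one gets $\beta^{T}\le(1-\gamma)^{T}\le e^{-\gamma T}\le\tfrac{\eps_0}{2}$, so $C(D)\le C_\infty(D)$ and $C(D')\ge(1-\tfrac{\eps_0}2)C_\infty(D')\ge e^{-\eps_0}C_\infty(D')$, giving $\Pr[A_{\text{out}}(D)\in Y]\le e^{2\eps_1+\eps_0}\Pr[A_{\text{out}}(D')\in Y]$ on $\{q\ge\tau\}$. By additivity it remains to treat the single atom $\{\bot\}$; using $\Pr[A_{\text{out}}(D)=\bot]=\beta(D)^{T}+C_\infty(D)\gamma(1-p_1(D))(1-\beta(D)^{T})$ and noting that $1-p_1(D)=\Pr_{\M(D)}[q<\tau]$ is again an output event of $\M$, the same two DP inequalities bound the second summand by $e^{2\eps_1+\eps_0}$ times its $D'$-analogue, while the truncation estimate controls the standalone $\beta(D)^{T}$; together this establishes $(2\eps_1+\eps_0)$-DP.

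\emph{Parts~3 and~4, and the main obstacle.} The number of iterations $\tilde T$ is a geometric variable truncated at $T$ whose per-round continuation probability is $\beta=(1-\gamma)(1-p_1)$, so $\E\tilde T=\sum_{t=0}^{T-1}\beta^{t}\le\tfrac1{1-\beta}=\tfrac1{p_1(1-\gamma)+\gamma}$, and the further bounds $\tfrac1{p_1}$ and $\tfrac1\gamma$ follow from $p_1(1-\gamma)+\gamma\ge\max\{p_1,\gamma\}$; this is part~3. Part~4 follows by substituting the closed form of $\Pr[A_{\text{out}}(D)=\bot]$ from part~1: when $p_1<\tfrac{\gamma}{1+\gamma}$ the asserted bound is already at least~$1$ and hence vacuous, and otherwise one writes $\Pr[A_{\text{out}}(D)=\bot]\le\tfrac{\gamma(1-p_1)}{p_1}+C_\infty(D)p_1\beta^{T}$ with $C_\infty(D)p_1\beta^{T}\le(1-\gamma)^{T-1}(1-p_1)^{T}$, and absorbs this truncation term into $\tfrac{\gamma\eps_0(1-p_1)}{2p_1}$ using the two lower bounds on $T$ (the bound $T\ge1+\tfrac1{e\gamma}$ handles the sub-regime in which $(1-p_1)^{T}$ provides no additional decay, via the elementary estimate $\max_{x>0}x(1-\gamma)^{x}\le\tfrac1{e\gamma}$). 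I expect this last step to be the main obstacle: everything preceding it is a routine geometric-series calculation together with the clean ``apply the DP of $\M$ twice'' observation, but forcing the finite-$T$ corrections in parts~2 and~4 to collapse onto exactly the stated constants demands a careful case split tracking how $\beta^{T}$, $p_1$, and $\gamma$ trade off against one another.
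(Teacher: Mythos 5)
The paper itself gives no proof of this theorem---it is quoted from \textcite{Liu2018PrivateSF}---so your argument has to stand on its own. Its skeleton is the right one and matches the standard proof: the law of the output on $\{q\ge\tau\}$ is $C(D)\,\mu_D$ with $C(D)=\frac{1-\beta^T}{1-\beta}$, $\beta=(1-\gamma)(1-p_1)$; the factor $2\eps_1$ comes from applying the DP of $\M$ once to $\mu_D(Y)$ and once to $p_1$ through $C_\infty(D)=\frac{1}{p_1(1-\gamma)+\gamma}$; the $e^{\eps_0}$ comes from the truncation bound $\beta^T\le(1-\gamma)^T\le\eps_0/2$ together with $1-\eps_0/2\ge e^{-\eps_0}$; and parts 1 and 3 are complete as written, as is part 2 restricted to events inside $\{q\ge\tau\}$.

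The genuine gap is the $\bot$ atom in part 2. You decompose $\Pr[A_{\text{out}}(D)=\bot]=\beta_D^T+C_\infty(D)\gamma(1-p_1(D))(1-\beta_D^T)$, bound the second summand by $e^{2\eps_1+\eps_0}$ times its $D'$-analogue (spending the entire slack), and then claim ``the truncation estimate controls the standalone $\beta_D^T$.'' It cannot: with the slack exhausted you would need $\beta_D^T\le e^{2\eps_1+\eps_0}\beta_{D'}^T$, which fails badly since that ratio can be of order $e^{T\eps_1}$ (e.g.\ $\gamma=0.01$, $T\approx 70$, $p_1(D')=0.5$, $p_1(D)=0.5e^{-1}$), and the bound $\beta_D^T\le\eps_0/2$ by itself is useless against $\Pr[A_{\text{out}}(D')=\bot]$, which can be as small as order $\gamma(1-p_1(D'))\ll\eps_0$. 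The fix is to regroup so the truncation term carries a factor $p_1$: writing $\Pr[\bot]=\frac{\gamma(1-p_1)+p_1\beta^T}{p_1(1-\gamma)+\gamma}$, one has $p_1\beta^T\le\big[\max_{z}z(1-z)^{T-1}\big](1-\gamma)^T(1-p_1)\le\frac{1}{e(T-1)}\cdot\frac{\eps_0}{2}\,(1-p_1)\le\gamma\,\frac{\eps_0}{2}(1-p_1)$, using $(1-\frac1T)^T\le e^{-1}$ and $T\ge 1+\frac{1}{e\gamma}$; then the numerator is at most $e^{\eps_1}(1+\eps_0/2)\gamma(1-p_1(D'))\le e^{\eps_1+\eps_0}$ times the $D'$ numerator, and the denominator ratio contributes the second $e^{\eps_1}$. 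This is exactly the quantitative interplay of $T$, $\gamma$, $p_1$ that you correctly flag as ``the main obstacle'' in part 4 but only sketch there and omit from part 2; note also that with this grouping part 4 falls out immediately ($\Pr[\bot]\le\frac{\gamma(1-p_1)(1+\eps_0/2)}{p_1(1-\gamma)+\gamma}\le\frac{(1-p_1)(1+\eps_0/2)}{p_1}\gamma$), with no case split on $p_1$ needed.
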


\begin{algorithm}[tb]
\caption{Private selection with a known threshold algorithm \cite{Liu2018PrivateSF}}\label{alg:known-threshold}
\begin{algorithmic}[1]
\Require a dataset $D$, a differently private algorithm $\M$ that takes a a dataset and returns an output $x$ and a quality score $q \in \R$, a threshold $\tau$, a budget $\gamma \leq 1$ and $\eps_0 \leq 1$, number of steps $T \geq \max \Big\{ \frac{1}{\gamma} \ln \frac{2}{\eps_0}, 1 + \frac{1}{e\gamma} \Big\}$, and sampling access to $Q(D)$.
\For{$j = 1, \ldots, T$}
\State draw $(x, q) \sim \M(D)$
\State if $q \geq \tau$ then output $(x, q)$ and halt;
\State flip a $\gamma$-biased coin: with probability $\gamma$, output $\bot$ and halt;
\EndFor
\State Output $\bot$ and halt.
\end{algorithmic}
\end{algorithm}

\subsection{Software}
\label{sec:components-software}

Releasing official national-level data, even with formal privacy protection, entails significant responsibility. This responsibility is further underscored by known challenges and vulnerabilities associated with implementing differential privacy systems \cite{Mironov2012OnSO,Haney2022PrecisionbasedAA,Stadler2020SyntheticD,Tramr2022DebuggingDP,Jin2021AreWT,Casacuberta2022WidespreadUO,Kifer2020GuidelinesFI,Ganev2023Inadequacy}. In this section, we delineate the overall design of our system, addressing the risks arising from discrepancies between (1) theoretical presentations of our algorithms and (2) their actual code implementation.

\subsubsection{Overall design}

We have developed a Python package called \texttt{synthflow} for this release. Its core functionality is to execute a \emph{flow}, which corresponds to a single iteration of our scheme. Given a configuration, the code transforms the data, trains a generative model, samples records, filters rows based on constraints, applies projection, and performs evaluation according to acceptance criteria. Additionally, it can span the configuration space based on its dimensions and orchestrate the execution of a private selection algorithm. The package also collect the full transcript of all randomness and noise used in the its execution. The logic specific to the implementation details of Israel's National Registry of Live Birth is separate from the flow's general logic.

\subsubsection{Risk management}

The scheme used in this paper fulfil differential privacy with respect to its the theoretical presentation. Implementation of differential privacy algorithms are pruned to bugs that could nullify the theoretical guarantees  \cite{Mironov2012OnSO,Kifer2020GuidelinesFI,Jin2021AreWT,Stadler2020SyntheticD,Haney2022PrecisionbasedAA,Tramr2022DebuggingDP,Casacuberta2022WidespreadUO}.

We did not have the resources to deploy end-to-end formal verification or conduct a complete differential privacy auditing. Given the available resources to us, we aimed to manage and reduce the risk of misalignment between the implementation and the theoretical framework presented in this paper. The following points summarize the measures we took.  

\paragraph{Keep it simple.} As a general principle, we preferred simple code design that is easier to test and maintain.

\paragraph{Applying best practices of software engineering.} Testing, linting, documentation and reviewing were integral to the development workflow. While insufficient to identify potential bugs with differential privacy, these practices are the first line measure for mitigating risks. In particular, we carefully examine each manipulation and query on the original data (e.g., definition of columns' boundaries) and calculations of parameter values.

\paragraph{Using differentially private algorithm from established open source packages; not implementing new algorithms.} Because it requires great effort to implement differential privacy mechanisms correctly with a production quality level, we avoided writing our own implementation. We opt to use existing open source packages that are either well and actively maintained, SmartNoise\footnote{Synthesizers: \url{https://github.com/opendp/smartnoise-sdk}}\footnote{We patched the MWEM implementation in SmartNoise to allow the exponential mechanism to choose the same query; because otherwise, the code got into an infinite loop on the NVSS data)} which is part of OpenDP ecosystem \cite{Gaboardi2020APF} and Diffprivlib\footnote{Laplace mechanism and private linear regression; \url{https://github.com/IBM/differential-privacy-library}} \cite{Holohan2019DiffprivlibTI}, or that are used by other projects and undergone a code review by us (PrivBayes from SDGym\footnote{ \url{https://github.com/sdv-dev/SDGym/tree/c9e274c1c1be7e8fec6fcd1d6f88e95b38a44d14/privbayes}}). \emph{Nonetheless, these packages are susceptible to floating point vulnerabilities\footnote{The Diffprivlib package has the implementation of the Snapping mechanism \cite{Mironov2012OnSO}, but due to dependency issue, it could not work as-is in the OS used in the enclave environment (Windows).}\footnote{At the time of producing the released data, SmartNoise used NumPy to generate the Laplace noise.} and underestimation of the sensitivity vulnerabilities, and consequently, also our release.} This issue is one of the reasons why we decided against releasing the generative model with the full-precision float probabilities, opting instead to make only the final dataset publicly available. We plan to resolve this issue in future releases of the Registry.

\paragraph{Perform only essential patching of code for differential privacy mechanisms.} We patched the PrivBayes code for three purposes.

First, we addressed the bug reported by \textcite{Stadler2020SyntheticD}: we now take the upper and lower bounds of each column as predefined inputs from the configuration, rather than calculating them directly from the original dataset in a non-private manner.

Second, we switched to a CSPRNG (cryptographically secure pseudo-random number generator) as discussed in Section~\ref{sec:components-software-randomness}.

Third, we added functionality to collect the transcript of all random and noise used in fitting the Bayesian network. However, this transcript is neither exported from the enclave nor released, but may be used for future research.

\paragraph{Considering synthetic data and data transformations as as privacy protection fallbacks, but not as privacy guarantees.} 
Our privacy protection in this release is given by differential privacy. 
Synthetic data and transformations are not considered privacy guarantees, but by adopting a defense-in-depth approach, they serve as a secondary layer of protection by limiting the precision and amount of information released.
We stress that the dataset projection for face privacy is not considered a fallback.

\subsubsection{Randomness}
\label{sec:components-software-randomness}

High-quality randomness is essential for the differential privacy guarantee to hold---a requirement it shares with cryptographic applications.
For an in-depth discussion on the role of randomness in differential privacy and a detailed comparison with cryptography, refer to \textcite{Garfinkel2020RandomnessCW}.

To meet this requirement, we ensured that the implementations of differentially private mechanisms use high-quality Cryptographically Secure Pseudo-Random Number Generators (CSPRNG): (1) the PrivBayes algorithm; (2) the Laplace and Functional mechanisms for acceptance criteria; and (3) configuration sampling in the private selection algorithm. It is best practice to use the operating system's CSPRNG. As the release produced in a Windows-based enclave environment (see Section~\ref{sec:components-environments}), the random generator utilized is the Cryptographic Service Provider \cite{ms2021crypto}.

To utilize Windows' CSPRG, we modified the PrivBayes implementation to use \texttt{random\_device} from the \texttt{boost} library as its random generator, instead of the standard C library's insecure generator from \texttt{random.h}. For the acceptance criteria and the private selection algorithm, the implementations make use of the \texttt{secrets} package from Python's standard library.

We leave to future research to explore the properties required from a random generator for the secure deployment of differential privacy.

\subsection{Execution Environments}
\label{sec:components-environments}

The execution of our scheme was conducted in an enclave environment managed by TIMNA, Israel’s National Health Research Platform. Only the final release-candidate dataset with the differentially private results of its acceptance criteria were exported from the environment.

The experiments on the public data (see Section~\ref{sec:intro-public-data}) were run on Boston University Shared Computing Cluster (SCC)\footnote{\url{https://www.bu.edu/tech/support/research/computing-resources/scc}}.

\subsection{Public-facing Documentation}
\label{sec:components-documentation}

In addition to this paper, it's essential to communicate the release details and usage guidelines to various audiences, particularly data subjects and data users. Communication should cater to the diverse interests, needs, and knowledge levels of these groups.

We've created a comprehensive README document (\texttt{pdf} format), developed in collaboration with the stakeholders. This document accompanies the data release (\texttt{csv} format) and is available via the Israeli Government Open Data Portal \cite{moh2024release}.

The README structure draws from Open Government Data best practices \cite{tauberer2014open} and dataset transparency initiatives like Datasheets for Datasets \cite{GebruMVVWDC21}, Dataset Nutrition Label \cite{HollandHNJC18}, and Data Cards \cite{PushkarnaZK22}. It consists of two parts: The first part addresses both data subjects and data users, offering metadata, intended usage, and a request for feedback. The second part focuses on technical details for data users, including data quality, privacy considerations, and production process.

As discussed in Section~\ref{sec:intro-transparency}, while the released dataset maintains the same affordances as the original dataset, it has been specifically designed and validated for predefined statistical queries outlined in the acceptance criteria. To mitigate the risk of inappropriate analysis, we provide clear guidance through the \emph{intended usage} section of the README document.

We plan to produce additional materials for the general public and data subjects in Israel in a more engaging way, such as short videos about the potential utility of releasing this data to the public and the measures taken to protect the privacy of the data subjects.

\subsection{Post-Production Improvements}
\label{sec:components-improvements}

In this section, we discuss potential enhancements to our implementation identified after the released dataset was produced. The composition property of differential privacy means that re-running our scheme would exceed the total privacy loss budget that the stakeholders and we were willing to accept. These enhancements would have minimal or no effect on the privacy guarantee, with one enhancement potentially saving $\eps = .43$ of the privacy loss budget. Thanks to the acceptance criteria, the data quality, as operationalized by the stakeholders, remained unaffected.

\subsubsection{Saving privacy budget for linear regression acceptance criteria}
\label{sec:components-improvements-lr}

In the implementation of the linear regression acceptance criteria, we calculated the linear regression coefficients for the transformed dataset in each iteration of the scheme, even though it is not dependent on the configuration (Section~\ref{sec:components-acceptance-criteria-regression}). We could compute the differentially private linear regression just once before initiating the loop in our scheme. This change would save $\eps = .43$ in the privacy loss budget by circumventing the doubling of the privacy loss budget induced by the private selection algorithm (Theorem~\ref{thm:known-thresholds}). It's important to note that this improvement does not apply to other acceptance criteria since their computations directly involve the transformed and synthetic datasets.

\subsubsection{Wrong global sensitivity calculation for MAE acceptance criterion}
\label{sec:components-improvements-mae}

In the implementation of the Mean Absolute Error (MAE) acceptance criteria, we used a global sensitivity that was half the correct value (Section~\ref{sec:components-acceptance-criteria-regression}). The MAE was perturbed using the Laplace mechanism, where the added noise depends on the ratio between global sensitivity $\Delta$ and the privacy loss budget $\eps$: $\Delta/\eps$. To rectify the incorrect sensitivity value, we doubled the allocated privacy loss budget from $\eps = .02$ to $\eps = .04$. This adjustment would have been made even if the sensitivity had been correctly calculated, since the privacy loss budget was determined based on the standard deviation of the noise, ensuring the accuracy of the evaluation. 

The error has been corrected in the code accompanying with this paper.

\subsubsection{Simpler dataset projection algorithm}
\label{sec:components-improvements-projection}

The algorithm for dataset projection described in Section~\ref{sec:components-projection} is an improved version of the actual algorithm used to produce the release dataset. We realized that it could be improved only after consuming the privacy loss budget for running our scheme on the Live Birth Registry data.

In this section we describe the implemented algorithm (Algorithm~\ref{alg:projection-legacy}) and its properties in Proposition~\ref{prop:projection-legacy}, which is very similar to Proposition~\ref{prop:projection}.

The use of Algorithm~\ref{alg:projection-legacy} instead of Algorithm~\ref{alg:projection} has no impact on privacy as well as the data quality with respect to the acceptance criteria.

\begin{algorithm}
\caption{Dataset Projection for Minimal Occurrence (Implementation Version)}\label{alg:projection-legacy}
\begin{algorithmic}[1]
    \Require a multiset (dataset) $\DD$, a minimum count of records $m$.
    \State $R_0 \leftarrow \varnothing$
    \For{$k = 1, \ldots, m-1$}
        \State $R_k' \leftarrow R_{k-1} \cup \REC(\DD, \#=k)$
        \State $R_k \leftarrow$ Sample without replacement a proportion of $\floor{\frac{k}{k+1}}$ records from $R_k'$
    \EndFor
    \State $\DD'_{< m} \leftarrow \{(x, m): x \in R_{m-1} \}$
    \State $\DD'_{=m} \leftarrow \{(x, \DD(x)): x \in \XX, \DD(x) = m \}$
    \State $\DD'_{> m} \leftarrow \{(x, \DD(x)): x \in \XX, \DD(x) > m \}$
    \State $\DD' \leftarrow \DD'_{< m} \cup \DD'_{=m} \cup \DD_{> m}'$
    \State \Return $\DD'$
\end{algorithmic}
\end{algorithm}

\begin{proposition}
    \label{prop:projection-legacy}
    Let $\DD$ be a multiset (dataset) and $\DD'$ be the multiset corresponding to the output of Algorithm~\ref{alg:projection-legacy} when running it on $\DD$ with a minimum count of records $m$. Assume $\forall k \in [m-1]: \frac{m!}{k!}| n^\DD_k$. Then the following holds:
    \begin{enumerate}
        \item $\REC(\DD') \subseteq \REC(\DD)$,
        \item For all $k > m$, $\REC(\DD',\# = k) = \REC(\DD,\# = k)$,
        \item $\REC(\DD,\# = m) \subseteq \REC(\DD',\# = m)$,
        \item $\REC(\DD',\# < m) = \varnothing$,
        \item For all $k < m$, $\E \Big [\sum\limits_{x \in \REC(\DD,\#=k)} \DD'(x) \Big] = \sum\limits_{x \in \REC(\DD,\#=k)} \DD(x)$
        \item $|\DD'| = |\DD|$,
    \end{enumerate}
\end{proposition}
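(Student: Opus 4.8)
The plan is to follow the structure of the proof of Proposition~\ref{prop:projection}, since Algorithm~\ref{alg:projection-legacy} differs only in \emph{how} the surviving low-multiplicity records are chosen: instead of subsampling each group $\REC(\DD,\#=k)$ on its own, it keeps a running pool $R_k$ and thins it by the factor $\tfrac{k}{k+1}$ at step $k$. Decompose $\DD' = \DD'_{<m}\cup\DD'_{=m}\cup\DD'_{>m}$ into three multisets supported on disjoint record sets — records of $\DD$-multiplicity $<m$, $=m$, and $>m$ — where the last two blocks copy records of $\DD$ verbatim. Statement~(1) is then immediate, since $R_{m-1}\subseteq\bigcup_{k<m}\REC(\DD,\#=k)\subseteq\REC(\DD)$ (an easy induction: $R_{k-1}$ and $\REC(\DD,\#=k)$ are disjoint, so $R_k\subseteq\bigcup_{l\le k}\REC(\DD,\#=l)$). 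Statements~(2) and~(3) hold because $\DD'_{>m}$ and $\DD'_{=m}$ reproduce the corresponding records with their original multiplicities, and no record of $\DD$-multiplicity $\ge m$ occurs in $\DD'_{<m}$. Statement~(4) holds because every record placed into $\DD'_{<m}$ gets multiplicity exactly $m$, so no record of $\DD'$ can have multiplicity below $m$.

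The only genuinely new work is the size bookkeeping for (5) and (6), where the key step is a telescoping identity. Writing $a_k = n^\DD_k$ and $\rho_k = |R_k|$, the disjointness above gives $|R_k'| = \rho_{k-1}+a_k$ and hence $\rho_k = \tfrac{k}{k+1}(\rho_{k-1}+a_k)$ with $\rho_0 = 0$; induction yields
\[
    \rho_k \;=\; \frac{1}{k+1}\sum_{j=1}^{k} j\,a_j .
\]
I would first check this is a nonnegative integer at every step — so that each sampling instruction is well-defined — using the hypothesis $\tfrac{m!}{k!}\mid a_k$: for $j\le k\le m-1$ the hypothesis makes $j\,a_j$ divisible by $j\cdot\tfrac{m!}{j!} = j(j+1)\cdots m$, which contains the factor $k+1$ (because $j<k+1\le m$), so $(k+1)\mid\sum_{j\le k} j\,a_j$. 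Then $|\DD'_{<m}| = m\,\rho_{m-1} = \sum_{j=1}^{m-1} j\,a_j$ is exactly the total count in $\DD$ of records appearing fewer than $m$ times; adding $|\DD'_{=m}| = m\,a_m$ and $|\DD'_{>m}| = \sum_{k>m} k\,a_k$ recovers $\sum_k k\,a_k = |\DD|$, which is (6) (and note this equality is exact, not merely in expectation, since the number of retained records is a fixed fraction).

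For statement~(5) I would track a single record $x$ with $\DD(x)=k<m$. Such an $x$ enters the pool for the first time at step $k$ (it lies in $\REC(\DD,\#=k)$ and in no earlier group) and can only be deleted afterwards, so $x\in R_{m-1}$ — equivalently $\DD'(x)=m$ — precisely if it survives the thinnings at steps $k,k+1,\dots,m-1$. Conditioned on being in the pool at step $i$, a record is retained with probability exactly $\tfrac{i}{i+1}$ by exchangeability of sampling without replacement, irrespective of the rest of the pool; chaining these conditional probabilities gives
\[
    \Pr[\DD'(x)=m] \;=\; \prod_{i=k}^{m-1}\frac{i}{i+1} \;=\; \frac{k}{m},
\]
so $\E[\DD'(x)] = m\cdot\tfrac{k}{m} = k$. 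Summing over the $a_k$ records of $\REC(\DD,\#=k)$ and using linearity of expectation gives $\E\big[\sum_{x\in\REC(\DD,\#=k)}\DD'(x)\big] = k\,a_k = \sum_{x\in\REC(\DD,\#=k)}\DD(x)$, which is~(5). The main obstacle is the two telescoping arguments — in particular, justifying carefully that the per-step survival probability equals $\tfrac{i}{i+1}$ conditionally on the entire sampling history, which is exactly what makes the product collapse to $k/m$; the rest is routine.
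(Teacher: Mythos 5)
Your proposal is correct and takes essentially the same route as the paper: statements (1)--(4) via the three-block decomposition, statement (6) via the inductive identity $|R_k|=\frac{1}{k+1}\sum_{t\le k} t\,n^\DD_t$ (the paper's Lemma~\ref{lmm:projection-legacy}), and statement (5) via the telescoping survival product $\prod_{i=k}^{m-1}\frac{i}{i+1}=\frac{k}{m}$. Your explicit integrality check justifying dropping the floor, and the careful conditioning argument for the per-step survival probability, are just more detailed versions of remarks the paper makes in passing, not a different approach.
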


\begin{proof}
    Statements (1)-(4) holds according to the same arguments as in Proposition~\ref{prop:projection}.

    The proof of statement (6) here is parallel to its proof in Proposition~\ref{prop:projection}. It differs in showing that $|\DD_{<m}'| = \sum\limits_{x \in \REC(\DD,\#<m)} \DD(x)$. Note that $|\DD_{<m}'| = m \cdot |R_{m-1}|$ according its definition. The following lemma helps us to calculate  $|R_{m-1}|$.

\begin{lemma}
\label{lmm:projection-legacy}
    If $\forall k \in [m-1]: \frac{m!}{k!}| n^\DD_k$, then
        \[
        |R_k| = \frac{1}{k+1} \sum_{t = 1}^k t \cdot n_t^\DD.
        \]
\end{lemma}

\begin{proof}
    By induction on $k$. For $k = 1$, $|R_1| = \frac{1}{2}|\REC(\DD, \#=1)| = \frac{1}{2} n^{\DD}_1 $ by the definition of $R_1$ and $n^{\DD}_1$.

    We assume that the claim holds for $k-1$ and prove it for $k$. By definition, $R_k$ is a $\frac{k}{k+1}$ proportion of $R_k' = R_{k-1} \cup \REC(\DD, \#=k)$, a union of two disjoint sets, so
   
    \[
        |R_k| = \frac{k}{k+1}|R_k'| = \frac{k}{k+1} \Big(|R_{k-1}| + |\REC(\DD, \#=k+1)| \Big) = \frac{k}{k+1} |R_{k-1}| + \frac{k}{k+1} n_k^\DD.
    \].

    Plugging in the induction hypothesis, we get the required
    
    \[
        |R_k| = \frac{k}{k+1} \cdot \frac{1}{k} \sum_{t = 1}^{k-1} t \cdot n_t^\DD + \frac{k}{k+1} n_k^\DD
        = \frac{1}{k+1} \Big( \sum_{t = 1}^{k-1} t \cdot n_t^\DD + k \cdot n_k^\DD \Big)
        = \frac{1}{k+1} \sum_{t = 1}^{k} t \cdot n_t^\DD.
    \].

    Note that we can ignore the floor function $\floor{\,\cdot\,}$ in the sampling from $R'_{k}$ thanks to the lemma condition.
   
\end{proof}

    Consequently,
    \[
        |\DD_{<m}'|
        = m \cdot |R_{m-1}| = m \cdot \frac{1}{m} \sum_{k = 1}^{m-1} k \cdot n_k^\DD
        = \sum_{k = 1}^{m-1} k \cdot n_k^\DD
        = \sum\limits_{x \in \REC(\DD,\#<m)} \DD(x).
    \]

    For statement (5) note that for $k < m$, the probability of including an $x \in \REC(\DD, \#=k)$ in $\DD'$ is $\frac{m-1}{m} \cdot \frac{m-2}{m-1} \cdot \ldots \cdot \frac{k+1}{k+2} \cdot \frac{k}{k+1} = \frac{k}{m}$; and if $x$ is included, then $\DD'(x) = m$, otherwise $\DD'(x) = 0$.
    
    Therefore,

    \begin{align*}
        \E \Big [\sum\limits_{x \in \REC(\DD,\#=k)} \DD'(x) \Big]
        &= \sum\limits_{x \in \REC(\DD,\#=k)} \E \Big [\DD'(x) \Big]\\
        &= \sum\limits_{x \in \REC(\DD,\#=k)} \Big( \frac{k}{m} \cdot m + \frac{m-k}{m} \cdot 0 \Big) \\
        &= \sum\limits_{x \in \REC(\DD,\#=k)} k \\
        &= n_k^{\DD} \cdot k
        = \sum\limits_{x \in \REC(\DD,\#=k)} \DD(x).
    \end{align*}

\end{proof}
\section{Related Work}
\label{sec:related-work}

\paragraph{Differentially private synthetic data generation.} \textcite{Tao2021BenchmarkingDP} proposed a categorization of differential privacy synthetic data algorithms: (1) Marginal-based methods, which privately select and measure a collection of marginals before generating records from the induced distribution; (2) Workload-based methods that iteratively and adaptively refine the model based on the error of a set of queries; and (3) GAN-based methods that employ a generative adversarial network (GAN) with differential privacy. Among these, marginal-based algorithms like AIM \cite{McKenna2022AIMAA}, MST \cite{McKenna2021WinningTN}, and PrivBayes \cite{Zhang2014PrivBayesPD} perform best in similar settings like ours according to comprehensive evaluations \cite{Tao2021BenchmarkingDP,ganev2023understanding} and recent NIST competition results \cite{nist18synth,nist20temporal}. \textcite{Liu2021IterativeMF} developed a framework that unifies the presentation of workload-based algorithms, including MWEM \cite{Hardt2010ASA}, DualQuery \cite{Gaboardi2014Dual}, PEP, and RAP \cite{Brown2021RAP}. Various GAN variants exist, either employing DP-SGD \cite{Abadi2016DPSGD} or the PATE framework \cite{nicolas2018scalable} for training non-private GAN models for tabular data \cite{Xie2018DPGAN,Jordon2019PATEGAN,Rosenblatt2020DifferentiallyPS}.

\paragraph{Differential Privacy real-world releases.} \textcite{desfontain2021ListRealworld} offers an extensive list of real-world data releases and systems employing differential privacy. To the best of our knowledge, this list is the most current and well-maintained. Here, we highlight only a few deployments that have significantly influenced or inspired this release. Most differential privacy deployments have been carried out by the US Government or big tech companies. For instance, the US Census has released several data products incorporating differential privacy, notably the 2020 Census Redistricting Data \cite{Abowd2020Topdown}. Other releases include the Post-Secondary Employment Outcomes of college graduates \cite{David2019PSEO} as well as OnTheMap \cite{Machanavajjhala2008Map}, which marked the first real-world application of differential privacy. The Wikimedia Foundation, in collaboration with Tumult Labs, recently published deferentially private statistics detailing daily visits to Wikipedia pages by country \cite{Adeleye2023Wikimedia}.
During the COVID-19 pandemic, Google consistently released differentially private aggregate statistics on changing trends in mobility patterns over time, segmented by geography and across various categories of places \cite{Aktay2020COVID}.
Only a single release in the list of \textcite{desfontain2021ListRealworld} uses synthetic data: the Global Victim-Perpetrator dataset prepared in collaboration between the International Organization for Migration and Microsoft \cite{Migration2022Global}. The dataset provides first-hand information on the relationships between victims and perpetrator in human trafficking.

\paragraph{Communication of Differential Privacy guarantees.}

In recent years, the question of how to effectively communicate differential privacy guarantees to various stakeholders has received significant attention, recognized as an essential factor for successful DP releases. Studies based on surveys and interviews have been conducted to assess the intelligibility and perception of different communication strategies for DP and, consequently, the willingness of users to share information. \textcite{Cummings2021INA} proposed a framework to evaluate how descriptions of DP influence this willingness. \textcite{Franzen2022AmIP} designed quantitative risk notifications based on communication formats from the medical field. \textcite{Xiong2022Illustrations} used illustrations to explain three modes of DP: central, local, and shuffle, and \textcite{Karegar2022Metaphors} examined the effectiveness of metaphors. \textcite{Voigt2024FromTheory} explored how well users understood DP explanations, using an explanation of $k$-anonymity as a baseline. \textcite{Nanayakkara2023WhatAT} focused on explaining the value of the privacy parameter $\eps$.

\section{Discussion}
\label{sec:discussion}

In this work, we documented the development of a differentially private release of the Israeli National Registry of Live Births in 2014. We identified four requirements elicited from our stakeholders: (1) tabular format; (2) data quality; (3) faithfulness; and (3) privacy. We designed a universal scheme that produces a release that satisfies these requirements and enjoys an end-to-end differential privacy guarantee. We have identified the following key directions for future research.

First, we were fortunate to find similar public data to mark which configurations have a higher chance of passing the acceptance criteria. However, not for every dataset a similar similar public data might be available. Note that the public data is not used here to seed the learning algorithm for the generative model. What could be done if we don't have access to public data? We hypothesize that, for the purpose of narrowing down the space of configurations, only partial knowledge about the target distribution (e.g., in the format of statistical facts) might be sufficient.

Second, data quality statements, as expressed via acceptance criteria, are significantly valued by the data users. We anticipate that data users would like to run additional queries beyond what is covered by our criteria. Is it possible to bound the error of other queries given the result of the acceptance criteria we released? How could one choose a list of differentially private acceptance criteria that ``span'' a more extensive set of bounds on other desired queries?

Third, most of our acceptance criteria consider maximum error. We chose this form of criteria because it provides a bound on any other type of error. Perhaps users would be interested in different kinds of bounds, such as average error, confidence intervals, or tail error (like faithfulness). We hope to collect feedback from the public and public health researchers about the structure and content of the current release, and in particular, the desired acceptance criteria from the data users.

Fourth, the meaning of the differential privacy guarantee must be understood within the context of the release. For example, which $\eps$ should a data curator use? How should a data subject understand the protection of $\eps = 9.98$ \cite{Garfinkel2018IssuesED,Wood2018DifferentialPA,Cummings2021INA,Franzen2022AmIP,Nanayakkara2023WhatAT}? In this work, we adopted a heuristic approach for choosing the value of the privacy parameter based on previous releases \cite{desfontain2021ListRealworld}, but a more principled and interpretable method is needed.

Fifth, the released dataset is designed to accurately answer only a subset of statistical queries, such as $k$-way marginals. However, the affordance of the microdata format ``invites'' users to run other types of queries without data quality empirical guarantees. Assuming a tabular format is a non-negotiable requirement, what measures should a curator take to prevent out-of-band analysis?

Sixth, when the government releases national data, issues of trust should be addressed meticulously. For instance, in this release, we incorporated the principles of faithfulness and face privacy to serve this purpose. How can the data curator assure a distrustful stakeholder that the release was produced as claimed? What additional properties should the release possess to promote trust?

\section*{Acknowledgement}
\addcontentsline{toc}{section}{Acknowledgement}

We would like to express our sincere gratitude to Meytal Avgil Tsadok and Roy Cohen from TIMNA (Israel’s National Health Research Platform, Ministry of Health) for initiating this project, and for their vision, trust, courage, flexibility, dedication, and unwavering support, all of which were crucial to its successful completion.
We also extend our thanks to Orly Manor, Ziona Haklai, and Deena Zimmerman for their extremely helpful feedback, and to Barak Shukrun for his support. We are especially grateful to Adam Smith for his invaluable advice, insightful feedback, and keen eye for flaws in our design. We are also grateful for the guidance provided by Andrew Sellars and his team at the BU/MIT Technology Law Clinic. Special thanks to Iden Kalemaj, Gavin Brown, Niva Elkin-Koren, and Salil Vadhan for helpful discussions.

This material is based upon work supported by DARPA under Agreement No. HR00112020021. Any opinions, findings and conclusions or recommendations expressed in this material are those of the author(s) and do not necessarily reflect the views of the United States Government or DARPA.

\printbibliography

\end{document}